\newcommand{\bbrule}[4][]{\inferrule*[left={\bbrulename{#2}},#1]{#3}{#4}}
\newcommand{\bbrulename}[1]{{\sc #1}}
\newcommand{\fnstyle}[1]{\text{\textsmaller{\sc{#1}}}}
\newcommand{\deffn}[2]{%
    \expandafter\newcommand\expandafter{\csname #1\endcsname}[1][]{\fnstyle{#2}\ifthenelse{\isempty{##1}}{}{(##1)}}%
}
\newcommand{\kwd}[1]{\upshape \color{NavyBlue}{#1}}
\newcommand{\symbl}[1]{\upshape \color{RoyalPurple}{#1}}
\newcommand{\gtcolon}{\texttt{\symbl{:}}}
\newcommand{\gtarrow}{\ensuremath{\mathrel{\bgroup\contourlength{0.02em}\contour{black}{$\shortrightarrow$}\egroup}}}
\newcommand{\lbl}{\ell}
\newcommand{\gtin}{\mathrel{\texttt{\kwd{in}}}}
\newcommand{\gtdot}{\texttt{\symbl .}}
\newcommand{\reallab}[1]{{\texttt{\color{ForestGreen}#1}}}
\newcommand{\dbrlab}[1]{{\texttt{\color{RoyalPurple}#1}}}
\newcommand{\dbn}{\fnstyle{n}}
\newcommand{\glab}{\ell\mskip -6.5mu \ell}
\newcommand{\stub}{\circ}
\newcommand{\stackname}[1]{\fnstyle{StackName}\ifthenelse{\isempty{#1}}{}{(#1)}}
\newcommand{\extract}[2]{\fnstyle{Extract}\ifthenelse{\isempty{#1}}{}{(#1,#2)}}
\newcommand{\rawval}[1]{\fnstyle{RawVal}\ifthenelse{\isempty{#1}}{}{(#1)}}
\newcommand{\steps}{\Rightarrow}
\newcommand{\globs}{\mathrel /}
\newcommand{\stk}{\Sigma}
\newcommand{\stks}{\bm{\Sigma}\mskip -11mu\bm{\Sigma}}
\newcommand{\display}{{\textrm{D}}}
\newcommand{\cycmark}{c}
\newcommand{\envt}{E}
\newcommand{\cachef}{F}
\newcommand{\cachel}{L}
\newcommand{\pathc}{\pi}
\newcommand{\sfrags}{S}
\newcommand{\visited}{V}
\newcommand{\extractenvt}{\fnstyle{ExtractE}}
\newcommand{\extractcl}{\fnstyle{ExtractCl}}
\newcommand{\cachetoenvt}{\fnstyle{CacheToEnvt}}
\newcommand{\evalval}{\fnstyle{Eval}}
\newcommand{\tochc}{\fnstyle{toCHC}}
\newcommand{\pathcond}{\fnstyle{Cond}}
\newcommand{\map}{M}
\newcommand{\id}{\fnstyle{Id}}
\newcommand{\la}{\langle}
\newcommand{\ra}{\rangle}
\newcommand{\vdashc}{\vdash_\textrm{c}}
\newcommand{\vdashal}{\vdash_\textrm{d}}
\newcommand{\vdashcc}{\vdash_\textrm{cc}}
\newcommand{\vdashe}{\vdash_\textrm{e}}
\newcommand{\vdasha}{\vdash_\textrm{a}}
\newcommand{\vdashaa}{\vdash_\textrm{aa}}
\newcommand{\vdashea}{\vdash_\textrm{ea}}
\newcommand{\nofl}{\fnstyle{NoFL}}
\newcommand{\lC}{\listConcat}
\newcommand{\lCk}{\mathrel{\mbox{$\lC$}_k}}
\newcommand{\suffixes}{\fnstyle{Suffixes}}
\newcommand{\ruleref}[1]{\textsc{#1}}
\newcommand{\plangbasicstyle}{\ttfamily}
\newcommand{\plangkeywordstyle}{\plangbasicstyle \color{NavyBlue}}
\newcommand{\plangsymbolstyle}{\plangbasicstyle \color{RoyalPurple}}
\newcommand{\plangcommentstyle}{\plangbasicstyle \color{ForestGreen}}
\newcommand{\ttob}{\text{\upshape\ttfamily\char`\{}}
\newcommand{\ttcb}{\text{\upshape\ttfamily\char`\}}}
\newcommand{\ttop}{\text{\upshape\ttfamily(}}
\newcommand{\ttcp}{\text{\upshape\ttfamily)}}
\newcommand{\es}{} 
\newcommand{\plangset}{%
    \lstset{%
        basicstyle=\plangbasicstyle,
        fontadjust=false,
        showspaces=false,
        showtabs=false,
        numberstyle=\tiny\color{gray},
        stepnumber=1,
        numbers=left,
        numbersep=5pt,
        escapeinside={$}{$},
        escapebegin=$,
        escapeend=$,
        keywordstyle=\plangkeywordstyle,
        keywords={fun,if},
        commentstyle=\plangcommentstyle,
        morecomment=[l]{#},
        literate=%
            {\{}{\begingroup    \plangsymbolstyle   \ttob       \endgroup}{1}%
            {\}}{\begingroup    \plangsymbolstyle   \ttcb       \endgroup}{1}%
            {(}{\begingroup     \plangsymbolstyle   \ttop       \endgroup}{1}%
            {)}{\begingroup     \plangsymbolstyle   \ttcp       \endgroup}{1}%
            {=}{\begingroup     \plangsymbolstyle   =\es        \endgroup}{1}%
            {?}{\begingroup     \plangsymbolstyle   ?\es        \endgroup}{1}%
            {:}{\begingroup     \plangsymbolstyle   :\es        \endgroup}{1}%
            {~}{\begingroup     \plangsymbolstyle   \tttilde    \endgroup}{1}%
            {->}{\begingroup    \plangsymbolstyle   ->\es       \endgroup}{2}%
            {@\{}{\begingroup   \plangcommentstyle  \ttob       \endgroup}{1}%
            {@\}}{\begingroup   \plangcommentstyle  \ttcb       \endgroup}{1}%
            {@:}{\begingroup    \plangcommentstyle  :\es        \endgroup}{1}%
            {@=}{\begingroup    \plangcommentstyle  =\es        \endgroup}{1}%
    }%
}
\newcommand{\plangil}[1][]{\plangset{}#1\lstinline[mathescape]}
\newcommand{\camlset}{\lstset{language=caml,numbers=left,numbersep=10pt,frame=single,framerule=0pt,basicstyle=\ttfamily,escapeinside={$}{$},escapebegin=$,escapeend=$,morekeywords={case,input}}}
\newcommand{\camlnonumset}{\lstset{language=caml,numbers=none,xleftmargin=10pt,frame=single,framerule=0pt,basicstyle=\ttfamily,escapeinside={$}{$},escapebegin=$,escapeend=$,morekeywords={case,input}}}
\newcommand{\camlil}[1][]{\camlset{}#1\lstinline[mathescape]}
\def\lst@widthfixed{##1}\def\lst@widthflexible{##2}%
         \let\lst@widthflexible\lst@widthfixed
     \def\lst@temp{\PackageError{Listings}%
                                {Negative value(s) treated as zero}%
                                \@ehc}%
     \let\lst@error\@empty
         \let\lst@error\lst@temp \let\lst@widthfixed\z@
         \let\lst@error\lst@temp \let\lst@widthflexible\z@
\newcommand{\codefigurestart}[1][.4\textwidth]{\hspace*{15pt}\begin{minipage}{#1-20pt}\camlset\footnotesize}
\newcommand{\codefigurestop}{\end{minipage}}
\definecolor{ForestGreen}{rgb}{.132,.545,.132}
\definecolor{Plum}{rgb}{.868,.628,.868}
\definecolor{RoyalPurple}{rgb}{.38,.25,.6}
\definecolor{NavyBlue}{rgb}{0,0,.5}
\definecolor{VioletRed}{rgb}{.816,.125,.565}
\newcommand\eg{\emph{e.g.},\ }
\newcommand\ie{\emph{i.e.},\ }
\long\def\tossit#1{}
\long\def\keepit#1{#1}
\def\fullversion{\long\def\infull{\keepit}\long\def\inshort{\tossit}}
\keywords{Demand-Driven Operational Semantics}
\begin{document}

\title[A Pure Demand Semantics]{A Pure Demand Operational Semantics}\subtitle{with Applications to Program Analysis}
 
\author{Scott Smith} 
\orcid{0009-0005-0495-2716}
\affiliation{%
  \institution{Johns Hopkins University}
  \city{Baltimore}
  \country{USA}
}
\email{scott@cs.jhu.edu}

\author{Robert Zhang}
\orcid{0009-0001-8853-5813}
\affiliation{%
  \institution{Johns Hopkins University}
  \city{Baltimore}
  \country{USA}
}
\email{jzhan239@jhu.edu}




\begin{abstract}
    This paper develops a novel minimal-state operational semantics for higher-order functional languages that uses \emph{only} the call stack and a source program point or a lexical level as the \emph{complete} state information: there is no environment, no substitution, no continuation, etc.  We prove this form of operational semantics equivalent to standard presentations.
    
    We then show how this approach can open the door to potential new applications: we define a program analysis as a direct finitization of this operational semantics.  The program analysis that naturally emerges has a number of novel and interesting properties compared to standard program analyses for higher-order programs: for example, it can infer recurrences and does not need value widening.  We both give a formal definition of the analysis and describe our current implementation.
\end{abstract}

\maketitle

\section{Introduction}
\label{sec_intro}

There are numerous well-known approaches whereby operational semantics can be defined for functional programming languages: substitution-based reduction from the original $\lambda$-calculus \cite{LambdaBar}, substitution-based call-by-name or call-by-value reduction \cite{PlotkinSOS,Plotkincbncbv}, and various machine-based semantics including  SECD, Krivine, CESK \cite{FelleisenCEK}, and G-machines.  In this paper we define a minimal-state presentation of operational semantics that dispenses with substitutions, environments, stores, and continuations, and instead relies \emph{only on the call stack and a source program point or a lexical level} to encapsulate all information about the current state of a computation.

But how can variable values be looked up if the only dynamic state information is the call stack?  In fact, there is enough information in the call stack to trace back in time through the control flow to the original point where a variable was defined.  Because no values are propagated forward, we call it a \emph{pure demand semantics}: all values are looked up on demand. 
Note that pure demand is orthogonal to the laziness of lazy functional languages.  Laziness describes the time during evaluation at which an expression is computed.  On-demand is an orthogonal dimension describing how eagerly a variable-to-expression/value mapping is propagated.  Pure demand semantics can be defined for both call-by-name and call-by-value evaluation.

For an example of the minimality of a pure demand semantics, consider a factorial definition and invocation:

\begin{caml}
                  let rec fact n =
                    (if n = 0 then 1 else (fact (n - 1))$^{\reallab 2}$ * n)$^{\reallab 3}$
                  in (fact 4)$^{\reallab1}$
\end{caml}

The green superscripts $\reallab 1$, $\reallab 2$, and $\reallab 3$ abbreviate the enclosed expressions: $\reallab 1 = $ \camlil!fact 4! for example.  For this program run, an example state of execution in our operational semantics would be simply $[\reallab 2,\reallab 2,\reallab 1] \vdash \reallab 3 \steps \dots$ where $\stk = [\reallab 2,\reallab 2,\reallab 1]$ is the current call stack (a list of call sites in the source program) and $\reallab 3$ is the expression currently being evaluated, invariably a \emph{source} program expression.  That is \emph{all} the state information that is needed for non-variable lookup (for variable lookup, only a lexical level number is needed).  This call stack means the current stack state is an initial call at $\reallab 1$ followed by two recursive calls at $\reallab 2$.  Values for local variables such as \texttt{n} are then looked up by walking back up the stack. Looking up \texttt{n} in the body will be seen to be \texttt{(4 - 1) - 1} by chaining back through the arguments at the call sites at $\reallab 2, \reallab 2, \reallab 1$ in turn.

Accessing non-local variable values in pure demand semantics is more complex than local variable access.  The method we use bears some similarity to access links used in implementations of imperative languages with nested function definitions (but no fully first-class functions) \cite{Fischer05,MitchellCompilers}: non-local variables are accessed by chaining out to the lexical frame in which they were defined.   Some previous demand-driven presentations of operational semantics also dispense with environments and stores \cite{DDPA,DDSE}, but these systems still need additional state information beyond the call stack.  Pure demand semantics have an \emph{extremely} minimal state.

We prove that our pure demand approach is equivalent to a standard environment/closure big-step operational semantics: it is not changing the meaning of computation, only the means.

So, what good is it?  First, it is arguably interesting purely as mathematics: it is a maximal compression of program state information. But we believe it should also make possible potential new constructions; we will present one such construction in this paper.

\paragraph{A Pure Demand Program Analysis}  In the second part of this paper, we will present a specific application to give one detailed example to show how pure demand operational semantics is potentially useful: we define a pure demand program analysis, which is a direct finitization of the pure demand operational semantics \emph{a la} Abstract Interpreters for Free \cite{MightAbstractInterpretersForFree}.  Standard constructions of abstract interpreters for functional programming languages define a means to finitize a standard operational semantics, and so guarantee a finite environment/stack/store/etc. structure.  In our case we \emph{only} need to finitize the call stack, and for that we can use the standard notion of keeping only the $k$-most-recent frames.  We use a novel \emph{stack stitching} heuristic to approximate longer stacks by stitching together overlapping $k$-length fragments to gain more precision.

The analysis that emerges has some interesting and unusual properties.  There is no value widening, and as such there is no need to predefine any finite value domain as is usually the case for program analyses.  The analysis naturally infers constant values, sets of constants, intervals, and even recurrence relations on integer values.  


We will additionally show how a pure demand program analysis can be formally defined \emph{solely} by a proof system, without need for additional pseudocode required in other presentations \cite{AAM} to specify how paths are collected and how repeating states are pruned.  We call our proof system an \emph{all paths} system because it considers all nondeterministic execution path choices in parallel.

We describe our initial implementation of this analysis and show that it performs correctly on common benchmarks for functional program analyses.

\paragraph{Summary of Contributions}
Here we summarize the key contributions of this work.
\begin{enumerate}
    \item The definition of a novel ``pure demand'' form of operational semantics which propagates no variable bindings forward;
    \item A proof of its correctness relative to a standard propagation semantics;
    \item  The definition of a program analysis directly  from the pure demand semantics, which naturally infers general value abstractions including recurrences;
    \item The first known fully rule-based presentation of a higher-order program analysis;
    \item The stack-stitching technique used in the analysis, which gives a new approach to the venerable question of call stack finitization;
    \item An implementation that confirms the soundness and potential usefulness of the ideas.
\end{enumerate}

\paragraph{Outline}  Section \ref{sec_opsem} focuses on the operational semantics and Section \ref{sec_soundness} establishes the (non-trivial) proof of its equivalence to a standard environment-based operational semantics.  Section \ref{sec_analysis} defines the program analysis.  Section \ref{sec_implementation} describes the implementation.  Lastly, we cover related work in Section \ref{sec_related} and conclude in Section \ref{sec_concl}.  All proofs are found in Appendix \infull{\ref{sec_proofs}}\inshort{A of the supplementary material}, and the implementation details are outlined in Appendix \infull{\ref{sec_implement_appendix}}\inshort{B of the supplementary material}.

\section{A Pure Demand Operational Semantics}
\label{sec_opsem}

In this section we define the pure demand operational semantics, first for only the pure lambda-calculus and then with additional features.  We will prove the equivalence of the pure demand and a standard operational semantics in the following section.

We somewhat arbitrarily choose to investigate call-by-value only here to make clear that ``demand'' is not the same as ``lazy''; call-by-name proceeds similarly.

\subsection{The Core Language}
\label{sec_core}
\begin{wrapfigure}{r}{.4\textwidth}
  \begin{grammar}
            \grule[expressions]{\expr}{
              (\expr \ \expr) \gor \eval \gor \ev^\dbn
            }
            \grule[DeBruijn indices]{\dbn}{
                        0\gor 1\gor 2\gor\ldots
            }             
            \grule[variables]{\ev}{
              \textit{(identifiers)}
  }         \grule[variable uses]{\_}{\ev^\dbn    
}            
            \grule[call sites aka frames]{\frm}{
              (\expr \ \expr)
  }
            \grule[values]{\eval}{
            \gtfun\ \ev \gtarrow \expr
            }            
             \grule[results]{\reval}{
              (\gtfun\ \ev \gtarrow \expr)^{\stk} 
  }
            \grule[call stacks]{\stk}{
                        [\frm, \ldots]
            }
        \end{grammar}
    \caption{Language and interpreter grammar}
    \label{fig_Grammar}
\end{wrapfigure}

The language grammar and auxiliary grammar needed for the operational semantics appears in Figure \ref{fig_Grammar}. We will assume some fixed closed expression $\gexpr$.

For variable uses, we use the notation $\ev^\dbn$ that represents a combination of DeBruijn indices and the regular notation: $\ev$ is the variable, and $\dbn$ is a DeBruijn index.  The first set of rules we give here in fact \emph{only} uses the DeBruijn indices $\dbn$ and ignores $\ev$, but soundness is proved by a series of translations through systems without DeBruijn indices so it simplifies to use standard named and scoped variables in addition to the indices.  For expression $\gexpr$ to be well-formed, we require first that for each variable occurrence $\ev^\dbn$ in the program, $\dbn$ is constrained to be the number of enclosing non-$\ev$ $\gtfun$ definitions for this particular occurrence of $\ev$ (\ie it is a DeBruijn index but starting the count at 0 instead of DeBruijn's 1).  
For example, $\gtfun\ \texttt{x} \gtarrow \texttt{x}^0$ and $\gtfun\ \texttt{x} \gtarrow \gtfun\ \texttt{y} \gtarrow (\texttt{x}^1\ \texttt{y}^0)$  are well-formed programs while $\gtfun\ \texttt{x} \gtarrow \texttt{x}^1$ is not.  Secondly, we require that all variable definition occurrences $\gtfun\ \ev \gtarrow\dots$ are unique (any expression repeating a variable binding could be $\alpha$-renamed to satisfy this restriction).  Lastly, all $\gexpr$'s must be closed.

Stacks $\stk$ are lists of frames $\frm$ that are simply call sites in the source program: $\stk = [\frm_1, \ldots, \frm_n]$.  List append is notated $[\frm_1, \ldots,\frm_n] \lC [\frm_{n+1}, \ldots,\frm_{n+n'}] = [\frm_1, \ldots,\frm_{n+n'}]$ and we overload $\frm_0 \lC [\frm_1, \ldots,\frm_n]$ to mean $[\frm_0] \lC [\frm_1, \ldots,\frm_n]$.  We define $\frm \in [\frm_1, \ldots,\frm_n]$ iff $\frm = \frm_i$ for some $1\leq i\leq n$.

Value results in the operational semantics are of the form $\reval = (\gtfun\ \ev \gtarrow \expr)^{\stk}$ with the stack $\stk$ at the control flow point where the value was defined.  $\stk$ can be viewed as a replacement for a closure and is used for looking up non-local variable values, as we will describe below.

\subsubsection{The Rules}

The rules are given in Figure \ref{fig_steps}.  We use a big-step semantics with judgements of the form $\stk \vdash \expr \steps \reval$.  The stack $\stk$ is the call stack at which $\expr$ is evaluated.  Notice there is neither substitution nor any environment involved; variable lookups per the \ruleref{Var} rules are achieved by walking back along the control flow to find the original definition point, and the call stack alone suffices for this.

\begin{definition}
  \label{def_basic_opsem}
   Relation $\stk \vdash \expr \steps \reval$ is defined by the rules of Figure \ref{fig_steps}.  For well-formed top-level programs $\gexpr$, we define $\vdash \gexpr \steps \reval$ iff $[] \vdash \gexpr \steps \reval$.
\end{definition}

Note that there is an invariant in the rules that $\expr$ appearing anywhere in the rules means it is a sub-expression of the original $\gexpr$ (\ie there is no substitution or relabeling), \emph{except} for the special case of variables, which may have a DeBruijn index smaller than the index in the source program.  Note that the total required state information is formally a list of call sites and \emph{either} a call site program point, or a DeBruijn index $\dbn$.  The underlying variable $\ev$ of $\ev^\dbn$ is ignored in the rules as can be seen by a simple induction: for the base case it is ignored in \ruleref{Var Local}, and by induction it is ignored in \ruleref{Var Non-Local}. So, formally only the DeBruijn index is needed to produce a proof.

This theory is tighly bound to a big-step presentation unlike standard environment or substution approaches, which can be elegantly presented in either big-step or small-step form.  Our lookup of non-local variables involves a context switch into the context of the defining function, and while this could in principle be possible in a small-step presentation, it would be convoluted.  Small-step does have some advantages we lose with big-step.  For example, our big-step semantics equates divergence and stuck-ness (runtime type error) as equivalently having no constructible proof tree.  Stuck-ness is expressible in big-step but it requires explicit error bubbling rules and so we have left it out for simplicity.

\begin{figure}
  \begin{mathpar}
    \bbrule{Value}{\ 
   }{\stk \vdash \gtfun\ \ev \gtarrow \expr \steps (\gtfun\ \ev \gtarrow \expr)^{\stk}
   }
    \bbrule{{Var Local}}{
          \stk \vdash \expr_2 \steps \reval
      }{((\expr_1\ \expr_2) \lC \stk) \vdash \ev^0 \steps \reval
      }

      \bbrule{{Var Non-Local}}{
        \stk \vdash \expr_1 \steps (\gtfun\ \ev_1 \gtarrow \expr )^{\stk_1} \\
        \stk_1 \vdash \ev^\dbn \steps \reval
      }{((\expr_1\ \expr_2) \lC \stk) \vdash \ev^{\dbn + 1} \steps \reval 
      }

      \bbrule{Application}{
        \stk \vdash \expr_1 \steps (\gtfun\ \ev \gtarrow \expr)^{\stk_1}\\
        \stk \vdash \expr_2 \steps \reval_2\\
        ((\expr_1\ \expr_2) \lC \stk) \vdash \expr \steps \reval
      }{\stk \vdash (\expr_1\ \expr_2) \steps \reval
      }
  \end{mathpar}
  \caption{Pure demand operational semantics rules}
  \label{fig_steps}
\end{figure}

\subsubsection{On-Demand Variable Lookup}
\label{sec_on_demand_lookup}

The key feature of this semantics is how variable lookup proceeds.  For a local variable $\ev^0$, we know that the top (\ie leftmost) frame of the stack is the call to this function, and thus the parameter, $\expr_2$ in the \ruleref{Var Local} rule, when evaluated is the value of $\ev^0$.  So, this is exactly what the rule does.

The most interesting rule is the \ruleref{Var Non-Local} rule.  Here we observe that the variable occurrence $\ev^{\dbn+1}$ is not a local variable as the DeBruijn index is greater than 0.  To find the value of $\ev^{\dbn+1}$, the philosophy of this rule is to look up the current function (which is the value of $\expr_1$ as the current function is the most recent call) and from this result obtain a stack $\stk_1$, which is the stack in which this function was \emph{defined} (the \ruleref{Value} rule implicitly enforces this invariant). At the point this function was defined, we are one lexical level closer to the function in which $\ev$ is a local, so we then look up $\ev$ \emph{from this new context}, which we achieve by decrementing the DeBruijn index by one: $\ev^\dbn$.  Eventually such a chain up lexical levels will lead us to DeBruijn index 0, and the \ruleref{Var Local} rule may then be used to look up the variable's value.

\subsubsection{An Example}
Figure \ref{fig_steps_eg} contains an example proof tree for $$\gexpr = (((\gtfun\ x \gtarrow (\gtfun\ y \gtarrow x^{\dbrlab 1})^{\reallab y})^{\reallab x}\ 1)^{\reallab {x1}}\ 2)^{\reallab{e}}$$ computing to $1$. We use green superscripts such as $\reallab {y}$ here as abbreviations for the enclosed subexpression: so \eg $\reallab {y} = \gtfun\ y \gtarrow x^{\dbrlab 1}$. The numerical superscripts on variables $x^{\dbrlab 1}$ are the DeBruijn indices; since $\ev$ is defined one lexical level out it has an index of $1$ here. We also use integer constants not actually in this simple language to make the example concise.  Lastly, for simplicity we leave out the middle hypothesis of the \ruleref{Application} rule which evaluates the argument as per the call-by-value calling convention, and throws it away.

The left proof tree here evaluates $\reallab{x1}$, the application of the function $\reallab x$ to $1$.  The return value here is the function $\reallab y$.  Then the right proof tree evaluates the body of this function, $\ev^{\dbrlab {1}}$, and pushes the call site, $\reallab e$, onto the stack since we are now evaluating the body of the function of this call.  $\ev^{\dbrlab{1}}$ is a non-local variable and so the \ruleref{Var Non-Local} rule applies here; the top stack frame $\reallab e$ is $(\reallab{x1}\ 2)$ and so we first look up function $\reallab{x1}$, and obtain result $\reallab y^{[\reallab{x1}]}$, the key being we obtain the stack $[\reallab{x1}]$ which is the calling context where this function was defined.  So, we then take this calling context and make it the current context and look up $\ev$ from this point, but decrementing the DeBruijn index to give $\ev^{\dbrlab 0}$, indicating it is being viewed one level up in scope where the $\reallab y$ function was defined. $\ev^{\dbrlab 0}$ is now a local in this context and the \ruleref{Var Local} rule then applies, grabbing the $1$ argument from the top frame $\reallab{x1}=(\reallab x\ 1)$ on the stack, and returning that as the result.  This simple example captures the essence of non-local variable lookup; for lexically deeper non-local variable references, repeated \ruleref{Var Non-Local} rule invocations will decrement the index by one until 0 and \ruleref{Var Local} is reached.

\begin{figure}\footnotesize
$$
\frac{
  \cfrac{
   \cfrac{}{[] \vdash\reallab x \steps \reallab x^{[]}}\ \ \ 
    \cfrac{}{[\reallab {x1}] \vdash \reallab y \steps \reallab y^{[\reallab {x1}]}}
  }
  {[] \vdash \reallab {x1} \steps \reallab y^{[\reallab {x1}]}}\ \ \ 
  \cfrac{\cfrac{\cfrac{}{[] \vdash\reallab x \steps \reallab x^{[]}}\ \ \ 
  \cfrac{}{[\reallab {x1}] \vdash \reallab y \steps \reallab y^{[\reallab {x1}]}}}{[] \vdash \reallab {x1} \steps \reallab y^{[\reallab {x1}]}}\ \ 
  \cfrac{\cfrac{}{[] \vdash 1\steps 1}}
  {[\reallab {x1}] \vdash x^{\dbrlab {0}} \steps 1}}
  {[\reallab {e}] \vdash x^{\dbrlab 1} \steps 1}
}{[] \vdash \biggl(\biggl(\bigl(\gtfun\ x \gtarrow (\gtfun\ y \gtarrow x^{\dbrlab 1})^{\reallab y}\bigr)^{\reallab x}\ 1\biggr)^{\reallab {x1}}\ 2\biggr)^{\reallab{e}}  \steps 1}
$$
  \caption{An example derivation using the rules of figure \ref{fig_steps}}
  \label{fig_steps_eg}
\end{figure}




\subsection{From Access Links to Displays}
\label{sec_displays}
The incremental chaining up lexical levels found in the \ruleref{Var Non-Local} rule bears some similarity to the access link chains in classic compiler implementations \cite{MitchellCompilers,Fischer05}: both incrementally proceed out lexical levels one by one until the point of definition of a needed variable.  Classic access links are still propagating values forward (and, they only capture languages with nested function definitions and not full higher-order functions); our pure demand theory can be viewed as a more demand and more general form of access links.

There is also a classic optimization to access links where rather than chaining back lexical levels, a table is instead pushed forward which contains the whole chain; this more eager approach is called a \emph{display} \cite{MitchellCompilers,Fischer05}.  In our theory, it is in fact possible to perform an optimization similar to how displays optimize access links in compiler runtimes, by eagerly propagating a table for lexical variable lookup.  We now present a formal theory reflecting this approach.  This theory is not used in the remainder of the paper and is primarily here to make the historical connection.

Note that displays end up being much more complex in the presence of higher-order functions, since displays themselves may contain displays. This complication may explain why displays have apparently only been used in first-order programming languages with lexically-nested function definitions.

In this system we again use the DeBruijn index grammar for variable uses as in Section \ref{sec_opsem}.

\begin{wrapfigure}{r}{.58\textwidth}%
    \begin{grammar}
            \grule[expressions]{\expr}{
              (\expr \ \expr) \gor \eval \gor \ev^\dbn \gor \expr \binop \expr
              \gor \expr \gtquestion \expr \gtcolon \expr
              \gline
              \gor \expr \gtdot \lbl 
              \gor \ob \lbl \gteq \expr; \ldots \cb
              \gor \lbl \gtin \expr
            }
            \grule[DeBruijn indices]{\dbn}{
            0\gor 1\gor 2\gor\ldots
            }   
            \grule[variables]{\ev}{
                        \textit{(identifiers)}
            }  
            \grule[variable uses]{\_}{\ev^\dbn    
            }
            \grule[call sites aka frames]{\frm}{
            (\expr \ \expr)
            }          
            \grule[record labels]{\lbl}{
              \textit{(record labels)}
            }
            \grule[integers]{\eint}{
              0 \gor 1 \gor -1 \gor \dots
            }
            \grule[booleans]{\ebool}{
              \gttrue \gor \gtfalse
            }            
            \grule[values]{\eval}{
              \gtfun\ \ev \gtarrow \expr
              \gor \ob \lbl \gteq \eval; \ldots \cb
              \gor \eint
              \gor \ebool
            }
            \grule[operators]{\binop}{
              \gtplus \gor \gtminus \gor \gtand \gor \gtor \gor \gtxor \gor \gtleq \gor \gtless \gor \gteq
            }
            \grule[result values]{\reval}{
              (\gtfun\ \ev \gtarrow \expr)^{\stk}
              \gor \ob \lbl \gteq \reval; \ldots \cb
              \gor \eint
              \gor \ebool
              \gline
              \gor (\reval \binop \reval)
              \gor \reval \gtdot \lbl 
              \gor \lbl \gtin \reval
            }
            \grule[call stacks]{\stk}{
              [\frm, \ldots]
  }
        \end{grammar}
    \caption{Extended language grammar}
    \label{fig_grammar_extended}
\end{wrapfigure}

We let $\display= [F_0,\ldots,F_n]$ be displays, which are lists of stack frames $F$.
Stack frames are now $F = (\expr_1\ \expr_2)^\display$, each call needs to be able to restore its display when we pop out of the function body.    For example, the leftmost frame $F_0$ of $[F_0,\ldots,F_n]$ is the local context and contains the display used to look up $\ev^0$.  The \ruleref{Var} rule then has immediate access to all of the non-local contexts in $\display$, and can simply select the $F_\dbn$ at the level $\dbn$ of the variable being looked up. 

\begin{figure}
  \begin{mathpar}
    \bbrule{Value}{\ 
   }{\display \vdashal (\gtfun\ \ev \gtarrow \expr) \steps(\gtfun\ \ev \gtarrow \expr)^{\display}
   }

    \bbrule{{Var}}{
      \display = [F_0, \ldots, F_\dbn,\ldots,F_n]\\
      F_\dbn = (\expr_\dbn\ \expr'_\dbn)^{\display_\dbn}\\
      \display_\dbn \vdashal \expr'_\dbn \steps \reval
    }{\display \vdashal \ev^\dbn\steps \reval 
    }

      \bbrule{Application}{
        \display \vdashal \expr_1 \steps (\gtfun\ \ev \gtarrow \expr)^{\display_1}\\
        \display \vdashal \expr_2 \steps \reval_2\\
        [(\expr_1\ \expr_2)^{\display}]\lC \display_1 \vdashal \expr \steps \reval
      }{\display \vdashal (\expr_1\ \expr_2) \steps \reval
      }
  \end{mathpar}
  \caption{Operational semantics with displays}
  \label{fig_steps_access_links}
\end{figure}

\begin{definition}
  $\display \vdashal \expr \steps \reval$, the displays version of $\stk \vdash \expr \steps \reval$, is defined by the rules of Figure \ref{fig_steps_access_links}.
\end{definition}

This system is not as demand-driven as $\vdash$ but it still preserves the property of not propagating forward any actual program values.

\subsection{An Extended Language}
\label{sec_extended_language}

In this section, we show how the pure demand approach easily scales to full functional languages.  Concretely, we flesh out the language of the previous section by adding records, integers, booleans and conditionals.  Language features without variable bindings are relatively easy to add to this theory as they largely mirror standard big-step operational semantics rules.  There is a design choice of how ``demand'' the operators should be, either demand in that \eg \texttt{3 + 2} returns \texttt{3 + 2}, or non-demand in that it returns \texttt{5}. Here we will somewhat arbitrarily present the demand version as we will need to use this form in a program analysis built on this system in Section \ref{sec_analysis}.  We define a separate function $\evalval$ with \eg $\evalval(\texttt{3 + 2})=\texttt{5}$, which pulls on these uncompleted result computations when required (such as when a boolean result is needed for a conditional branch). 

The extended grammar appears in Figure \ref{fig_grammar_extended}.  The source language of expressions $\expr$ is fairly standard but note we also include $\lbl \gtin \expr$ syntax for runtime checking of the presence of a field in a record; with this operation it is possible to easily encode variants as records.  For example $\ob \texttt{hd} \gteq \expr; \texttt{tl} \gteq \expr \cb$ plus $\ob \texttt{nil} \gteq 0 \cb$ can encode lists with $\texttt{nil} \gtin \expr$ encoding the check for $\expr$ being an empty list.

Result values here are extended to include record values, integers and booleans, and with the demand operator approach we are taking, the operators will not (yet) be evaluated.  The only cases where the concrete value of an operation is needed for execution to continue is the boolean value of a conditional guard, to know which conditional branch to take; or, for a function result, which could be inside a record and so will need record projections.  We will define a special function $\evalval$ that reduces the operators in value results $\reval$ to produce actual values $\eval$ when needed.

\begin{definition}[Result Evaluation]
  $\evalval$ is defined by induction via the clauses in Figure \ref{fig_evalval}.  The function is undefined for cases of type mismatch, \eg projecting a number or adding records.
\end{definition}

We may now define the operational semantics for the extended language.

\begin{definition}
  \label{def_extended_opsem}
  Relation $\stk \vdash \expr \steps \reval$ for the extended language is defined by the rules of Figure \ref{fig_steps_extended}.  For top-level programs $\gexpr$ we define $\vdash \gexpr \steps \reval$ iff $[] \vdash \gexpr \steps \reval$.
\end{definition}

\begin{figure}
  \begin{mathpar}
    \bbrule{Value}{\eval \text{ is not a function}
    }{\stk \vdash \eval \steps \eval
    }

    \bbrule{Value Fun}{
   }{\stk \vdash \gtfun\ \ev \gtarrow \expr \steps (\gtfun\ \ev \gtarrow \expr)^{\stk}
   }

    \bbrule{{Var Local}}{
          \stk \vdash \expr_2 \steps \reval
      }{((\expr_1\ \expr_2) \lC \stk) \vdash \ev^0 \steps \reval
      }

      \bbrule{{Var Non-Local}}{
        \stk \vdash \expr_1 \steps \reval_f \\
        \evalval(\reval_f)=(\gtfun\ \ev_1 \gtarrow \expr )^{\stk_1}\\
        \stk_1 \vdash \ev^{\dbn} \steps \reval
      }{((\expr_1\ \expr_2) \lC \stk)  \vdash \ev^{\dbn + 1} \steps \reval 
      }

      \bbrule{Operation}{
        \stk \vdash \expr_1 \steps \reval_1\\
        \stk \vdash \expr_2 \steps \reval_2
      }{\stk \vdash \expr_1 \binop \expr_2 \steps \reval_1 \binop \reval_2
      }

      \bbrule{Record Value}{
        \stk \vdash \expr_1 \steps \reval_1 \ldots \stk \vdash \expr_n \steps \reval_n
      }{\stk \vdash \ob \lbl_1 \gteq \expr_1; \dots; \lbl_n \gteq \expr_n \cb \steps \ob \lbl_1 \gteq \reval_1; \dots; \lbl_n \gteq \reval_n \cb
      }

      \bbrule{Record Project}{
        \stk \vdash \expr \steps \ob \lbl_1 \gteq \reval_1; \dots; \lbl_n \gteq \reval_n \cb\\
        \lbl = \lbl_i
      }{\stk \vdash \expr\gtdot \lbl \steps \ob \lbl_1 \gteq \reval_1; \dots; \lbl_n \gteq \reval_n \cb\gtdot \lbl
      }

      \bbrule{Record Inspect}{
        \stk \vdash \expr \steps \ob \lbl_1 \gteq \reval_1; \dots; \lbl_n \gteq \reval_n \cb
      }{\stk \vdash \lbl \gtin \expr \steps \lbl \gtin \ob \lbl_1 \gteq \reval_1; \dots; \lbl_n \gteq \reval_n \cb
      }

      \bbrule{Conditional}{
        \stk \vdash \expr \steps \reval\\
        \evalval(\reval) = \ebool\\
        \stk \vdash \expr_{\ebool} \steps \reval'
      }{\stk \vdash (\expr \gtquestion \expr_{\gttrue} \gtcolon \expr_{\gtfalse}) \steps \reval'
      }

      \bbrule{Application}{
        \stk \vdash \expr_1 \steps \reval_f\\
        \evalval(\reval_f)=(\gtfun\ \ev \gtarrow \expr)^{\stk'}\\
        \stk \vdash \expr_2 \steps \reval_2\\
        ((\expr_1\ \expr_2) \lC \stk) \vdash \expr \steps \reval
      }{\stk \vdash (\expr_1\ \expr_2) \steps \reval
      }
  \end{mathpar}
  \caption{Operational semantics rules for the extended language}
  \label{fig_steps_extended}
\end{figure}

Inspecting the rules, for the new syntax they are similar to standard big-step rules modulo the on-demand results returned; the function application and variable lookup logic is basically identical to the core language.

\subsubsection{Efficiency}
\label{sec_interp_efficiency}

\begin{wrapfigure}{r}{.4\textwidth}
  $\evalval(n)  =  n$\\
  $\evalval(\ebool)  =  \ebool$\\
  $\evalval((\gtfun\ \ev \gtarrow \expr)^{\stk})  =  (\gtfun\ \ev \gtarrow \expr)$\\
  $\evalval(\reval_1 \binop \reval_2) = \evalval(\reval_1) \binop \evalval(\reval_2)$\\
  $\evalval(\reval \gtdot \lbl_i) = \eval_i \text{ where } $\\
  \hspace*{3em} $\evalval(\reval) = \ob  \dots; \lbl_i \gteq \eval_i; \dots \cb$\\  
  $\evalval(\lbl \gtin \reval) = \gttrue \text{ if } $\\
  \hspace*{3em} $\evalval(\reval) = \ob \dots; \lbl \gteq \eval; \dots \cb $\\
  $\evalval(\lbl \gtin \reval) = \gtfalse \text{ if } \lbl \not \in \evalval(\reval)$\\
  $\evalval(\ob \lbl_1 \gteq \reval_1; \dots; \lbl_n \gteq \reval_n \cb) =$\\
  \hspace*{2em}  $\ob \lbl_1 \gteq \evalval(\reval_1); \dots; \lbl_n \gteq \evalval(\reval_n) \cb $
\caption{The definition of $\evalval(\reval)$}\label{fig_evalval}
\end{wrapfigure}

While we believe this form of interpreter is primarily of interest for its metamathematical properties and not as an actual implementation basis, it is worthwhile to ask how much runtime efficiency has been lost if an interpreter was implemented with this mechanism compared to standard methods.  If the evaluation relation is cached via a map $(\stk,\expr) \mapsto \reval$, and $\evalval$ is also cached as $\reval \mapsto \eval$, there will in fact be no duplicate computations.  The keys of the cache include the stack $\stk$, which is an unbounded list, but hash-consing should allow ``O(1)'' cache lookup.  So, with caching the speed should be on the same order of runtime complexity as a standard interpreter.  Space utilization is another matter however; while it is possible to apply garbage collection heuristics to the otherwise ever-growing cache, there is still a pathological worst case where a deep tail-call function that returns a function as a result will have the whole deep stack annotating the function result.  For example, if a function was returned from a tail call a million calls deep, the stack in the closure of the function result would have one million frames.

We have implemented a pure demand interpreter for this language; the implementation is briefly covered in Section \ref{sec_interp_impl}. 

\section{Soundness}
\label{sec_soundness}
In this section, we establish that the pure demand operational semantics is provably equivalent to a standard big-step environment/closure operational semantics.  The proof is non-trivial and shows there is a large conceptual gap between the two equivalent forms of operational semantics.  We will prove soundness only over the core language of Section \ref{sec_core}, because that is where nearly all the novelty is.

\begin{wrapfigure}{r}{.4\textwidth}%
    \begin{grammar}
            \grule[expressions]{\expr}{
              (\expr \ \expr) \gor \eval \gor \ev
            }
            \grule[variables]{\ev}{
                        \textit{(identifiers)}
            }            
            \grule[values]{\eval}{
            \gtfun\ \ev \gtarrow \expr 
            }            
             \grule[results aka closures]{\reval}{
            \eval^{\envt} 
  }
            \grule[environments]{\envt}{
                        [\ev \mapsto \reval,\dots]
            }
        \end{grammar}
    \caption{Environment semantics grammar}
    \label{fig_envt_grammar}
\end{wrapfigure}

We first will define a standard environment/closure-based operational semantics $\vdashe$, the goal being to show that this standard system is equivalent to the core system $\vdash$ of Definition \ref{def_basic_opsem}.  The two systems differ considerably so we will define two intermediate systems $\vdashc$ and $\vdashcc$, and then show $\mathnormal\vdash \Leftrightarrow \mathnormal\vdashe$ by establishing three lemmas: $\mathnormal\vdash \Leftrightarrow \mathnormal\vdashc$, $\mathnormal\vdashc \Leftrightarrow \mathnormal\vdashcc$, and $\mathnormal\vdashcc \Leftrightarrow \mathnormal\vdashe$.

Complete proofs of all lemmas are given in Appendix \infull{\ref{sec_proofs}}\inshort{A, which is part of the supplemental information}.

\subsection{Environment/Closure-Based System}
\label{sec-envt_system}

We first start off with $\vdashe$, a standard call-by-value environment/closure big step operational semantics.

We let $\envt = [\ev_0 \mapsto \reval_0,\dots,\ev_n \mapsto\reval_n]$ represent an environment mapping variables to their values.  Since the language has no \plangil!let! and only functions, we can without loss of generality assume that the head variable mapped, $\ev_0$, is the (single) local variable, $\ev_1$ is the (single) non-local variable defined in the immediately containing static scope, etc.  Closures are results of the form $\reval = (\gtfun\ \ev \gtarrow \expr)^{\envt}$ where the $\envt$ is of the aforementioned form.  We no longer need the DeBruijn indices on variables so they are removed. The revised grammar is in Figure \ref{fig_envt_grammar}.  Note that we are reusing metavariable names such as $\expr$ across the different systems but it should be clear from context whether we are in $\vdash$, $\vdashe$, etc.

The big-step operational semantics is completely standard. 
\begin{definition}
    Relation $\envt \vdashe \expr \steps \reval$ is defined by the rules of Figure \ref{fig_steps_envt_closure}. $\vdashe \gexpr \steps \reval$ iff $[] \vdashe \gexpr \steps \reval$.
\end{definition}

\begin{figure}
  \begin{mathpar}
    \bbrule{Value}{\ 
   }{\envt \vdashe (\gtfun\ \ev \gtarrow \expr) \steps (\gtfun\ \ev \gtarrow \expr)^{\envt}
   }

   \bbrule{{Var}}{
\  }{[\ev_0 \mapsto \reval_0,\dots,\ev_n \mapsto\reval_n]  \vdashe \ev_i \steps \reval_i
 }


      \bbrule{Application}{
        \envt  \vdashe \expr_1 \steps (\gtfun\ \ev \gtarrow \expr)^{\envt_1}\\
        \envt  \vdashe \expr_2 \steps \reval_2\\
       [\ev \mapsto \reval_2] \lC \envt_1  \vdashe \expr \steps \reval
      }{\envt  \vdashe (\expr_1\ \expr_2) \steps \reval
      }
    \end{mathpar}
  \caption{Operational semantics with environment/closure}
  \label{fig_steps_envt_closure}
\end{figure}

\subsection{Chaining Variable Lookups Directly}

We will now proceed to define two intermediate systems $\vdashc$ and $\vdashcc$ to link $\vdash$ with $\vdashe$.  In this section, we define $\vdashc$ which will make the variable lookup of $\vdash$ a step closer to a standard operational semantics by making a single \ruleref{Var} rule that does all the scope chaining in one rule.

\begin{definition}
  $\stk \vdashc \expr \steps \reval$, the chaining version of $\stk \vdash \expr \steps \reval$, is defined by the rules of Figure \ref{fig_steps_chaining}.
\end{definition}

\begin{figure}
  \begin{mathpar}
    \bbrule{Value}{\ 
   }{\stk \vdashc (\gtfun\ \ev \gtarrow \expr) \steps (\gtfun\ \ev \gtarrow \expr)^{\stk}
   }

    \bbrule{{Var}}{
      n \geq 0\\
      \forall 0 \leq i < n.\ \stk_i \vdashc \expr_i \steps (\gtfun\ \ev_{i} \gtarrow \expr_{i}'' )^{\stk_{i+1}'} \\
      \forall 0 \leq i \leq n.\ \stk_{i}' = ((\expr_{i}\ \expr'_{i}) \lC \stk_{i})\\
      \stk_n \vdashc \expr'_n \steps \reval
    }{\stk_0' \vdashc \ev_n \steps \reval 
    }

      \bbrule{Application}{
        \stk \vdashc \expr_1 \steps (\gtfun\ \ev \gtarrow \expr)^{\stk_1}\\
        \stk \vdashc \expr_2 \steps \reval_2\\
        ((\expr_1\ \expr_2) \lC \stk) \vdashc \expr \steps \reval
      }{\stk \vdashc (\expr_1\ \expr_2) \steps \reval
      }
  \end{mathpar}
  \caption{Operational semantics with chaining variable lookup}
  \label{fig_steps_chaining}
\end{figure}

The grammar of this system is the same as the original in Figure \ref{fig_Grammar}, except the DeBruijn indexes $\dbn$ are no longer needed anywhere so are removed.  This is possible because we are chaining all of the \ruleref{Var Non-Local} rules together directly and the indices were only required to make this connection.

It is clear in any proof in $\vdash$ that any variable lookup starting at the conclusion node must have $n \geq 0$ \ruleref{Var Non-Local} rules in a row followed by a single \ruleref{Var Local} rule.  This holds because the only rules in $\vdash$ for looking up variable values are these two rules, and each use of \ruleref{Var Non-Local} will induce a lookup of the variable at one smaller index and so must eventually end in a \ruleref{Var Local}.

The intuition of the new \ruleref{Var} rule is that it is a ``proof tree macro'' expressing the chain described above in a single rule.  The DeBruijn index $\dbn$ reflects how deeply the variable $\ev$ is lexically nested and is aligned with the $n$ in the rule (but counting down and not up). The $n=0$ case in the rule corresponds to a local variable.  In this case there are no function definitions to chain through, and we can immediately look up $\ev_0$ by looking up $\expr_0'$.  For the case where $n=1$, it means $\ev_1$ is occurring immediately inside a $\gtfun\ \ev_0 \gtarrow\dots$, which occurs immediately inside a $\gtfun\ \ev_1 \gtarrow\dots$.  In this case, the lookup proceeds in the same spirit as the original system: we first look up the function $\expr_0$, which as result has stack $\stk_1'$, then we take the top frame on this stack, $(\expr_1\ \expr_1')$, and $\expr_1$ in turn must evaluate to a function with $\ev_1$ as parameter by the first $\forall$ condition and so the value of $\ev_1$ is then the value of $\expr_1'$, $\reval$.  In general, this pattern repeats for arbitrary levels of lexical nesting.

Before proving the systems equivalent, we characterize the canonical form that variable lookup subtrees must take in $\vdash$.  Figure \ref{fig_variable_lookup_canonical} expresses this canonical form.  In this figure, all rule applications are \ruleref{Var Non-Local} except for the upper right rule, which is \ruleref{Var Local}.  Recall that the proof of this lemma and other assertions below are found in \inshort{the supplementary material}\infull{Appendix \ref{sec_proofs}}; the above intuition constitutes a high-level sketch of the proof.

\begin{restatable}{lemma}{variablelookupcanonical}
  \label{lem_variable_lookup_canonical}
  Any variable lookup subtree in a proof of $\stk \vdash \expr \steps \reval$ must be of the form of Figure \ref{fig_variable_lookup_canonical}.  
\end{restatable}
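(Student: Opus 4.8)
The plan is to prove this by induction on the structure of a derivation of $\stk \vdash \expr \steps \reval$, focusing on the fact that a \emph{variable lookup subtree} --- a subtree whose root is a judgement $\stk' \vdash \ev^{\lab} \steps \reval'$ --- can only be concluded by one of two rules, \ruleref{Var Local} or \ruleref{Var Non-Local}. The argument really hinges on a secondary induction: I would argue by induction on the \emph{height} of the variable lookup subtree (equivalently, on the number of \ruleref{Var Non-Local} applications appearing in a chain along its right spine), since every rule other than these two strictly decreases toward a base case.

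First I would set up the two cases for the root of the variable lookup subtree. If the root is concluded by \ruleref{Var Local}, then the subtree has exactly the shape of the upper-right corner of Figure \ref{fig_variable_lookup_canonical}: the stack has the form $(\expr_1\ \expr_2)^{\lab'} \lC \stk$, we have $\myfun(\lab) = \gtfun\ \ev \gtarrow \expr$, and the sole nontrivial hypothesis is a derivation $\stk \vdash \expr_2 \steps \reval$ which is an \emph{arbitrary} (not necessarily variable-lookup) subtree --- this is the base case $n = 0$. If instead the root is concluded by \ruleref{Var Non-Local}, then the stack has the form $(\expr_1\ \expr_2)^{\lab_2} \lC \stk$, and the rule supplies (i) a derivation $\stk \vdash \expr_1 \steps (\gtfun\ \ev_1 \gtarrow \expr)^{\lab_1,\stk_1}$, which is again an arbitrary subtree, and (ii) a \emph{strictly shorter} variable lookup subtree rooted at $\stk_1 \vdash \ev^{\lab_1} \steps \reval$. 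I would then apply the induction hypothesis to (ii): that subtree is of the canonical form of Figure \ref{fig_variable_lookup_canonical}, i.e.\ a chain of $m \geq 0$ \ruleref{Var Non-Local} steps topped by a \ruleref{Var Local} step. Prepending the current \ruleref{Var Non-Local} application --- and checking that its side conditions ($\ev \neq \ev_1$, $\myfun(\lab) = \gtfun\ \ev_1 \gtarrow \expr$, the renaming of $\ev^{\lab}$ to $\ev^{\lab_1}$, and the stack-shape constraint $\stk_i' = (\expr_i\ \expr_i') \lC \stk_i$) line up with the indexing conventions in the figure --- yields a chain of $m+1$ \ruleref{Var Non-Local} steps topped by \ruleref{Var Local}, which is exactly the canonical form with $n = m+1$.

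To make the ``any variable lookup subtree in a proof of $\stk \vdash \expr \steps \reval$'' quantifier rigorous, I would note that this is really a statement about \emph{every} subderivation of the form $\stk' \vdash \ev^{\lab} \steps \reval'$ occurring anywhere in the ambient proof; but such a subderivation is itself a valid derivation in the system (derivations are closed under taking subtrees), so it suffices to prove the claim for derivations \emph{whose root} is a variable-lookup judgement, which is exactly what the induction above does. The observation already stated in the prose of the excerpt --- that because proof trees are finite, any chain of \ruleref{Var Non-Local} rules must terminate in a \ruleref{Var Local} --- is the informal content we are formalizing, and it is precisely what guarantees the induction bottoms out.

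The main obstacle, such as it is, is not conceptual but bookkeeping: matching the index conventions of Figure \ref{fig_variable_lookup_canonical} to the hypotheses of the two rules. In particular one must check that the label-renaming built into \ruleref{Var Non-Local} ($\ev^{\lab} \rightsquigarrow \ev^{\lab_1}$) composes correctly down the chain, that $\myfun$ applied at each level picks out the function whose parameter is the next variable in the sequence $\ev_0, \dots, \ev_n$ (so the $\ev_i \neq \ev_{i+1}$ conditions and the $\myfun(\lab_i) = \gtfun\ \ev_i \gtarrow \expr_i''$ conditions elided from the figure are all discharged), and that the stacks $\stk_i$ threading through the figure are exactly the $\stk_1$ returned by the \ruleref{Value} rule at each step. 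None of this is deep, but it is the part that must be done carefully; everything else follows immediately from the shape of the rules of Figure \ref{fig_steps}.
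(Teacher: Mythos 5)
Your proposal is correct and takes essentially the same approach as the paper: the paper's proof is a brief direct observation that only the two \ruleref{Var} rules apply to variable expressions, that \ruleref{Var Non-Local} always demands a further variable lookup, and that finiteness of the proof tree forces the chain to end in \ruleref{Var Local}, so the figure is just a description of that chain. Your induction on the length of the chain is simply a more explicit formalization of that same observation, with the same case split and the same bookkeeping about how the rule side conditions match the figure's indexing.
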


\begin{figure}
  \resizebox{4.8in}{!}{
  \begin{mathpar}
    \frac{
      \stk_0 \vdash \expr_0 \steps (\gtfun\ \ev_0 \gtarrow \expr_0'' )^{(\expr_1 \ \expr_1')\lC\stk_1} \ \ \ 
      \cfrac{\stk_1\vdash \expr_1 \steps (\gtfun\ \ev_1 \gtarrow \expr_1'' )^{(\expr_2 \ \expr_2')\lC\stk_2}\ \ \ 
      \cfrac{\cfrac{\cfrac{\stk_n \vdash \expr_n' \steps \reval}{((\expr_n\ \expr_n') \lC \stk_n) \vdash \ev^0 \steps \reval}}{\stk_2\vdash \expr_2 \steps \ldots \ \ \vdots}}{(\expr_2\ \expr_2')\lC \stk_2 \vdash \ev^{\dbn-2} \steps \reval}
      }{(\expr_1\ \expr_1')\lC \stk_1 \vdash \ev^{\dbn-1} \steps \reval}
    }{((\expr_0\ \expr_0') \lC \stk_0) \vdash \ev^{\dbn} \steps \reval 
    }
  \end{mathpar}
  }
  \caption{Canonical variable lookup proof tree structure of $\vdash$}
  \label{fig_variable_lookup_canonical}
\end{figure}

We can now prove the two systems equivalent.
\begin{restatable}{lemma}{lemvc}
  \label{lem_v_c}
$\stk \vdash \expr \steps \reval$ iff $\stk \vdashc \expr \steps \reval$.
\end{restatable}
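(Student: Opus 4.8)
The plan is to prove the biconditional by establishing each direction as a structural induction on the derivation, where the only interesting case is variable lookup. For the forward direction ($\vdash$ implies $\vdashc$), I proceed by induction on the structure of the $\vdash$-derivation. The \ruleref{Value} and \ruleref{Application} cases are immediate: the rules are essentially identical up to the erasure of labels (which Lemma \ref{lem_variable_lookup_canonical} and the surrounding discussion already license), so the induction hypothesis applied to the subderivations yields exactly what the corresponding $\vdashc$ rule needs. The substance is the variable case. By Lemma \ref{lem_variable_lookup_canonical}, any variable-lookup subtree in the $\vdash$-proof has the canonical shape of Figure \ref{fig_variable_lookup_canonical}: a run of $n \geq 0$ applications of \ruleref{Var Non-Local} capped by a single \ruleref{Var Local}. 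I read off from that figure the data $\stk_0, \ldots, \stk_n$, the expressions $\expr_0, \ldots, \expr_n$, $\expr_0', \ldots, \expr_n'$, $\expr_0'', \ldots, \expr_n''$, and the variables $\ev_0, \ldots, \ev_n$, together with the side conditions $\ev_i \neq \ev_{i+1}$ and $\myfun(\lab_i) = \gtfun\ \ev_i \gtarrow \expr_i''$ that the canonical form guarantees. The subderivations $\stk_i \vdash \expr_i \steps (\gtfun\ \ev_i \gtarrow \expr_i'')^{\lab_{i+1}, (\expr_{i+1}\,\expr_{i+1}')\lC\stk_{i+1}}$ for $0 \leq i < n$, the top derivation $\stk_n \vdash \expr_n \steps (\gtfun\ \ev_n \gtarrow \expr_n'')^{\lab_{n+1}, \ldots}$ feeding the \ruleref{Var Local} at the apex, and the rightmost derivation $\stk_n \vdash \expr_n' \steps \reval$ are all proper subderivations, so the induction hypothesis converts each into its $\vdashc$ counterpart (label-erased). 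These are precisely the hypotheses of the single \ruleref{Var} rule of $\vdashc$, with the stack identifications $\stk_i' = (\expr_i\,\expr_i')\lC\stk_i$ matching the rule's fourth premise; assembling them gives $\stk_0' \vdashc \ev_n \steps \reval$, i.e.\ $\stk \vdashc \expr \steps \reval$ for the original conclusion.

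For the reverse direction ($\vdashc$ implies $\vdash$), I again induct on the $\vdashc$-derivation, and again only the \ruleref{Var} case is non-trivial. Given an instance of the $\vdashc$ \ruleref{Var} rule with parameter $n$ and the associated $\stk_i, \expr_i, \expr_i', \expr_i'', \ev_i$, I unfold it into a chain of $\vdash$ rules: I build, from the top down, the \ruleref{Var Local} node $((\expr_n\,\expr_n') \lC \stk_n) \vdash \ev_n^{\lab} \steps \reval$ — which is legitimate because $\myfun(\lab) = \gtfun\ \ev_n \gtarrow \expr_n''$ can be arranged (this is where labels must be reintroduced, see the obstacle paragraph) — and then for $i$ from $n-1$ down to $0$ I wrap it in a \ruleref{Var Non-Local} step using the $\vdashc$ premise $\stk_i \vdashc \expr_i \steps (\gtfun\ \ev_i \gtarrow \expr_i'')^{\stk_{i+1}'}$, converted to $\vdash$ by the induction hypothesis, together with $\ev_i \neq \ev_{i+1}$. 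The result is exactly a canonical variable-lookup tree of the form in Figure \ref{fig_variable_lookup_canonical} concluding $\stk_0' \vdash \ev_n^{\lab_0} \steps \reval$. The \ruleref{Value} and \ruleref{Application} cases transfer directly by the induction hypothesis applied to the subderivations.

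I expect the main obstacle to be the bookkeeping around \emph{labels}, which $\vdashc$ has erased but $\vdash$ requires. In the forward direction this is harmless — we are dropping information — but in the reverse direction I must show that a label-free $\vdashc$ derivation can always be decorated with labels consistent with a fixed $\gexpr$: each $\expr_i$ in the chain is a subexpression of $\gexpr$ occurring as the operator of some application, so it carries a label $\lab_i$ in $\gexpr$, the variable occurrence $\ev_n$ carries some label $\lab$ with $\myfun(\lab)$ the innermost enclosing binder, and the \ruleref{Var Non-Local} relabelings $\ev^{\lab} \rightsquigarrow \ev^{\lab_1}$ walk this label up the lexical nesting in exactly the way the chain demands. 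Making this precise requires an auxiliary observation — essentially that the positional indices $0, 1, \ldots, n$ of the $\vdashc$ \ruleref{Var} rule correspond to successive lexical levels, and that the label arithmetic in \ruleref{Var Non-Local} realizes the same traversal — which is the content already sketched informally in the paragraph preceding Lemma \ref{lem_variable_lookup_canonical}. A secondary, more mechanical point is the $n = 0$ base case of the \ruleref{Var} rule (a local variable): here the chain of \ruleref{Var Non-Local} is empty and one checks directly that the single premise $\stk_0 \vdashc \expr_0 \steps \ldots$ followed by $\stk_0 \vdashc \expr_0' \steps \reval$ matches the \ruleref{Var Local} rule after the induction hypothesis, with no relabeling needed. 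Once the label-decoration lemma is in hand, both directions are a routine unfold/fold of the variable-lookup chain against the canonical form of Lemma \ref{lem_variable_lookup_canonical}.
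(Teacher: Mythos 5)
Your proposal is correct and follows essentially the same route as the paper: both directions hinge on Lemma \ref{lem_variable_lookup_canonical}, reading the single \ruleref{Var} rule of $\vdashc$ as a ``macro'' for the canonical chain of \ruleref{Var Non-Local} applications capped by \ruleref{Var Local}, with the \ruleref{Value} and \ruleref{Application} cases transferring directly. You are in fact more explicit than the paper about the one delicate point --- reconstructing labels in the reverse direction --- which the paper dispenses with by noting that the $\myfun(\lab_i)$ constraints add nothing beyond the pairwise distinctness $\ev_i \neq \ev_{i+1}$.
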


\subsection{Caching Lookups}

For the next step, we are going to define a set of rules that caches all of the function and argument lookups performed in the \ruleref{Var} rule of the previous system, producing the $\vdashcc$ system.  With this change, the proof tree should be structurally isomorphic to a standard environment/closure semantics proof tree for the same initial expression.  Note this is not a global cache, we only cache lookups of lexically-enclosing functions (in $\cachef$) and arguments (in $\cachel$); these lookups are the ones ``cached'' in an environment/closure semantics and thus are the ones we need to get proof alignment.

Concretely, we add a cache $\cachef = [\la \stk_0,\cachef_0,\cachel_0,\expr_0\ra \mapsto \reval_0, \dots]$ to each proof node to cache function lookups, and a cache $\cachel = [\la \stk,\cachef,\cachel,\expr\ra \mapsto (\reval,\ev)]$ to cache the local argument lookup, and the judgement form in this new cached chaining system will be $\stk, \cachef, \cachel \vdashcc \expr \steps \reval$.  In order to get a strong enough induction invariant for direct proofs, the caches also need to cache previous cache states; in the above form, it can be seen that we are precisely capturing other judgements in the cache.  In particular, we will need an invariant that lookups in a cache are all $\vdashcc$-provable \emph{in the cache that we saved in the cache}: for each $\la \stk_0,\cachef_0,\cachel_0,\expr_0\ra \mapsto \reval_0 \in \cachef$, we have $\stk_0,\cachef_0,\cachel_0 \vdashcc \expr_0 \steps \reval_0$.

We do not need to cache non-local arguments because we will include these local argument caches in our function closures, so $\cachel$ will either be a singleton list or an empty list (the latter for when we are at the top of the program).  $\cachel$ also needs to remember the function parameter $\ev$ along with the result $\reval$.  We additionally need to add $\cachef$ and $\cachel$ to the closures so they are now of the form $(\gtfun\ \ev \gtarrow \expr)^{\stk,\cachef,\cachel}$.  The intuition is the function cache will cache lookups $\stk_i \vdashc \expr_i \steps \reval_i$ so they don't need to be repeated in the \ruleref{Var} rule, and similarly the local argument lookup $\stk \vdashc \expr_0' \steps \reval$ will be cached.  The function cache we represent as a list and not a set of mappings because the rules invariably put the cached items with the innermost static scope at the front, next-innermost next, etc.

\begin{definition}$\stk, \cachef, \cachel \vdashcc \expr \steps \reval$ is defined by the rules of Figure \ref{fig_steps_caching}.
\end{definition}
Observe that only the \ruleref{Application} rule adds items to the cache mappings, and it adds the lookup of the function and argument that we just performed.  So, it is indeed just a cache recording previously-proved things.
 We write $\lC_{0\leq i \leq m} [X_i]$ as a shorthand for $[X_0,\ldots, X_m]$.

\begin{figure}
  \begin{mathpar}
    \bbrule{Value}{\ 
   }{\stk, \cachef, \cachel \vdashcc (\gtfun\ \ev \gtarrow \expr) \steps (\gtfun\ \ev \gtarrow \expr)^{\stk,\cachef,\cachel}
   }

    \bbrule{{Var}}{
      0 \leq n \leq m\\
      \cachef = \lC_{0 \leq i \leq m}[\la \stk_i,\cachef_i',\cachel_i',\expr_i\ra \mapsto (\gtfun\ \ev_{i} \gtarrow \expr_{i}'' )^{\stk_{i+1}',\cachef_{i+1},\cachel_{i+1}}] \\
      \forall 0 \leq i < n.\ \stk_{i+1}' = ((\expr_{i+1}\ \expr'_{i+1}) \lC \stk_{i+1})\\
      \cachel_n = [\la \stk_n,\cachef_n',\cachel_n',\expr_n'\ra \mapsto (\reval,\ev_n)]\\
    }{((\expr_0\ \expr_0')\lC \stk_0), \cachef, \cachel_0 \vdashcc \ev_n \steps \reval 
    }

      \bbrule{Application}{
        \stk, \cachef, \cachel \vdashcc \expr_1 \steps (\gtfun\ \ev \gtarrow \expr)^{\stk_1,\cachef',\cachel'}\\
        \stk, \cachef, \cachel \vdashcc \expr_2 \steps \reval_2\\
        ((\expr_1\ \expr_2) \lC \stk), (\la \stk,\cachef,\cachel,\expr_1\ra \mapsto (\gtfun\ \ev \gtarrow \expr)^{\stk_1,\cachef',\cachel'} \lC\cachef'), [\la \stk,\cachef,\cachel,\expr_2\ra \mapsto (\reval_2,\ev)] \vdashcc \expr \steps \reval
      }{\stk, \cachef, \cachel \vdashcc (\expr_1\ \expr_2) \steps \reval
      }
  \end{mathpar}
  \caption{Operational semantics with chaining and caching}
  \label{fig_steps_caching}
\end{figure}

This system is nearly identical to the previous one but rather than repeating the exact same function lookups deeper in the proof we simply cache them, and look up in the cache rather than repeat the sub-proof. 

All of the proof systems are deterministic; we in particular need the fact that $\vdashc$ is deterministic in the lemma below.  The proof is direct by induction.

\begin{lemma}
  \label{lem_deterministic}
$\vdash$, $\vdashc$, $\vdashcc$, and $\vdashe$ are all deterministic: given $\expr$ and all the assumptions to the left of the turnstile, there is at most one proof tree that can be constructed.
\end{lemma}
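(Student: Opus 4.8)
The plan is to argue by induction on the height of a proof tree, showing for each of the four systems that any proof of a fixed judgement — i.e., with $\expr$ and everything to the left of $\steps$ held fixed — is unique. Two structural facts do all the work. First, the rules of each system are \emph{syntax-directed}: the outermost form of $\expr$, together with (in the core system $\vdash$) the single fact $\myfun(\lab)=\gtfun\ \ev_1\gtarrow\dots$ about the variable occurrence being looked up and (in every system) the shape of the stack, environment, or caches, determines which rule can be the last rule of the proof. Second, within each rule the premises can be listed in an order in which the inputs of each premise (everything to the left of $\steps$) are a function of the rule's conclusion together with the \emph{conclusions} of the earlier premises; since by the induction hypothesis each premise subtree is unique, in particular its conclusion is, and so the whole tree is forced.

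I would then check these two facts rule by rule. For $\vdashe$: \ruleref{Value} and \ruleref{Var} are leaves whose result is read off directly — \ruleref{Var} relies on an environment being a genuine finite map, which holds because all $\gtfun$-binders in $\gexpr$ are globally unique, so no variable is bound twice along any environment — and \ruleref{Application} determines the premises for $\expr_1$, then $\expr_2$, then the body $\expr$ under $[\ev\mapsto\reval_2]\lC\envt_1$ in that order, the last only after the first has pinned down $\ev$, $\expr$, and $\envt_1$. For $\vdash$: \ruleref{Value} is a leaf; \ruleref{Var Local} and \ruleref{Var Non-Local} are mutually exclusive because $\myfun(\lab)$ is a single function and the rules require it to bind, respectively, $\ev$ itself or some $\ev_1\neq\ev$, while both require a nonempty stack whose top frame is an application (always the case, by the paper's invariant that stacks are lists of call sites); \ruleref{Var Local} has one premise (the argument $\expr_2$ taken from the top frame), \ruleref{Var Non-Local} has two, where the label $\lab_1$ in its second premise $\stk_1\vdash\ev^{\lab_1}\steps\reval$ is part of the (by the induction hypothesis unique) value returned by the first premise, so the relabelling is harmless; \ruleref{Application} chains its three premises exactly as in $\vdashe$.

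The only genuinely new point is the combined \ruleref{Var} rule of $\vdashc$ and $\vdashcc$, which packages an entire scope chain. Here neither $n$ nor the $\stk_i$, $\expr_i$, $\expr_i'$, $\ev_i$ are actually free: in $\vdashc$, the conclusion stack forces $\expr_0$, $\expr_0'$, and $\stk_0$; the premise $\stk_0\vdashc\expr_0\steps(\gtfun\ \ev_0\gtarrow\dots)^{\stk_1'}$ is unique by the induction hypothesis and forces $\ev_0$ and $\stk_1'$; then $\stk_1'=(\expr_1\ \expr_1')\lC\stk_1$ forces the next frame, and so on down the chain. The chain is thus completely determined, and the side condition $\ev_i\neq\ev_{i+1}$ together with global uniqueness of binders forces it to reach the target variable $\ev_n$ at exactly one index, so $n$ itself is determined and the \ruleref{Var} subtree is unique; in $\vdashcc$ the same chain data is instead read directly off the supplied caches $\cachef$ and $\cachel$ (and \ruleref{Var} has no subderivations there at all), so the argument is only simpler, and \ruleref{Value} and \ruleref{Application} of both systems are handled exactly as before. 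I expect the identification of $n$ in the label-free systems — ruling out a scope chain that could legitimately stop at two different depths and yield two valid trees — to be the one place where the global-uniqueness-of-binders convention is essential; everything else is a routine case check.
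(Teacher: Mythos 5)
Your proof is correct and follows essentially the same route as the paper: the paper offers only the one-line remark that the result is ``direct by induction,'' and your induction on proof height via syntax-directedness plus ordered premise-chaining is exactly the fleshed-out version of that. The only point worth flagging is that your argument for the uniqueness of $n$ in the label-free \ruleref{Var} rules of $\vdashc$/$\vdashcc$ implicitly relies on closure stacks recording calls to the \emph{lexically enclosing} function (so the chain $\ev_0,\ev_1,\dots$ tracks lexical nesting and, by global uniqueness of binders, can hit the target variable at most once) --- an invariant of configurations reachable from $\gexpr$ rather than of arbitrary left-hand sides, though the paper glosses over this entirely.
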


We can now prove $\vdashc$ and $\vdashcc$ equivalent.  This proof requires a strengthened induction invariant concerning properties of the cache.  The intuition of correctness is largely that of the above description: $\vdashcc$ is a formalization of caching for $\vdashc$.

\begin{restatable}{lemma}{lemccc}
  \label{lem_c-cc}
  For all $\expr$, $[] \vdashc \expr \steps(\gtfun\ \ev \gtarrow \expr)^{\stk}$ iff $[], [], [] \vdashcc \expr \steps (\gtfun\ \ev \gtarrow \expr)^{\stk,\cachef,\cachel}$.
\end{restatable}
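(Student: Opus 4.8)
The plan is to exhibit an erasure from $\vdashcc$-proofs to $\vdashc$-proofs together with an inverse annotation, both pinned down by a well-formedness invariant on caches; the lemma is then the $\stk=\cachef=\cachel=[]$ instance of a more general claim. Define the cache-erasure $\lfloor\cdot\rfloor$ on $\vdashcc$ results by $\lfloor(\gtfun\ \ev \gtarrow \expr)^{\stk,\cachef,\cachel}\rfloor = (\gtfun\ \ev \gtarrow \expr)^{\stk}$; in the core language results are the only place caches occur, and neither $\expr$ nor the label-list $\stk$ carries any, so that is all the erasure must do. The content of the proof is an invariant $\mathrm{WF}(\stk,\cachef,\cachel)$ whose key clauses are: (i) $\cachef$ records, innermost-enclosing-scope first, the chain of function lookups along the current static-scope path, so that its entries' call sites and stacks line up with the frames of $\stk$ in exactly the pattern the $\vdashcc$ \ruleref{Var} rule matches against — the $\stk_{i+1}' = (\expr_{i+1}\ \expr'_{i+1}) \lC \stk_{i+1}$ conditions, and the requirement that the function cache carried by the $i$-th cached closure is the $(i{+}1)$-suffix of $\cachef$ — while $\cachel$ is the singleton local-argument entry (empty only at top level); (ii) every judgement recorded in $\cachef$ or $\cachel$ erases to a $\vdashc$-derivable judgement; (iii) every closure occurring inside a result or inside a cached value again carries a WF, $\lfloor\cdot\rfloor$-consistent context. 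One checks that $\mathrm{WF}([],[],[])$ holds and that \ruleref{Application} — the only rule that touches the caches — sends WF contexts to WF contexts: clause (i) is immediate from the shape of the new caches, clause (ii) for the two freshly inserted entries is exactly the conclusion of the backward direction applied to the function- and argument-subderivations of that \ruleref{Application} node, and clause (iii) holds because \ruleref{Value} simply copies the ambient (WF) context into whatever closure it creates.

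For the backward direction I would prove, by induction on the $\vdashcc$-derivation, that $\mathrm{WF}(\stk,\cachef,\cachel)$ and $\stk,\cachef,\cachel \vdashcc \expr \steps \reval$ imply $\stk \vdashc \expr \steps \lfloor\reval\rfloor$ (and that $\reval$ satisfies clause (iii), needed to keep the invariant running). \ruleref{Value} and \ruleref{Application} erase rule-for-rule to their $\vdashc$ counterparts, applying the IH to the subderivations and maintaining WF across the \ruleref{Application} step as above. The \ruleref{Var} case is the crux: the $\vdashcc$ \ruleref{Var} rule has essentially no subderivations — it just reads a chain of entries off $\cachef$ and one entry off $\cachel$ — so to assemble a $\vdashc$ \ruleref{Var} derivation I need $\vdashc$-derivations $\stk_i \vdashc \expr_i \steps (\gtfun\ \ev_i \gtarrow \expr_i'')^{\stk_{i+1}'}$ for each link and $\stk_n \vdashc \expr_n' \steps \reval$ for the local argument. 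These come directly from clause (ii), and clause (i) guarantees the links fit together into precisely the chain demanded by the $\vdashc$ \ruleref{Var} rule of Figure \ref{fig_steps_chaining}. It is essential that clause (ii) be phrased as ``$\vdashc$-derivable'', not ``$\vdashcc$-derivable'': those cache entries are not subderivations of the current \ruleref{Var} node, so the IH is unavailable there; instead clause (ii) is discharged once and for all inside the WF-preservation argument.

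For the forward direction I would induct on the $\vdashc$-derivation, proving: whenever $\stk \vdashc \expr \steps \reval$ and $\mathrm{WF}(\stk,\cachef,\cachel)$, there is a $\vdashcc$-derivation of $\stk,\cachef,\cachel \vdashcc \expr \steps \reval^+$ with $\lfloor\reval^+\rfloor = \reval$ and $\reval^+$ WF. \ruleref{Value} and \ruleref{Application} again mirror the $\vdashc$ rules, the new context in the \ruleref{Application} case being forced and WF by the preservation lemma so the IH applies to the body. In the \ruleref{Var} case, clause (i) of WF hands us $\cachef$ and $\cachel$ in exactly the shape the $\vdashcc$ \ruleref{Var} rule requires; what remains is to check that the results sitting in those entries are the very ones occurring in the given $\vdashc$ derivation, and this follows from clause (ii) (the cached judgements are $\vdashc$-derivable) together with determinism of $\vdashc$ (Lemma \ref{lem_deterministic}), which — by a short induction along the chain, since the stacks and call sites are already forced to agree — makes each cached result coincide with the corresponding hypothesis of the $\vdashc$ \ruleref{Var} rule. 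Instantiating both directions at $\stk=\cachef=\cachel=[]$, where $\mathrm{WF}$ holds trivially, yields the stated equivalence. The main obstacle I anticipate is formulating $\mathrm{WF}$ precisely enough that clause (i) truly pins down the cache shape matched by \ruleref{Var} and clause (ii) is usable without a circular appeal to the induction hypothesis, yet loosely enough to remain an invariant of \ruleref{Application}; relatedly, one must order the two directions carefully (backward before forward, so its conclusion is available inside the WF-preservation argument and determinism of $\vdashc$ can legitimately be invoked in the forward \ruleref{Var} case) so that no genuine circularity survives.
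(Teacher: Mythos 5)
Your proposal is correct and follows essentially the same route as the paper: both generalize the statement to arbitrary well-formed caches (your $\mathrm{WF}$ clauses (i)--(iii) correspond to the paper's strengthened hypotheses (2)/(2)--(3), including the recursive requirement on nested closures), both use a cache-erasure map (the paper's $\nofl$ is your $\lfloor\cdot\rfloor$), both induct on the height of the respective derivation in each direction, and both invoke determinism of $\vdashc$ (Lemma \ref{lem_deterministic}) to match cached results against the hypotheses of the $\vdashc$ \ruleref{Var} rule. The only cosmetic difference is that you factor the invariant into a standalone WF-preservation lemma rather than threading it through the induction as extra hypotheses and conclusions, which changes nothing substantive.
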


\subsection{Relating the Cached and Environment Systems}

The $\vdashcc$ system is very close to the standard environment/closure system of Section \ref{sec-envt_system}, because the caches indirectly have all of the environment mappings cached in them.  In particular, we can define a function $\extractenvt$ which will extract an environment of variable-result bindings given the current stack and caches of $\vdashcc$.  The definition is inductive because function values themselves need to have their closure environments extracted.  The definition is well-founded based on lexical nesting depth.

\begin{definition}
Inductively define $\extractenvt$ and $\extractcl$ as follows.

Let $\extractenvt(\stk_0,\cachef,\cachel_0)$ be 
$[\ev_0 \mapsto \eval_0^{\extractcl(\reval_0)}, \dots, \ev_n \mapsto \eval_n^{\extractcl(\reval_n)}]$
where 
$$\cachef = [\la \stk_0,\expr_0\ra \mapsto (\gtfun\ \ev_{0} \gtarrow \expr_{0}'' )^{\stk_{1}',\cachef_{1}'\cachel_{1}}, \dots, \la \stk_m,\expr_m\ra \mapsto (\gtfun\ \ev_{m} \gtarrow \expr_{m}'' )^{\stk_{m+1}',\cachef_{m+1}',\cachel_{m+1}} ]$$
and where for all  $0\leq i < m$, $\stk_{i+1}' = (\expr_{i+1}\ \expr'_{i+1}) \lC \stk_{i+1}$, $\cachel_i = \la \stk_i,\expr_i'\ra \mapsto (\reval_i,\ev_i)$, and $\reval_i = \eval_i^{\stk''_i,\cachef_i'',\cachel_i''}$.
Let $\extractcl((\gtfun\ \ev \gtarrow \expr)^{[],[],[]})$  be $[]$ and let $\extractcl((\gtfun\ \ev \gtarrow \expr)^{(\expr_0 \expr_0')\lC\stk,\cachef,\cachel})$ 
be $\extractenvt(\stk,\cachef,\cachel)$. 
\end{definition}
  
The $\extractenvt$ function links the $\vdashe$ environment system with the $\vdashcc$ cache chaining system.  Concretely, we have the following lemma. Again, the intuition for the isomorphism is above: all the environments in the $\vdashe$ proof are in fact nested in the caches of the $\vdashcc$ proof.  The proof itself requires a strengthening of the induction hypothesis.

\begin{restatable}{lemma}{lemcce}
  \label{lem_cc-e}
  Given fixed source program $\gexpr$, $[],[] \vdashcc \gexpr \steps (\gtfun\ \ev \gtarrow \expr)^{\stk,\cachef,\cachel}$ iff $[] \vdashe \gexpr \steps (\gtfun\ \ev \gtarrow \expr)^{\envt}$, where $\extractenvt(\stk,\cachef,\cachel)=\envt$.
\end{restatable}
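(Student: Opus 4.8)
The lemma asserts that $\extractenvt$ (together with $\extractcl$) is a bisimulation between $\vdashcc$ and $\vdashe$, so I would prove a strengthened statement by induction on a derivation, in each direction. In the $\vdashcc\Rightarrow\vdashe$ direction: for every node $(\stk,\cachef,\cachel)\vdashcc\expr\steps\reval$ of a $\vdashcc$-derivation rooted at the empty configuration, $\envt\vdashe\expr\steps\reval^{\extractcl(\reval)}$ holds, where $\envt$ is the environment $\extractcl$ would assign to a closure formed at this configuration — namely $[]$ at the root and $\extractenvt(\stk',\cachef,\cachel)$ when the configuration's stack is $f\lC\stk'$. (In the core language every result is a closure, so this $\extractcl$-translation of $\reval$ is the only result translation needed.) The $\vdashe\Rightarrow\vdashcc$ direction is symmetric: each $\vdashe$-derivation from $\envt$ is matched by a $\vdashcc$-derivation from any reachable configuration whose associated environment is $\envt$. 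The lemma is then the root instance, since $\extractenvt([],[],[])=[]$. Generalizing away from the empty configuration is unavoidable: the $\vdashcc$ \ruleref{Application} rule recurses into the function body under a non-trivial augmented configuration, so the induction hypothesis must speak about arbitrary configurations.

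Before the induction I would isolate the structural invariant on configurations reachable from $[],[],[]$ that keeps $\extractenvt$ well-defined and synchronized with the rules: $\cachel$ is empty or a singleton; $\cachef$ is a list of function-lookup mappings whose recorded stacks satisfy exactly the prefixing relation $\stk_{i+1}' = (\expr_{i+1}\ \expr'_{i+1})\lC\stk_{i+1}$ that the definition of $\extractenvt$ matches on; when the current stack is non-empty its top frame is the call site whose components ($\expr_0$, $\expr_0'$, and the bound parameter) appear in the head mapping of $\cachef$ and in $\cachel$; and every closure stored in a cache again satisfies this invariant on its own stored state. All clauses are preserved by the three $\vdashcc$ rules by inspection: \ruleref{Value} and \ruleref{Var} do not touch the caches, and \ruleref{Application} prepends to $\cachef$ and resets $\cachel$ in precisely the layout the invariant describes. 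This is routine but is what licenses every later use of the pattern matches in the definitions of $\extractenvt$ and $\extractcl$.

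For the $\vdashcc\Rightarrow\vdashe$ direction I would induct on the $\vdashcc$-derivation. The \ruleref{Value} case follows from the $\vdashe$ \ruleref{Value} rule once one checks, from the definition of $\extractcl$ and the top-frame clause of the invariant, that $\extractcl$ of the closure just produced equals the associated environment of the current configuration. The \ruleref{Var} case is the one the extraction functions were built around: the chain of $n$ function-cache reads followed by the single $\cachel_n$ read performed by the $\vdashcc$ \ruleref{Var} rule is term-for-term the walk $\extractenvt$ performs to build the environment, so the returned $\reval$ is exactly the $n$-th binding of $\extractenvt(\stk_0,\cachef,\cachel_0)$, and one application of the $\vdashe$ \ruleref{Var} rule closes the case (the inner $\extractcl$'s that $\extractenvt$ applies to the stored closures match the result translation by the last clause of the invariant). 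In the \ruleref{Application} case I would apply the induction hypothesis to the three premises and then establish the crux identity: the augmented configuration under which the body is evaluated — stack $(\expr_1\ \expr_2)\lC\stk$, function cache $\reval_f$ prepended to the closure's own $\cachef'$, local cache the single mapping $\la\stk,\cachef,\cachel,\expr_2\ra\mapsto(\reval_2,\ev)$ — has associated environment precisely $[\ev\mapsto\reval_2^{\extractcl(\reval_2)}]\lC\extractcl(\reval_f)$, which is exactly the environment the $\vdashe$ \ruleref{Application} rule uses; that is, simultaneously prepending to the two caches and pushing the call-site frame corresponds to a single cons onto the environment. Given the identity, the induction hypothesis on the body premise delivers the $\vdashe$-derivation of the body, and the $\vdashe$ \ruleref{Application} rule finishes. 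The $\vdashe\Rightarrow\vdashcc$ direction follows the same case analysis on the $\vdashe$-derivation, using the invariant to reconstruct, in the \ruleref{Var} case, the full $n$-fold cache chain (possible because the given environment is by hypothesis $\extractenvt$ of the caches), and using determinism of $\vdashcc$ (Lemma~\ref{lem_deterministic}) to identify the reconstructed derivation's result.

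I expect the main obstacle to be exactly the \ruleref{Application} identity, together with getting the configuration invariant sharp enough to feed it. The functions $\extractenvt$ and $\extractcl$ are mutually recursive on lexical depth and thread closures through caches through closures, while the rule pushes a stack frame and prepends to two caches at once; reconciling the ``drop the top frame'' step in $\extractcl$ (and the ``absorb the next closure's top frame'' built into the $\stk_{i+1}' = (\expr_{i+1}\ \expr'_{i+1})\lC\stk_{i+1}$ constraint of $\extractenvt$) with the frame push in \ruleref{Application}, uniformly across the recursion, is where essentially all the real content of the lemma sits — the $\vdashe$ \ruleref{Var} rule reads position $i$ of a flat environment whereas the $\vdashcc$ \ruleref{Var} rule re-derives that binding by chasing $i$ closures' stacks and caches, and the lemma is in the end the statement that $\extractenvt$ inverts that chase. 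Once this lemma is in hand, composing it with Lemmas~\ref{lem_v_c} and~\ref{lem_c-cc} yields the target equivalence $\vdash\Leftrightarrow\vdashe$.
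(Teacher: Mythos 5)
Your proposal is correct and takes essentially the same route as the paper: both directions are proved by strengthening to arbitrary configurations related by $\extractenvt$, inducting on the derivation, and reducing the \ruleref{Application} case to the identity that pushing a frame while prepending to $\cachef$ and resetting $\cachel$ corresponds to a single cons onto the extracted environment. Your explicit isolation of the reachability invariant and the use of determinism in the reverse direction are minor presentational refinements of what the paper's (sketchier) proof already does implicitly via the hypothesis that $\extractenvt$ is defined on the configuration.
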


Putting together the lemmas relating these different systems we directly have the following.

\begin{restatable}{theorem}{thmve}
  \label{thm_v_e}
Given fixed source program $\gexpr$, $[] \vdash \gexpr \steps (\gtfun\ \ev \gtarrow \expr)^{\stk}$ iff $[] \vdashe \gexpr \steps (\gtfun\ \ev \gtarrow \expr)^{\envt}$.
\end{restatable}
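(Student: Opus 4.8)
The plan is to obtain Theorem~\ref{thm_v_e} as an immediate corollary of the three chained equivalences already established: Lemma~\ref{lem_v_c} ($\vdash$ vs.\ $\vdashc$), Lemma~\ref{lem_c-cc} ($\vdashc$ vs.\ $\vdashcc$), and Lemma~\ref{lem_cc-e} ($\vdashcc$ vs.\ $\vdashe$). Reading the three biconditionals left-to-right and composing them gives a route $\vdash \;\rightarrow\; \vdashc \;\rightarrow\; \vdashcc \;\rightarrow\; \vdashe$, and reading them right-to-left gives the reverse route; since each is an ``iff'', the composite is an ``iff''. Concretely, starting from $[] \vdash \gexpr \steps (\gtfun\ \ev \gtarrow \expr)^{\lab,\stk}$, instantiate Lemma~\ref{lem_v_c} at the empty stack to get $[] \vdashc \gexpr \steps (\gtfun\ \ev \gtarrow \expr)^{\stk}$ (the label component is simply forgotten, as $\vdashc$ carries no labels); then Lemma~\ref{lem_c-cc} gives $[],[],[] \vdashcc \gexpr \steps (\gtfun\ \ev \gtarrow \expr)^{\stk,\cachef,\cachel}$ for the $\cachef,\cachel$ produced by that lemma; then the forward direction of Lemma~\ref{lem_cc-e} gives $[] \vdashe \gexpr \steps (\gtfun\ \ev \gtarrow \expr)^{\envt}$ with $\extractenvt(\stk,\cachef,\cachel)=\envt$. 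The converse runs the same three lemmas backwards: from a $\vdashe$ derivation with closure environment $\envt$, the backward direction of Lemma~\ref{lem_cc-e} furnishes some $\stk,\cachef,\cachel$ with $\extractenvt(\stk,\cachef,\cachel)=\envt$ together with a $\vdashcc$ derivation; Lemma~\ref{lem_c-cc} converts it to a $\vdashc$ derivation yielding $(\gtfun\ \ev \gtarrow \expr)^{\stk}$; and Lemma~\ref{lem_v_c} converts that to a $\vdash$ derivation, whose result is $(\gtfun\ \ev \gtarrow \expr)^{\lab,\stk}$ for the label $\lab$ that the core system attaches to this value occurrence.

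The only bookkeeping worth spelling out is the correspondence between the four result forms: $\vdash$ produces $(\gtfun\ \ev \gtarrow \expr)^{\lab,\stk}$, $\vdashc$ produces $(\gtfun\ \ev \gtarrow \expr)^{\stk}$ (labels erased), $\vdashcc$ produces $(\gtfun\ \ev \gtarrow \expr)^{\stk,\cachef,\cachel}$ (caches threaded onto the closure), and $\vdashe$ produces $(\gtfun\ \ev \gtarrow \expr)^{\envt}$. Each lemma already pins down how its two forms relate, so the composite relation is just their composition; in particular the stack $\stk$ appearing in the $\vdash$ result is exactly the one from which $\envt$ is extracted via $\extractenvt$. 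One should also note that the function body $\expr$ and parameter $\ev$ are \emph{literally} the same across all four systems, since none of them performs substitution or relabeling of non-variable expressions, so ``same function value'' means the same syntactic $\gtfun\ \ev \gtarrow \expr$ throughout.

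Because the real content is entirely inside the three lemmas, there is no genuine obstacle here; the only point requiring a moment's care is ensuring the existential witnesses line up in the backward direction --- that the $\cachef,\cachel$ handed back by Lemma~\ref{lem_c-cc} are precisely the ones Lemma~\ref{lem_cc-e} can consume. This is automatic, since Lemma~\ref{lem_cc-e} quantifies over all $(\stk,\cachef,\cachel)$ with $\extractenvt(\stk,\cachef,\cachel)=\envt$, so it may be invoked with whatever witnesses the previous step produced. It is cleanest to phrase the whole chain at the top level only (all turnstiles starting from empty state), which is exactly how all three lemmas are stated, so no generalization over non-empty stacks, caches, or environments is needed for the composition.
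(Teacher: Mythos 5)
Your proposal is correct and matches the paper's proof exactly: the paper also obtains Theorem~\ref{thm_v_e} by chaining Lemmas~\ref{lem_v_c}, \ref{lem_c-cc}, and \ref{lem_cc-e}. The additional bookkeeping you spell out about how the result forms correspond across the four systems is sound but already implicit in the statements of those lemmas.
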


\section{A Pure Demand Program Analysis}
\label{sec_analysis}
We now develop an application of pure demand operational semantics: it can be used to derive a program analysis by \emph{only} having to finitize the possible call stack states since that is the only unbounded state in a judgement; in a standard program analysis, the environment and store also need to be finite.  

We will force the stacks to the top-$k$ frames for a fixed $k$, a simple and standard stack finitization found in many program analyses \cite{ShiversThesis,MightAbstractInterpretersForFree,AAM,P4F}.  In these standard analyses the call stack is only needed to check for repeated states and record the return point, but in our system the stack is critical to demand-driven variable lookup.  For this reason, just keeping the $k$ most recent stack frames for us would lead to a great deal of inaccuracy in variable lookup: with a deep enough call stack there would be a complete loss of information on where a variable value originated from.  But, we can recover more precision by remembering all stack fragments visited thus far, and ``stitching'' existing stack fragments together where one fragment's suffix matches another fragment's prefix.

As with the operational semantics, we will first study the functions-only version and prove properties of that system.  We will then extend it to the larger language syntax of Section \ref{sec_extended_language}.

\subsection{The Core Analysis}
\label{sec_core_analysis}

We will use the same grammar as for the core language of Figure \ref{fig_Grammar}.  We use notation $\lC_k$ to denote a list consing that is then shortened to at most $k$ elements by dropping elements from the rear of the list: $\frm_0 \lC_k [\frm_1,\ldots,\frm_{k-1},\ldots] = [\frm_0,\frm_1,\ldots,\frm_{k-1}]$ and $\frm_0 \lC_k [\frm_1,\ldots,\frm_m] = [\frm_0,\frm_1,\ldots,\frm_{m}]$ where $m < k$.  We will use this function to drop some stack frames $\frm$ at calls in order to finitize the analysis. It must be the case that  $k \geq 1$, and practically speaking $k \geq 2$ is needed or no stitching will occur.  We must keep a history of all maximally-$k$-length stack fragments $\stk$ seen so far in the derivation in set $\sfrags=\{\stk_1,\ldots,\stk_n\}$, while maintaining an invariant that any actual runtime stack has to be a concatenation of fragments from $\sfrags$.

\begin{definition}
\label{def_core_analysis}
Given fixed $k>0$, relation $\stk, \sfrags \vdasha \expr \steps \reval$ is defined by the rules of Figure \ref{fig_steps_analysis}.  For top-level programs $\gexpr$ we define $\vdasha \gexpr \steps \reval$ iff $[], \{\} \vdasha \gexpr \steps \reval$.  
\end{definition}

\begin{figure}
  \begin{mathpar}
    \bbrule{Value}{\ 
   }{\stk,\sfrags \vdasha(\gtfun\ \ev \gtarrow \expr) \steps (\gtfun\ \ev \gtarrow \expr)^{\stk} \globs \sfrags
   }

    \bbrule{{Var Local}}{
          \stk' \in \suffixes((\expr_1\ \expr_2), \stk, \sfrags)\\
          \stk', \sfrags \vdasha \expr_2 \steps \reval_v \globs \sfrags_v
      }{((\expr_1\ \expr_2) \lC \stk),\sfrags \vdasha\ev^0 \steps \reval_v \globs \sfrags_v
      }

    \bbrule{{Var Non-Local}}{
        \stk' \in \suffixes((\expr_1\ \expr_2), \stk, \sfrags)\\
          \stk', \sfrags \vdasha\expr_1 \steps (\gtfun\ \ev_1 \gtarrow \expr )^{\stk_1} \globs \sfrags_1 \\
        \stk_1,\sfrags_1 \vdasha\ev^\dbn \steps \reval_v \globs \sfrags_v
      }{((\expr_1\ \expr_2) \lC \stk),\sfrags \vdasha\ev^{\dbn+1} \steps \reval_v \globs \sfrags_v
      }

      \bbrule{Application}{
        \stk,\sfrags \vdasha\expr_1 \steps (\gtfun\ \ev \gtarrow \expr)^{\stk_1} \globs \sfrags_1\\        
        ((\expr_1\ \expr_2) \lCk \stk), (\sfrags_1 \cup ((\expr_1\ \expr_2) \lCk \stk)) \vdasha\expr  \steps \reval_v \globs \sfrags_v
      }{\stk,\sfrags \vdasha (\expr_1\ \expr_2) \steps \reval_v \globs \sfrags_v
      }
  \end{mathpar}
  \caption{Pure function program analysis rules}
  \label{fig_steps_analysis}
\end{figure}

\subsubsection*{Explaining the rules}
The rules closely mirror the $\vdash$ rules of Figure \ref{fig_steps}, but at application time the call stack $\stk$ is pruned to be at most $k$-length.  Since these are big-step operational semantics there is no issue with call-return alignment when the stack is pruned: a big-step execution implicitly aligns calls and returns.  The only information losses are that (1) the stack may not be accurate when used for variable lookups, and (2) recursion may turn into a cycle by visiting the same stack again.  Note that for simplicity of presentation, we have removed the evaluation/discard of the function argument from the \ruleref{Application} rule.

Recall that both \ruleref{Var Local} and \ruleref{Var Non-Local} need to pop off a stack frame and perform a lookup (the argument in the former and the function in the latter).  If we naively followed this approach here, since these stacks may have been pruned, at some point we could ``pop off'' the end of the stack and would completely lose all context information.  To avoid this imprecision, in the \ruleref{Var} rules, we don't just pop a frame off the stack, we pop a frame from the top but then potentially \emph{add} a new frame to the bottom of the stack to maintain up to $k$ frames of stack context. We pick such frames based on the history of all call stack fragments seen thus far in $S$, so we only ``stitch'' together a new fragment if it aligns with the remnant of the current stack.

Delving into a bit more detail, the $\sfrags$ set can be seen to be recording each new stack fragment encountered in every  \ruleref{Application} rule: when executing the function body, we take the new stack $(\expr_1\ \expr_2) \lCk \stk$ and union it into $\sfrags$ to record it for future stitching.  Then, some \ruleref{Var} rule, say \ruleref{Var Non-Local}, pops a frame $\frm_2$ off the stack $\frm_2\lC\stk$ and finds an existing stack fragment $\stk_0 \in \sfrags$ that shares the top $k$ frames with $\frm_2\lC\stk$ \emph{and} potentially has extra frames $\stk'$ that we can use to extend the fragment when popping off $\frm_2$: we pop $\frm_2$ from the top but then add $\stk'$ to the end at the same time.  Invariably, $\stk'$ will be either 0 or 1 frames.  A similar stack stitching is performed in \ruleref{Var Local}.  We define this mechanism as $\suffixes(\frm,\stk,\sfrags) = \{\stk \lC  \stk' \mid \exists \stk'' \in \sfrags.\ \stk'' = \frm \lC \stk \lC \stk'\}$.

In order to better understand stack stitching, we provide an example scenario in Figure \ref{fig_stitching}, which corresponds to a series of variable lookups. Suppose a \ruleref{Var Non-Local} lookup started with stack $[\frm_2, \frm_1]$, then per $\suffixes(\frm_2, [\frm_1], \sfrags)$ where $\sfrags$ is defined in the figure, the only fragment $\stk'' \in \sfrags$ that has the required prefix $\frm_2 \lC\ [\frm_1] = [\frm_2, \frm_1]$ is in fact just $[\frm_2, \frm_1]$ itself. After popping the head frame, the stack will be just $[\frm_1]$.  Thus far there is no lossiness, but if there was again a \ruleref{Var Non-Local} rule firing with this stack, we can see from $\suffixes(\frm_1, [], \sfrags)$ that there are now two fragments $\stk'' \in \sfrags$ with prefix $[\frm_1] \lC\ [] = [\frm_1]$, namely $[\frm_1, \frm_0]$ and $[\frm_1, \frm_3]$.  Upon popping $\frm_1$, we obtain not an empty stack but $[\frm_0]$ and $[\frm_3]$, respectively, due to stitching with these two stacks. At most one of these is a correct fragment from the deterministic non-abstract execution; the analysis will proceed nondeterministically with one stack fragment.  If $[\frm_0]$ was used, it could further stitch with $[\frm_0, \frm_1]$, resulting in $[\frm_1]$.  This is a previously seen stack (hence the back edge), and if it was in fact a previously-visited program state then it would be stubbed/pruned in the implementation (see the following subsection for more on stubbing).  Note it can be seen from the above that if $k<2$ there would be no additional frames gained from stitching with $\sfrags$ so $k$ should \emph{de facto} be at least $2$.

\begin{figure}[h]
  \begin{tikzpicture}[
    state/.style = {anchor=west},
    trans/.style = {sloped, fill=white, inner sep=2pt},
    ->,
    grow=right,
    level 1/.style={level distance=6mm, sibling distance=10mm},
    level 2/.style={level distance=32mm},
  ]
  \node{...}
  child {
    node[state]{$[\frm_2, \frm_1]$}
    child {
      node[state](cycle){$[\frm_1]$}
      child {
        node[state]{$[\frm_3]$}
        edge from parent node [trans] {\tiny stitch w/$[\frm_1, \frm_3]$ \& pop}
      }
      child {
        node(stub)[state]{$[\frm_0]$}
        edge from parent node [trans] {\tiny stitch w/$[\frm_1, \frm_0]$ \& pop}
      }
      edge from parent node [trans] {\tiny stitch w/$[\frm_2, \frm_1]$ \& pop}
    }
    edge from parent
  };
  \path[<-, bend left=20] (cycle.north) edge node [trans] {\tiny stitch w/$[\frm_0, \frm_1]$ \& pop} (stub.north);
  \end{tikzpicture}

  \caption{Stack stitching process when $\sfrags = \{[\frm_2], [\frm_2,\frm_1], [\frm_1, \frm_0], [\frm_1, \frm_3], [\frm_0, \frm_1]\}$}
  \label{fig_stitching}
\end{figure}

It is easy to show the analysis is computable by a raw counting argument: all data structures in the proof tree are bounded by the size of the program.
\begin{restatable}{lemma}{coreanaldec}
  For all $\gexpr$, the set $\{\reval \mid [], \emptyset \vdasha \gexpr \steps \reval\}$ is recursive.
\end{restatable}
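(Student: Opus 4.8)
The plan is the standard ``finite configuration space plus monotone fixed point'' argument, exploiting that pruning call stacks to their top $k$ frames makes every component of a $\vdasha$-judgment range over a finite set computable from $\gexpr$. First I would fix $\gexpr$ and collect these sets. Let $L$ be the finite set of program-point labels occurring in $\gexpr$, and let $\text{Stacks}$ be the set of lists over $L$ of length at most $k$; since $k$ is fixed, $\text{Stacks}$ is finite. Let $\mathcal{E}$ be the set of expressions that may appear to the left of $\steps$ in a derivation: by the invariant recalled in Section~\ref{sec_opsem}, each such expression is either a subexpression of $\gexpr$ or a variable occurrence re-tagged with one of the finitely many function-value labels of $\gexpr$, so $\mathcal{E}$ is finite. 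Let $\mathcal{R}$ be the set of results $(\gtfun\ \ev \gtarrow \expr)^{\lab,\stk}$ with $(\gtfun\ \ev \gtarrow \expr)^\lab$ a subexpression of $\gexpr$ and $\stk \in \text{Stacks}$; this is finite too. Call a judgment $\stk, \sfrags \vdasha \expr \steps \reval \globs \sfrags'$ \emph{well-formed} when $\stk \in \text{Stacks}$, $\sfrags, \sfrags' \subseteq \text{Stacks}$, $\expr \in \mathcal{E}$, and $\reval \in \mathcal{R}$; the set $\mathcal{J}$ of well-formed judgments is then finite.

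The crux is a closure lemma, proved by simultaneous induction on proof height: if $\stk, \sfrags \vdasha \expr \steps \reval \globs \sfrags'$ is derivable and its inputs $\stk, \sfrags, \expr$ are well-formed, then (i) its outputs $\reval, \sfrags'$ are also well-formed and (ii) every sub-judgment appearing in its proof tree lies in $\mathcal{J}$. One verifies this rule by rule, processing hypotheses left to right and using the induction hypothesis on already-processed hypotheses to see that the next hypothesis receives well-formed inputs. The only places needing attention are the \ruleref{Application} rule, where the new stack $\lab \lCk \stk$ has length at most $k$ by definition of $\lCk$ so that it and $\sfrags_1 \cup (\lab \lCk \stk)$ stay inside $\text{Stacks}$, and the two \ruleref{Var} rules, where the stitched stack $\stk \lC \stk'$ is a proper suffix of some $\stk_0 \in \sfrags \subseteq \text{Stacks}$ and hence again in $\text{Stacks}$.

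Finally, by the standard characterization of inductively-defined relations via finite proof trees, the derivable well-formed judgments form exactly the least fixed point of the rule operator $\Phi$ restricted to rule instances lying inside $\mathcal{J}$; and by clause (ii) above, every proof of a judgment with well-formed inputs stays inside $\mathcal{J}$, so this fixed point already answers every such query. This restricted operator is a computable monotone map on the finite lattice $(\mathcal{P}(\mathcal{J}),\subseteq)$: to compute $\Phi(X)\cap\mathcal{J}$ for finite $X\subseteq\mathcal{J}$ one enumerates $\mathcal{J}$ and, for each candidate conclusion, checks the finitely many ways it matches a rule's conclusion and whether the decidable side conditions ($\myfun$, $\ev\neq\ev_1$, $\stk_0\in\sfrags$, the various $\lC$ and $\lCk$ decompositions) hold with all premises in $X$. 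Hence the ascending chain $\emptyset\subseteq\Phi(\emptyset)\subseteq\Phi^2(\emptyset)\subseteq\cdots$ stabilizes within $|\mathcal{J}|$ steps to an effectively computable set, and since the initial query $[],\{\}$ is trivially well-formed, $\{\reval \mid [],\{\}\vdasha\gexpr\steps\reval\}$ is just the projection of that finite set onto the result component and is thus recursive. I expect the only genuine work to be the bookkeeping in the closure lemma --- confirming that $\lCk$-truncation and stack stitching can never produce a stack longer than $k$ or a fragment set escaping $\text{Stacks}$ --- while everything else is routine.
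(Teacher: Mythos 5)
Your proof is correct and takes essentially the same approach as the paper's: the paper's own argument is precisely the observation that stacks (length at most $k$), fragment sets, expressions (subterms of $\gexpr$ plus finitely many relabeled variables), and results all range over finite sets computable from $\gexpr$, so there are only finitely many possible judgments and derivability can be decided by exhaustive search. Your closure lemma and the explicit least-fixed-point iteration on the finite lattice of judgment sets merely make rigorous the paper's one-line claim that ``all possible proofs can be enumerated and checked in finite time.''
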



\subsubsection*{Soundness}

A program analysis is \emph{sound} if for all programs the analysis can produce any value that the operational semantics can.  Soundness is largely straightforward here as the rules are identical besides stack pruning and stitching.  All we need to show is that the concatenation of stack fragments in $\sfrags$ will include enough pieces to glue together the actual runtime stack of a concrete run.  Since each new $k$-length fragment is added to $\sfrags$, by induction every fragment of the actual stack will be in the $\sfrags$ set in the analysis and so there will be a stitching that produces the concrete stack.

\begin{restatable}[Soundness]{theorem}{analysissoundness}
\label{lem_analysis_soundness}  
  If \ $[] \vdash \gexpr \steps (\gtfun\ \ev \gtarrow \expr)^{\stk}$ then $[], \emptyset \vdasha \gexpr \steps (\gtfun\ \ev \gtarrow \expr)^{\stk\lceil_k} \globs \sfrags$ for some $\sfrags$. 
\end{restatable}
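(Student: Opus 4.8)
The plan is to derive the theorem from a stronger \emph{simulation lemma} proved by structural induction on the concrete derivation of $\stk \vdash \expr \steps \reval$, instantiating it at the empty stack and empty fragment set. The simulation lemma should say: whenever $\stk \vdash \expr \steps \reval$ and the pair $(\widehat\stk,\sfrags)$ satisfies an invariant $\mathrm{Inv}(\stk,\widehat\stk,\sfrags)$, there exist $\widehat\reval,\sfrags'$ with $\widehat\stk,\sfrags \vdasha \expr \steps \widehat\reval \globs \sfrags'$ such that (i) $\widehat\reval$ is obtained from $\reval$ by replacing its embedded closure stack by a length-$\le k$ prefix of itself, (ii) $\sfrags \subseteq \sfrags'$, and (iii) $\sfrags'$ again satisfies the coverage part of the invariant with respect to the stack embedded in $\widehat\reval$ and its concrete counterpart in $\reval$. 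Here $\mathrm{Inv}(\stk,\widehat\stk,\sfrags)$ bundles: $\widehat\stk$ is a prefix of $\stk$ with $|\widehat\stk|\le k$; $\widehat\stk \in \sfrags$; and $\sfrags$ \emph{covers} $\stk$, meaning it contains, for each relevant suffix of $\stk$, a fragment that lets the \ruleref{Var Local}/\ruleref{Var Non-Local} rules re-stitch a length-$\le k$ prefix of the appropriate concrete continuation.

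First I would fix notation: write $\stk\lceil_k$ for the length-$\min(|\stk|,k)$ prefix of $\stk$, note the identity $\lab \lCk \stk = (\lab \lC \stk)\lceil_k$, and observe that $\lab \lCk \widehat\stk$ is a length-$\le k$ prefix of $\lab \lC \stk$ whenever $\widehat\stk$ is a prefix of $\stk$. I would also record two trivial facts about $\vdasha$ (each by induction on the rules of Figure~\ref{fig_steps_analysis}): the fragment set is threaded left-to-right and grows monotonically along any derivation, so coverage established earlier is never lost; and enlarging $\sfrags$ never destroys a derivation. These make the ``for some $\sfrags$'' of the theorem statement unproblematic.

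Then I would do the case analysis on the last rule of the concrete derivation. \ruleref{Value} is immediate: the analysis \ruleref{Value} rule returns $(\gtfun\ \ev \gtarrow \expr)^{\lab,\widehat\stk}\globs\sfrags$, and $\widehat\stk$ is a length-$\le k$ prefix of $\stk$ covered by $\sfrags$ by the invariant. For \ruleref{Application}, the concrete rule evaluates $\expr_1$ and the argument at $\stk$ and the body at $\lab\lC\stk$; the analysis evaluates $\expr_1$ at $\widehat\stk$ with $\sfrags$ and the body at $\lab\lCk\widehat\stk$ with $\sfrags_1\cup\{\lab\lCk\widehat\stk\}$, dropping the argument premise (harmless, since arguments are recomputed on demand in \ruleref{Var Local}). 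Applying the IH to the $\expr_1$ premise yields the closure with a truncated stack and an enlarged $\sfrags_1$ covering that closure's stack; the freshly unioned fragment $\lab\lCk\widehat\stk$ is exactly what re-establishes coverage of the pushed stack; then the IH applies to the body premise. \ruleref{Var Local} and \ruleref{Var Non-Local} are the stitching heart: from the coverage hypothesis I extract a fragment $\stk_0\in\sfrags$ of the form (popped frame) $\lC\,\widehat\stk\,\lC\,\stk'$ with $\widehat\stk\lC\stk'$ a length-$\le k$ prefix of the concrete stack fed to the premise (for \ruleref{Var Local}, the tail feeding $\expr_2$; for \ruleref{Var Non-Local}, the tail feeding $\expr_1$, and then for the recursive variable lookup the closure stack $\stk_1$, whose coverage comes from clause (iii) of the IH applied to the $\expr_1$ premise), feed $\stk_0$ into the analysis rule, and recurse.

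The main obstacle is formulating the coverage half of $\mathrm{Inv}$ precisely enough to carry the induction. Two subtleties interact. The analysis call stack is in general a \emph{strict} prefix of the concrete one: a variable-lookup step pops a frame but can only restitch back to total length $k$, and an \ruleref{Application} step nested under earlier variable-lookup steps may begin from a short prefix; consequently the fragments actually recorded in $\sfrags$ are $k$-truncations of \emph{consed prefixes}, not the idealized $k$-truncations of the concrete stacks, so the naive predicate ``every $k$-truncated suffix of $\stk$ lies in $\sfrags$'' is not preserved. Moreover, closure stacks surfacing in \ruleref{Var Non-Local} are produced by evaluations that occurred earlier in the left-to-right traversal, so their coverage must be obtained through the IH's output clause rather than from the current stack. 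The coverage condition therefore has to be a reconstructibility property, closed under the restitch operation and stable under the monotone growth of $\sfrags$, and the real work is checking that each of the four rules preserves exactly this property — in particular threading it through the function premises of \ruleref{Application} and \ruleref{Var Non-Local}. The paper's remark that ``each new $k$-length fragment is added to $\sfrags$, so the result is direct'' captures the intuition that nothing is ever forgotten, but turning that intuition into a working induction invariant is the substance of the proof.
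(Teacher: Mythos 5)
Your proposal follows essentially the same route as the paper: the paper's proof likewise strengthens the statement to a simulation invariant (every $k$-window of the concrete stack lies in $\sfrags$, with the abstract stack a truncation of the concrete one), inducts on the concrete derivation, re-establishes coverage in \ruleref{Application} via the freshly unioned fragment $\lab \lCk \stk$, and appeals to fragment stitching for the \ruleref{Var} rules. The only substantive difference is that the paper's (sketched) invariant asserts the abstract stack is \emph{exactly} $\stk\lceil_k$, whereas you correctly observe that it can drift to a strictly shorter prefix after a variable-lookup pop (a fragment of length at most $k$ containing the popped frame can only restitch to length $k-1$) — a subtlety the paper's sketch glosses over and that your weaker prefix-plus-coverage formulation of the invariant is better positioned to handle.
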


\subsection{An All-Paths Formalization of Program Analysis}
\label{sec_all_paths_core}
The program analysis of Section \ref{sec_core_analysis} does not directly define an efficient implementation: it is both nondeterministic with multiple proof trees possible for the same program and has unbounded cycles of proof construction attempt which need to be pruned with an occurs check.  In standard presentations of higher-order program analyses that use an operational semantics \cite{MightAbstractInterpretersForFree,AAM}, the system of the previous section is as far as analyses have been specified in a rule-based system.

We show here that it is in fact possible to fully specify the program analysis solely via a proof system that returns a \emph{set} of possible results.  This gives a rule set that serves as a more direct basis for implementation.  It also constitutes a more formal specification as the entire analysis is captured in proof rules without recourse to pseudocode.  We believe this ``all paths'' pure-rule-based approach should also apply to standard program analysis presentations: it is not using any fundamental property of the pure demand approach.

Note that even without value widening there is still an analogous incremental aspect of how information is accumulated over the analyis run: multiple stack fragments will appear in the proof tree containing the same call site, each one indirectly adding potentially new information on what arguments are passed to the function since the argument lookup may be different in different fragments.

\begin{wrapfigure}{r}{.38\textwidth}%
    \begin{grammar}
            \grule[visited states]{\visited}{\{ \la \cycmark,\stk,\sfrags\ra, \dots \} }                    
             \grule[results]{\reval}{
              \{ (\gtfun\ \ev \gtarrow \expr)^{\stk}, \ldots \} 
  }
            \grule[cycle mark]{\cycmark}{
          \ev^\dbn \gor (\expr \ \expr)
}
            \grule[Seen stacks]{\sfrags}{
                        \{\stk, \ldots \}
            }
        \end{grammar}
    \caption{Stuctures for all-paths analysis}
    \label{fig_grammar_core_all_paths}
\end{wrapfigure}

We now present this implementation-based rule system, which we term an \emph{all-paths} presentation because the rules run all possible combinations of values in parallel. Along with running all paths, we also track and prune potentially cyclical proofs in the rules themselves by maintaining a set of visited states $\visited$ in the rules and using special \ruleref{Stub} rules to prune proof goals that are provably cyclic.

Figure \ref{fig_grammar_core_all_paths} presents the changed grammar elements needed for the all-paths analysis.  Value results $\reval$ are now \emph{sets} of function values $\{ (\gtfun\ \ev_1 \gtarrow \expr_1)^{\stk_1}, \ldots, (\gtfun\ \ev_n \gtarrow \expr_n)^{\stk_n}\}$ where each value represents one possible result of the analysis.  The set can also be empty for the case that there is a cycle (meaning the program diverges).  We use visited sets $\visited = \{\la \cycmark,\stk, \sfrags \ra, \dots \}$ in judgements to enforce an occurs check: we keep an audit trail that includes both every function call frame as well as every variable lookup request ($\cycmark$ in $\la \cycmark,\stk, \sfrags \ra$ is either a $(\expr_1\ \expr_2)$ or an $\ev^\dbn$).  Additionally, in this set we include the complete snapshot of the state, \ie the current $\stk$ and $\sfrags$.  We will use this audit trail in $\visited$ to see if we are asking the \emph{exact} same question previously asked in a parent node in the proof tree, and will ``stub'' out any proof goal if so as it is provably cycling.  

We now define the all-paths analysis relation, $\vdashaa$, with the rules of Figure \ref{fig_steps_analysis_all_paths}.  There, we overload $\bigcup$ on a set of pairs of sets to mean $\bigcup$ applied to each component of the pair and coalesced back as a pair.

\begin{definition}
  \label{def_core_all_paths_analysis}
  Relation $\stk, \sfrags, \visited \vdashaa \expr \steps \reval \globs \sfrags$ is defined by the rules of Figure \ref{fig_steps_analysis_all_paths}.
\end{definition}

Observe that \ruleref{Application} places the current state in the stack $\visited_\textrm{new}$ that then gets passed up the proof tree. Then, \ruleref{Application Stub} checks if the current state is already in $\visited$ and if so it stubs the proof with a leaf node.  The rules need to explicitly perform this stubbing or any program that contains unbounded recursion will not have any (well-founded) proof tree.  Along with stubbing out cycles, the analysis takes all paths in parallel; the rules realize this by returning sets of values. 

\begin{figure}
  \begin{mathpar}

   \bbrule{Value}{\ 
   }{\stk, \sfrags, \visited \vdashaa \gtfun\ \ev \gtarrow \expr \steps \{(\gtfun\ \ev \gtarrow \expr)^{\stk}\} \globs \sfrags
   }

  \bbrule{Var Local}{
        \stk_0 = ((\expr_1\ \expr_2) \lC \stk)\\
        \la\ev^0, \stk_0,\sfrags\ra \not\in \visited\\
        \stks = \suffixes((\expr_1\ \expr_2),\stk,\sfrags)\\
        (\reval_1,\sfrags_1) = \bigcup \{ (\reval_0,\sfrags_0) \mid \exists \stk_1 \in \stks.\  \stk_1, \sfrags, \visited \cup \{\la\ev^0, \stk_0, \sfrags\ra\}\vdashaa \expr_2  \steps \reval_0 \globs \sfrags_0 \}
   }{\stk_0, \sfrags, \visited \vdashaa \ev^0  \steps \reval_1 \globs \sfrags_1
   }

    \bbrule{Var Non-Local}{
        \stk_0 = ((\expr_1\ \expr_2) \lC \stk)\\
        \la\ev^\dbn, \stk_0, \sfrags\ra \not\in \visited\\
        \stks = \suffixes((\expr_1\ \expr_2),\stk,\sfrags)\\
        (\reval_1,\sfrags_1) = \bigcup \{ (\reval_0,\sfrags_0) \mid \exists \stk_1 \in \stks.\ 
        \stk_1, \sfrags, \visited \cup\{\la\ev^{\dbn+1}, \stk_0, \sfrags\ra\}  \vdashaa \expr_1 \steps \reval_0 \globs \sfrags_0 \}\\
        (\reval_2,\sfrags_2) = \bigcup \{ (\reval_0',\sfrags_0') \mid \exists \expr_f^{\stk_1} \in \reval_1.\ \stk_1, \sfrags_1, \visited \cup\{\la\ev^{\dbn+1}, \stk_0, \sfrags_1\ra\}  \vdashaa \ev^\dbn \steps \reval_0' \globs \sfrags_0'\}\\
      }{\stk_0, \sfrags, \visited \vdashaa \ev^{\dbn+1}  \steps \reval_2 \globs \sfrags_2
      }

      \bbrule{Var Stub}{
         \la\ev^\dbn,\stk, \sfrags\ra \in \visited\\
     }{\stk, \sfrags, \visited \vdashaa \ev^\dbn  \steps \emptyset \globs \sfrags
     }     

      \bbrule{Application}{
        \frm = (\expr\ \expr')\\
        \stk, \sfrags,  \visited \vdashaa \expr \steps \reval_1 \globs \sfrags_1\\
        \visited_{\textrm{new}}=\{\la \frm, \stk, \sfrags_1\cup \{\frm \lCk \stk\}\ra\}\\
        \visited_{\textrm{new}}\not\subseteq \visited\\
        {\begin{minipage}{4.5in}
          $(\reval_2,\sfrags_2) = \bigcup \Bigl\{ (\reval_0,\sfrags_0) \mid  (\gtfun\ \ev_1 \gtarrow \expr_1)^{\stk_1} \in \reval_1 \land \\
        \hspace*{6em} (\frm \lCk \stk), \sfrags_1 \cup \{\frm \lCk \stk\}, \visited \cup \visited_{\textrm{new}} \vdashaa \expr_1 \steps \reval_0 \globs \sfrags_0 \Bigr\}
        $\end{minipage}}
      }{\stk, \sfrags, \visited \vdashaa \frm \steps \reval_2 \globs \sfrags_2
      }

      \bbrule{App Stub}{\la (\expr\ \expr'), \stk, \sfrags\ra \in \visited
      }{\stk, \sfrags, \visited \vdashaa (\expr\ \expr') \steps \emptyset  \globs \sfrags
      }
   
  \end{mathpar}
  \caption{Core language all-paths program analysis rules}
  \label{fig_steps_analysis_all_paths}
\end{figure}

The all-paths system $\vdashaa$ is just merging all the possible nondeterministic proof trees of the single-path system $\vdasha$ into a single deterministic tree. Note that it treats the $\sfrags$ set slightly differently as it accumulates one set for all possible paths while the single-path rules accumulate an $\sfrags$ on a more accurate per-path basis.  This is analogous to the single-threaded store optimization of conventional analyses \cite{ShiversThesis}. In both our system and the conventional systems there is some loss of precision with a single-threaded store, but in return a significant speed-up is gained.   

We conjecture that the all-paths system produces in one go at least all the results of the per-path system (it may produce more results due to the potentially larger $\sfrags$).
\begin{restatable}[All-Paths Soundness]{conjecture}{aisaa}
  Letting $\reval$ be $\{(\gtfun\ \ev_1 \gtarrow \expr_1)^{\stk_1}, \ldots, (\gtfun\ \ev_n \gtarrow \expr_n)^{\stk_n}\}$, it is the case that for all $1\leq i \leq n$, if $[], \emptyset \vdasha \expr \steps (\gtfun\ \ev_i \gtarrow \expr_i)^{\stk_i} \globs \sfrags_i$ holds and $\expr$ has no other values under $\vdasha$, then $[], \emptyset \vdashaa \expr \steps \{(\gtfun\ \ev_1 \gtarrow \expr_1)^{\stk_1}, \ldots, (\gtfun\ \ev_n \gtarrow \expr_n)^{\stk_n},\ldots\} \globs \sfrags'$.
\end{restatable}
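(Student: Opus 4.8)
The plan is to prove the containment $\{(\gtfun\ \ev_1 \gtarrow \expr_1)^{\lab_1,\stk_1},\ldots,(\gtfun\ \ev_n \gtarrow \expr_n)^{\lab_n,\stk_n}\} \subseteq \reval$ by induction on a single $\vdasha$ derivation, after first isolating the two independent discrepancies between the two systems: the single-threaded treatment of $\sfrags$ in $\vdashaa$ versus the per-path $\sfrags$ of $\vdasha$, and the absence of an occurs check in $\vdasha$ versus the mandatory stubbing in $\vdashaa$.

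First I would establish monotonicity of $\vdasha$ in the seen-stacks component: if $\stk,\sfrags \vdasha \expr \steps \reval \globs \sfrags_v$ and $\sfrags \subseteq \sfrags'$, then $\stk,\sfrags' \vdasha \expr \steps \reval \globs \sfrags_v'$ for some $\sfrags_v' \supseteq \sfrags_v$ (with the \emph{same} $\reval$). This is a routine induction on the derivation: every side condition $\stk_0 \in \sfrags$ appearing in \ruleref{Var Local} and \ruleref{Var Non-Local} survives enlargement of $\sfrags$, and \ruleref{Application} only unions new fragments into $\sfrags$. This lemma lets us replace the per-path $\sfrags$ of a $\vdasha$ derivation by the larger merged $\sfrags$ carried along by $\vdashaa$; it is also exactly why the statement is phrased with a trailing ``$\ldots$'' and a possibly larger $\sfrags'$, since extra fragments only enable more stitching and hence more results.

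Second I would generalize the statement to arbitrary left-hand sides: if $\stk,\sfrags \vdasha \expr \steps \reval_0 \globs \sfrags_0$ via a derivation $T$, and $\visited$ is such that no recursive-call state of $T$ (the triples $\la \expr',\stk',\sfrags'\ra$ that $\vdashaa$ would insert into $\visited$) revisits a triple already in $\visited$, then $\stk,\sfrags,\visited \vdashaa \expr \steps \reval \globs \sfrags'$ with $\reval_0 \in \reval$. The induction is on $T$: \ruleref{Value} is immediate; for \ruleref{Application}, apply the IH to the function-lookup subderivation to conclude the all-paths set $\reval_1$ contains the particular closure $T$ used, then apply the IH to the body subderivation (with $\visited$ extended by the application state, which by hypothesis $T$ does not revisit) to conclude the corresponding member of the $\bigcup$ in the all-paths \ruleref{Application} rule contains $\reval_0$; the \ruleref{Var Local} and \ruleref{Var Non-Local} cases are analogous once one checks that the stitched stack $\stk \lC \stk'$ that $T$ selected lies in $\suffixes(\lab',\stk,\sfrags)$ (resp.\ $\suffixes(\lab_2,\stk,\sfrags)$), which holds because $\suffixes$ collects exactly those stitchings. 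Instantiating the generalization at $\stk=[]$, $\sfrags=\emptyset$, $\visited=\emptyset$ for each $\reval_i$ then gives the conjecture.

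The main obstacle is discharging the side hypothesis of the generalization, i.e.\ showing that each single-path result admits a $\vdasha$ derivation that $\vdashaa$ will not prematurely stub. A finite $\vdasha$ proof tree may legitimately revisit a state triple along a branch (Figure \ref{fig_steps_analysis} has no occurs check), and then the all-paths \ruleref{App Stub}/\ruleref{Var Stub} rules force the inner occurrence to $\emptyset$, dropping whatever value that subtree produced. The naive repair --- splicing the inner subtree over the outer occurrence --- is unsound because $\vdasha$ is nondeterministic, so the two occurrences of a repeated triple need not return the same value, and the enclosing \ruleref{Application}/\ruleref{Var Non-Local} step would then no longer typecheck. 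The route I would pursue is a normalization lemma: among all $\vdasha$ derivations of a fixed root judgement, a size-minimal one has no repeated triple along any root-to-leaf branch; proving it requires showing a genuine cycle can always be bypassed while preserving the \emph{root} result, plausibly by tracking how the looped-through value is consumed by its enclosing rule instance and re-deriving that instance from the bypassed subtree. This last step is delicate, which is precisely why the statement is recorded here only as a conjecture. A secondary, milder obstacle is that the $\sfrags$ recorded inside $\visited$ by $\vdashaa$ is the merged set rather than the per-path set, so the ``exact same question'' test in $\vdashaa$ is coarser than any notion intrinsic to $\vdasha$; monotonicity handles the values, but one must verify the visited-set comparisons line up, which I expect to be routine.
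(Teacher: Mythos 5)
The statement you are proving is recorded in the paper only as a \emph{conjecture}, and the paper's appendix gives only a proof outline, not a complete proof. Your proposal follows essentially the same route as that outline --- simulate each single-path $\vdasha$ derivation as one ``slice'' of the nondeterministic all-paths $\vdashaa$ derivation, with an auxiliary observation that enlarging $\sfrags$ only preserves or adds results --- and, crucially, you stop at exactly the same obstacle the authors stop at. The paper's sketch asserts that a \ruleref{Stub} rule can only fire when ``the exact same question is asked on the same path, indicating that we must be constructing an infinite proof tree''; your diagnosis of why this step is delicate is actually sharper than the paper's: because $\vdasha$ is nondeterministic, a repeated state triple along a branch of a \emph{finite} $\vdasha$ tree is not immediately contradictory, so one must show the repeated-state derivation can be normalized away without losing the root value, and the naive splice fails because the two occurrences need not return the same result. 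That is precisely the unproved normalization step the authors are implicitly relying on, and it is why they downgrade the statement to a conjecture. Your explicit monotonicity lemma for $\sfrags$ is a sensible addition that the paper's sketch leaves implicit (the main text only remarks that the single-threaded $\sfrags$ may produce \emph{more} results, which is the direction soundness needs).

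One detail worth flagging that neither you nor the paper's outline resolves: the authors' own margin note observes that in \ruleref{Var Non-Local} the $\expr_1$ (function) lookup subgoal is \emph{not} added to the visited set, so your generalized induction hypothesis --- phrased in terms of ``the triples that $\vdashaa$ would insert into $\visited$'' --- has to be checked carefully against which subgoals actually extend $\visited$ in Figure~\ref{fig_steps_analysis_all_paths}; the bookkeeping is asymmetric between the two recursive premises of that rule. In summary, your plan is a faithful and somewhat more precise account of the paper's own incomplete argument; it does not close the gap, but the paper does not claim to either.
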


\begin{lemma}[Termination]
  Fixing some $k \geq 1$, for all $\gexpr$, $[], \emptyset \vdashaa \gexpr \steps \reval$.
\end{lemma}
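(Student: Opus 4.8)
The plan is to prove this by well-founded recursion on the goal: exhibit, for every goal $\stk, \sfrags, \visited \vdashaa \expr \steps \cdot$ reachable from $[], \emptyset \vdashaa \gexpr \steps \cdot$, a finite derivation, by induction on a measure $\mu$ that strictly decreases at every premise of every rule. Since each rule has only finitely many premises — the $\bigcup$ in \ruleref{Var Local}, \ruleref{Var Non-Local}, and \ruleref{Application} ranges over the finite sets $\suffixes(\cdots)$ and over the finite result sets $\reval_1$ — it is enough, by K\"onig's lemma, to rule out infinite descending chains of goals.

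First I would establish a finiteness invariant: in any derivation rooted at $[], \emptyset \vdashaa \gexpr$, every call stack $\stk$ that appears has length at most $k$. This follows by induction over the rules — $\lCk$ truncates to length $k$; every stack added to $\sfrags$ has the form $\lab \lCk \stk'$, hence length $\leq k$; and $\suffixes(\lab,\stk,\sfrags)$ returns only suffixes $\stk \lC \stk'$ of elements of $\sfrags$, hence again length $\leq k$. Consequently every state component lives in a finite set: labels come from the finite set occurring in $\gexpr$; the set $\mathcal{S}$ of stacks of length $\leq k$ is finite; the possible $\sfrags$ form a subset of the finite $\mathcal{P}(\mathcal{S})$; the expressions $\expr$ that appear are drawn from the subexpressions of $\gexpr$ together with variable occurrences relabeled by one of the finitely many function-value labels; and hence the visited sets $\visited$, being subsets of the finite universe $\mathcal{U}$ of triples $\langle \expr, \stk, \sfrags \rangle$, range over a finite set. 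Write $N = |\mathcal{U}|$.

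The measure I would use is $\mu(\stk, \sfrags, \visited, \expr) = (\,N - |\visited|,\ \nu(\expr, \stk, \sfrags)\,)$ ordered lexicographically, with $\nu$ a secondary well-founded component. Checking the first component rule by rule: \ruleref{Value} and the two \ruleref{Stub} rules have no recursive premises; \ruleref{Var Local} recurses with $\visited$ enlarged by $\langle \ev^\lab,\stk_0,\sfrags\rangle$; the second premise of \ruleref{Var Non-Local} recurses with $\visited$ enlarged by $\langle \ev^\lab,\stk_0,\sfrags_1\rangle$; and both premises of \ruleref{Application} recurse with $\visited \cup \visited_{\textrm{new}}$, where the side condition $\visited_{\textrm{new}} \not\subseteq \visited$ forces a strict enlargement. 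So $N - |\visited|$ strictly decreases in every recursive call \emph{except} the first premise of \ruleref{Var Non-Local}, which passes $\visited$ and $\sfrags$ through unchanged; there $\nu$ must decrease.

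The hard part, and the technical heart of the proof, is defining $\nu$ so that it decreases when we pass from the variable-lookup goal $\ev^\lab$ at stack $\stk_0 = (\expr_1\ \expr_2)^{\lab_2} \lC \stk$ to the function-lookup goal $\expr_1$ at a stitched stack $\stk_1 \in \suffixes(\lab_2,\stk,\sfrags)$ — equivalently, showing that the finite graph of reachable goals has no cycle all of whose nodes carry the same $\visited$, since such a cycle could only be built from first-premise-of-\ruleref{Var Non-Local} steps. A plain stack-length measure fails because stitching can reintroduce popped frames, so one must appeal to lexical structure. I would argue that along such a chain the variable's nearest enclosing $\gtfun$ is the home scope of the call site just popped off the top of the stack, and that non-locality ($\ev \neq \ev_1$ in \ruleref{Var Non-Local}) forces the variable's binder to lie strictly outside that scope; together with the bound $k$ on stack length and the finiteness of $\sfrags$, this bounds the number of distinct scope-and-stack configurations the chain can pass through, so the chain terminates. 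One would then package this into a concrete $\nu$ — for instance built from the lexical depth of the binder of $\expr$ (when $\expr$ is a variable) plus a combinatorial count over reachable stitched stacks. A minor additional obligation is to check that \emph{some} rule always applies to every reachable goal, so the recursion never gets stuck: for an application goal this means reconciling \ruleref{Application}'s side condition $\visited_{\textrm{new}} \not\subseteq \visited$ with \ruleref{App Stub}'s trigger $\langle \lab,\stk,\sfrags\rangle \in \visited$, which works because $\sfrags$ only grows along derivations, and symmetrically for the \ruleref{Var} rules against \ruleref{Var Stub}. Given $\mu$ and this case analysis, well-founded recursion on $\mu$ constructs a finite derivation of $[], \emptyset \vdashaa \gexpr \steps \reval$ for some $\reval$.
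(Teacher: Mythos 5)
You have correctly reproduced the only skeleton that could work here --- establish that all components of a goal range over finite sets, then find a measure that decreases at every recursive premise --- and your bookkeeping is accurate: $\lCk$ and $\suffixes$ keep every stack at length at most $k$, so $\stk$, $\sfrags$, $\expr$, and $\visited$ all live in finite universes, and $|\visited|$ strictly grows at the premises of \ruleref{Var Local}, at the second premise of \ruleref{Var Non-Local}, and at both premises of \ruleref{Application}. For calibration: the paper states this lemma \emph{without proof}; the nearest relative is the proof sketch of Theorem~\ref{thm_poly_analysis}, which simply asserts that ``$\visited$ must be extended so the state will change'' on every root-to-leaf path. Your proposal is therefore more careful than anything in the paper, because you notice that this assertion is false for exactly one premise: the function lookup $\stk_1, \sfrags, \visited \vdashaa \expr_1 \steps \cdots$ in \ruleref{Var Non-Local} passes $\visited$ and $\sfrags$ through unchanged. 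The authors are aware of this --- a note in the source next to the All-Paths Soundness conjecture reads ``The $\expr_1$ lookup subgoal is not put in the visited set, need to address that,'' and proposes extending $\visited$ there as the fix --- so you have put your finger on the genuine soft spot.

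The gap is that your secondary measure $\nu$ does not exist as sketched, and the informal argument offered for it does not establish termination. Along a chain of first premises of \ruleref{Var Non-Local}, the next goal expression is the \emph{function position of the call site just popped}, which is an arbitrary subexpression of $\gexpr$ bearing no lexical-depth relation to the variable $\ev^\lab$ being looked up; the concrete-semantics intuition that non-local lookup climbs strictly outward through binders applies to the chain $\ev^\lab, \ev^{\lab_1}, \ldots$ of \emph{second} premises (where $\visited$ grows anyway), not to the first premises. Moreover ``this bounds the number of distinct scope-and-stack configurations the chain can pass through, so the chain terminates'' is backwards: in a finite goal graph an infinite chain must revisit a configuration, and since $\visited$ and $\sfrags$ are held fixed along these edges, a revisit is a genuine cycle that no \ruleref{Stub} rule interrupts. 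Such cycles are realizable --- with $k=2$, mutual recursion can place both $[\lab_1,\lab_0]$ and $[\lab_0,\lab_1]$ in $\sfrags$, after which stitching lets the stacks alternate $[\lab_1] \to [\lab_0] \to [\lab_1] \to \cdots$ while the goal expressions (the function positions of $\lab_1$ and $\lab_0$, both non-local variable occurrences in, e.g., curried code) also alternate. So no well-founded $\nu$ can be attached to these edges; the lemma needs either the rule repair the authors themselves suggest (add $\la\ev^\lab,\stk_0,\sfrags\ra$ to $\visited$ in the first premise as well, after which your primary component $N-|\visited|$ alone suffices and $\nu$ is unnecessary) or a substantially sharper argument ruling out such cycles. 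Your totality check (reconciling the stub guards with the side conditions) is a legitimate secondary obligation, but the first-premise cycle is the issue that must be resolved before the statement can be proved.
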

The above language with only functions leaves out several aspects of program analyses, namely how they approximate atomic data such as integers, and how they approximate recursive data structures such as trees.  We address these dimensions in the following section.

\subsection{An All-Paths Analysis for an Extended Language}
\label{sec_full_analysis}
In this section, we extend the core all-paths analysis of the previous section to the extended language grammar of Section \ref{sec_extended_language}.

Figure \ref{fig_Grammar_Extended_Analysis} presents the new language constructs for the analysis, extending or modifying those of Figure \ref{fig_grammar_extended}.  Results $\reval$ are now sets of more than just functions, they are sets of \emph{value atoms} $\revalz$ which includes records, functions, operators, etc. Additionally, the components of value atoms themselves may have multiple possibilities, so \eg the components of a binary operator $\revalz = \reval_1 \binop \reval_2$ are $\reval$ which are sets.  So for example  $\revalz = \{1\}+ \{4, (\{ 6 \} + \{ 7, 2 \})\}$ is possible and represents $\{5, 14, 9\}$.  In some of the examples below, we will pun and leave off the $\{\}$ on singleton sets.

\begin{wrapfigure}{r}{.62\textwidth}%
    \begin{grammar}
            \grule[value atoms]{\revalz}{
              (\reval \binop \reval) 
              \gor \reval \gtdot \lbl 
              \gor \lbl \gtin \reval
              \gor \reval^{\la\cycmark,\stk\ra}
              \gor \pathc \Vdash \reval
              \gline
              \gor (\gtfun\ \ev \gtarrow \expr)^{\stk}
              \gor \ob \lbl = \reval; \dots ; \lbl = \reval \cb
              \gor \stub^{\la\cycmark,\stk\ra}
              \gor \eint
              \gor \ebool
             }
             \grule[value results]{\reval}{
              \{ \revalz,\ldots,\revalz\}
             }
             \grule[cycle mark]{\cycmark}{
              \ev^\dbn \gor (\expr \ \expr)
    }
             \grule[path condition]{\pathc}{
              \reval = \ebool
             }
             \grule[visited state sets]{\visited}{ 
              \{ \la \cycmark,\stk,\sfrags\ra, \dots \}
             }
             \grule[Seen stacks]{\sfrags}{
              \{\stk, \ldots \}
             }
        \end{grammar}
    \caption{Extended language analysis grammar}
    \label{fig_Grammar_Extended_Analysis}
\end{wrapfigure}

\subsubsection*{Results That Are Recurrences}
There is a surprisingly natural manner in which recurrences over integer and record data can be extracted as results $\reval$. Arithmetic recurrences can then be translated to Constrained Horn Clauses (CHCs) and a solver can verify them.  We know of no existing analysis that also extracts recurrences, and the demand aspect of our formalism is one reason why it is so natural here.  We now outline informally how it works.

The first step is to label result values in the variable and application rules with their state, \ie their expression and current stack: $\reval^{\la\cycmark,\stk\ra}$.  If the analysis visits the same state $\la\cycmark,\stk\ra$ again deeper in the proof tree, we don't just stub the result and return an empty set as in the core system, we return an explicit \emph{stub value} $\stub^{\la\cycmark,\stk\ra}$ that is a recursive reference to the parent node also marked with $\la\cycmark,\stk\ra$.   With nondeterministic result sets plus recursive values, the analysis can infer unbounded ranges of data.  For example, the result value $\{ \texttt{0}, (\texttt{1} + \stub^{\la\cycmark,\stk\ra})\}^{\la\cycmark,\stk\ra}$ denotes a recurrence: the outer label ${\la\cycmark,\stk\ra}$ denotes the first time the state $\la\cycmark,\stk\ra$ is hit, and $\stub^{\la\cycmark,\stk\ra}$ denotes a stubbed recursive reference back to the top of this result because it is also labeled with the same $\la\cycmark,\stk\ra$.  The mathematical meaning of such a recurrence follows other forms of recursive definition: it is the unrolling of the recurrence, in this case $\{ \texttt{0}, (\texttt{1} + \{\texttt{0} , (\texttt{1} + \{\texttt{0}, (\texttt{1} + \dots)\})\})\}$, \ie here it is the non-negative integers.   Similarly, $\{ \texttt{0}, (2 + \stub^{\la\cycmark,\stk\ra})\}^{\la\cycmark,\stk\ra}$ would express the non-negative evens, and $\{ \ob \texttt{nil = }\ob\cb\cb, (\ob \texttt{hd = \texttt{1}, tl = }\stub^{\la\cycmark,\stk\ra}\cb)\}^{\la\cycmark,\stk\ra}$ would express an arbitrary-length list of \texttt{1}'s.  

We will define a function $\evalval$ in Definition \ref{def_evalval} below that extracts the actual values from these recursive result sets, for example $\evalval(\{ \texttt{0}, (\texttt{1} + \stub^{\la\cycmark,\stk\ra})\}^{\la\cycmark,\stk\ra})=\{0, 1, 2, \ldots\}$.  In the general case, such result values could be complex arithmetic recurrence relations that could be uncomputable.  In our implementation, we will compile them to Constrained Horn Clauses and automatically solve them using a CHC solver. 

To add one more level of precision, we also include path sensitivity in this system.  It can both serve to avoid executing conditional code for which the guard provably fails, and to allow even more precise results to be inferred.  These result values are of the form $\pathc \Vdash \reval$, meaning that if $\pathc$ holds then the result is $\reval$ and no result here if the condition fails. 

\subsubsection*{The Proof Rules}

Before defining the proof rules we will define the auxiliary relation $\evalval$ alluded to above to solve result values, producing a (possibly infinite) flat set.  We use $\evalval$ in the rules when concrete details of the result affect the next control step; this is only required for boolean conditions and function values.  Note that the implementation will use a (quickly) computable approximation to this uncomputable function.

\begin{definition}[Result Evaluation]
  \label{def_evalval}
  Define $\evalval(\reval)$ for the analysis to be $\evalval(\reval,\emptyset)$, which in turn is defined as the least fixed point of the monotone functional of Figure \ref{fig_evalvalanal}.
$\evalval$ is undefined for cases of type mismatch, \eg projecting a number or adding records.
\end{definition}

\begin{figure}
$$\begin{array}{rcl}
  \evalval(\{\revalz_1,\dots,\revalz_n\}, \map) & = & \bigcup_{1\leq i\leq n} \evalval(\revalz_i, \map)\\[1.2ex]
  \evalval(\reval^{\la \cycmark,\stk\ra}, \map) & = & \evalval(\reval,\map \cup \{\la \cycmark,\stk\ra \mapsto \reval\})\\
  \evalval(\stub^{\la \cycmark,\stk\ra}, \map) & = & 
    \begin{cases} \evalval(\map(\la \cycmark,\stk\ra), \map)\text{, provided }\map(\la \cycmark,\stk\ra) \text{ is defined}\\
    \{\eval \mid \eval \text{ is a value} \} \text{ otherwise}
  \end{cases}\\
  \evalval(\pathc \Vdash \reval, \map) & = & 
  \begin{cases}
     \evalval(\reval, \map)\text{, provided } \gttrue \in \evalval(\pathc,\map)\\
     \{\} \text{ otherwise}
  \end{cases}\\
  \evalval(\reval_1 \binop \reval_2, \map) & = & \{ \eval \mid \eval_1 \in \evalval(\reval_1, \map) \text{ and } \eval_2 \in \evalval(\reval_2, \map) \text{ and } \eval = \eval_1 \binop \eval_2\}\\
  \evalval(\reval \gtdot \lbl_i, \map) & = & \{ \eval_i \mid \ob \lbl_1 \gteq \eval_1; \dots; \lbl_i \gteq \eval_i; \dots;\lbl_n \gteq \eval_n \cb \in \evalval(\reval, \map)\}\\  
  \evalval(\lbl \gtin \reval, \map) & = & \{ \ebool \mid \ob \lbl_1 \gteq \eval_1; \dots;\lbl_n \gteq \eval_n \cb \in \evalval(\reval, \map)\text{ and } \\
  &&\ebool = \gttrue \text{ iff } \lbl \in \{\lbl_1,\dots,\lbl_n\} \}\\
  \evalval(\ob \lbl_1 \gteq \reval_1; \dots; \lbl_n \gteq \reval_n \cb, \map) & = & \{ \ob \lbl_1 \gteq \eval_1; \dots; \lbl_n \gteq 
  \eval_n \cb \mid \forall 1 \leq i \leq n.\ \eval_i \in \evalval(\reval_i, \map)\}\\
  \evalval((\gtfun\ \ev \gtarrow \expr, \map)^{\stk}, \map) & = & \{(\gtfun\ \ev \gtarrow \expr, \map)\}\\  
  \evalval(n, \map) & = & \{ n\}\\
  \evalval(\ebool, \map) & = & \{ \ebool\}
\end{array}
$$\caption{The definition of $\evalval$ for the analysis}\label{fig_evalvalanal}
\end{figure}

The $M$ mapping in the figure caches all the recurrence parents and the $\stub$ evaluation unrolls the recurrence.  For a final program result, all stubs will have a parent in the tree, but when evaluating conditional expressions mid-evaluation some stubs may not yet have parents; in this case the value is arbitrary.  Note that there could be unbounded cycles so we specify we are taking the least fixed point of the above monotone functional.  For this reason, $\evalval$ is not computable.  We will cover how we approximate $\evalval$ when we describe our implementation.
We may now define the extended all-paths analysis rules.

\begin{definition}
  \label{def_extended_all_paths_analysis}
  Relation $\stk, \sfrags, \visited \vdashea \expr \steps \reval \globs \sfrags$ is defined by the rules of Figure \ref{fig_steps_analysis_extended}.
\end{definition}

\begin{figure}
  \begin{mathpar}

   \bbrule{Value}{\eval \text{ is not a function}
   }{\stk, \sfrags, \pathc, \visited \vdashea \eval \steps \{\eval\} \globs \sfrags
   }

   \bbrule{Value Fun}{\ 
   }{\stk, \sfrags, \pathc, \visited \vdashea \gtfun\ \ev \gtarrow \expr \steps \{(\gtfun\ \ev \gtarrow \expr)^{\stk} \}\globs \sfrags
   }

   \bbrule{Var Local}{
    \stk_0 = ((\expr_1\ \expr_2) \lC \stk)\\
    \la\ev^0, \stk_0,\sfrags\ra \not\in \visited\\
    \stks = \suffixes((\expr_1\ \expr_2),\stk,\sfrags)\\
    (\reval_1,\sfrags_1) = \bigcup \{ (\reval_0,\sfrags_0) \mid \exists \stk_1 \in \stks.\  \stk_1, \sfrags, \pathc, \visited \cup \{\la\ev^0, \stk_0,\sfrags\ra\}\vdashea \expr_2  \steps \reval_0 \globs \sfrags_0 \}
    }{\stk_0, \sfrags, \pathc, \visited \vdashea \ev^0  \steps \{\reval_1^{\la\ev^0, \stk_0\ra}\} \globs \sfrags_1
    }

      \bbrule{Var Non-Local}{
        \stk_0 = ((\expr_1\ \expr_2) \lC \stk)\\
        \la\ev^{\dbn+1}, \stk_0,\sfrags\ra \not\in \visited\\
        \stks = \suffixes((\expr_1\ \expr_2),\stk,\sfrags)\\
        (\reval_1,\sfrags_1) = \bigcup \{ (\reval_0,\sfrags_0) \mid \exists \stk_1 \in \stks.\ 
        \stk_1, \sfrags, \pathc, \visited \cup \{\la\ev^{\dbn+1}, \stk_0,\sfrags\ra\} \vdashea \expr_1 \steps \reval_0 \globs \sfrags_0 \}\\
        (\reval_2,\sfrags_2) = \bigcup \{ (\reval_0',\sfrags_0') \mid \exists \expr_f^{\stk_1} \in \evalval(\reval_1).\ \stk_1, \sfrags_1, \pathc, \visited \cup \{\la\ev^{\dbn+1}, \stk_0,\sfrags_1\ra\} \vdashea \ev^{\dbn} \steps \reval_0' \globs \sfrags_0'\}\\
      }{\stk_0, \sfrags, \pathc, \visited \vdashea \ev^{\dbn+1}  \steps \reval_2 \globs \sfrags_2
      }

      \bbrule{Var Stub}{
         \la\ev^\dbn,\stk,\sfrags\ra \in \visited\\
     }{\stk, \sfrags, \pathc, \visited \vdashea \ev^\dbn  \steps \{ \stub^{\la\ev^\dbn,\stk\ra} \} \globs \sfrags
     }
         
      \bbrule{Operation}{
        \stk, \sfrags, \pathc, \visited \vdashea \expr_1 \steps \reval_1  \globs \sfrags_1\\
        \stk, \sfrags_1, \pathc, \visited \vdashea \expr_2 \steps \reval_2  \globs \sfrags_2
      }{\stk, \sfrags, \pathc, \visited \vdashea \expr_1 \binop \expr_2 \steps \{\reval_1 \binop \reval_2\} \globs \sfrags_2
      }

      \bbrule{Record Value}{
        \forall i \leq n. \ \stk, \sfrags, \pathc, \visited \vdashea \expr_i \steps \reval_i \globs \sfrags_i
      }{\stk, \sfrags, \pathc, \visited \vdashea  \ob \lbl_1 \gteq \expr_1; \dots; \lbl_n \gteq \expr_n \cb \steps \{ \ob \lbl_1 \gteq \reval_1; \dots; \lbl_n \gteq \reval_n \cb \} \globs \bigcup_i \sfrags_i
      }

      \bbrule{Record Project}{
        \stk, \sfrags, \pathc, \visited \vdashea  \expr \steps \reval_0 \globs \sfrags_1\\
      }{\stk, \sfrags, \pathc, \visited \vdashea  \expr\gtdot \lbl \steps \{\reval_0 \gtdot \lbl \} \globs \sfrags_1
      }

      \bbrule{Record Inspect}{
        \stk, \sfrags, \pathc, \visited \vdashea  \expr \steps \reval_0 \globs \sfrags_1\\
      }{\stk, \sfrags, \pathc, \visited \vdashea  \lbl \gtin \expr \steps \{\lbl \gtin \reval_0 \} \globs \sfrags_1
      }

      \bbrule{Conditional}{
        \stk, \sfrags, \pathc, \visited \vdashea \expr \steps \reval_\textrm{cond} \globs \sfrags_0\\
        \ebools = \evalval(\pathc \Vdash\reval_\textrm{cond})\\
        \forall \ebool \in \ebools.\ \stk, \sfrags, (\reval_\textrm{cond} = \ebool), \visited \vdashea \expr_{\ebool} \steps \reval_\ebool \globs \sfrags_{\ebool}\\
      }{\stk, \sfrags, \pathc, \visited \vdashea \expr \gtquestion \expr_{\gttrue} \gtcolon \expr_{\gtfalse} \steps \bigcup_{\ebool \in \ebools}\{\reval_\textrm{cond} = \ebool \Vdash \reval_\ebool\}  \globs \cup_{\ebool \in \ebools} \sfrags_{\ebool} \cup \sfrags_0
      }

      \bbrule{Application}{
        \frm = (\expr\ \expr')\\
        \stk, \sfrags, \pathc, \visited\vdashea \expr \steps \reval_1 \globs \sfrags_1\\
        \visited_{\textrm{new}}=\{\la \frm, \stk,\sfrags_1\cup \{\frm \lCk \stk\}\ra\}\\
        \visited_{\textrm{new}}\not\subseteq \visited\\
        {\begin{minipage}{4.5in}
          $(\reval_2,\sfrags_2) = \bigcup \Bigl\{ (\reval_0,\sfrags_0) \mid  (\gtfun\ \ev_1 \gtarrow \expr_1)^{\stk_1} \in \evalval(\reval_1) \land \\
        \hspace*{6em} (\frm \lCk \stk), \sfrags_1 \cup \{\frm \lCk \stk\}, \visited\cup \visited_{\textrm{new}}\vdashea \expr_1 \steps \reval_0 \globs \sfrags_0 \Bigr\}
        $\end{minipage}}
      }{\stk, \sfrags, \pathc, \visited \vdashea (\expr\ \expr') \steps \{\reval_2^{\la \frm, (\frm \lCk \stk)\ra}\} \globs \sfrags_2
      }

      \bbrule{App Stub}{\la (\expr\ \expr'), \stk, \sfrags\ra \in \visited
      }{\stk, \sfrags, \pathc, \visited \vdashaa (\expr\ \expr')  \steps \{\stub^{\la(\expr\ \expr'),\stk\ra}\}  \globs \sfrags
      }
   
  \end{mathpar}
  \caption{Extended language program analysis rules}
  \label{fig_steps_analysis_extended}
\end{figure}

The rules follow the core all-paths analysis of Figure \ref{fig_steps_analysis_all_paths} for function application and variable lookup; note that we do not need to stub the new operations as all unbounded control flows must pass through variable lookup and function application.  The primary difference now is while in the core system we could just return an empty set of results when the proof was cycling, here we need to not simply prune away such sub-derivations, we instead need to return a stub $\stub^{\la\cycmark,\stk\ra}$ to explicitly mark the point that the derivation cycled and to allow us to extract recurrences.

The \ruleref{Conditional} rule deserves some explanation.  The set $\ebools$ is a subset of $\{\gttrue,\gtfalse\}$ and is the result of $\evalval$-uating the conditional result to extract which condition(s) may hold.  Only the conditional branches in $\ebools$ will fire, giving path-sensitivity. The path conditions $\reval_\textrm{cond} = \ebool$ are added to the assumptions indicating the constraint placed on $\reval_\textrm{cond}$.  Note that the path condition is only useful in the case that $\ebools$ is not a singleton: if $\ebools$ is say $\{\gttrue\}$ for example then the constraint will be $\reval_\textrm{cond} = \gttrue$ which is direct from $\evalval(\pathc \Vdash \reval_\textrm{cond}) = \{\gttrue\}$.  Path constraints $\reval_\textrm{cond} = \ebool \Vdash \reval_\ebool$ are added to the results to allow us to extract more precise recurrences.



The classic program analysis for higher-order functional languages is $k$CFA \cite{ShiversThesis}, where $k$ is the number of frames of call stack preserved.  It is known to be exponential for $k\geq 1$ \cite{kCFAEXPTIME}.  $1$CFA is exponential because even with only one stack frame of reference the propagation of the environment forward will be enough to cause a chain of $n$ calls to have $2^n$ branches that need to be explored by the analysis.  We similarly may obtain a blowup of branches due to the number of different stack stitchings possible, and so our algorithm also appears to be exponential-time. In the next section, we will cover how the analysis performs in practice.

\section{Implementation}
\label{sec_implementation}

In this section, we describe the implementation of the interpreter and the program analysis.  All implementations are in OCaml.  A detailed walk-through of the implementation code can be found in \infull{Appendix \ref{sec_implement_appendix}}\inshort{Appendix B of the Supplementary Material}.

\subsection{The Interpreter Implementation}
\label{sec_interp_impl}
We have implemented an interpreter for the operational semantics of the extended language of Section \ref{sec_extended_language} that passes a suite of tests, providing further evidence of the soundness of the system beyond Lemma \ref{thm_v_e}.  Our implementation includes a cache mapping $(\stk,\expr) \mapsto \reval$ to avoid repeating the same lookup question, and makes the runtime complexity roughly comparable to a standard interpreter implementation.  Our interpreter is only a proof of concept to verify correctness and is not likely useful as an implementation methodology due to the potential space blowup.

\subsection{The Program Analysis Implementation}

Our primary implementation effort has been the program analysis for the extended language found in Section \ref{sec_full_analysis}.  We report here on the current state of our proof-of-concept implementation.

We first describe how we simplify results $\reval$ and how arithmetic and boolean recurrences are solved by the Spacer CHC solver \cite{SpacerTutorial22}. 

\subsubsection{Simplification of Results}
\label{sec_impl_simpl_results}


Simplification is important for several reasons.  Along with the fact that the unsimplified results are unreadable, the CHC solver described in the next subsection will be many times more efficient if fed simplified clauses.  So, we have implemented a simplification function in OCaml that takes in an $\reval$ and produces an equivalent but simplified version $\reval'$.  Simplification is also critical because, while Spacer can deal with constraints involving integers or booleans, recursive records are not handled by Spacer and so we also need to perform record projections appearing in $\reval$. The challenge of simplification is to avoid unbounded computation; the $\evalval$ function of Section \ref{sec_full_analysis} is uncomputable due to the unbounded recurrences.  

We implement simplification as a term rewriting system that includes the following rewrites:

\begin{enumerate}
    \item Non-recurrence simplification: $\reval$ simplifies to $\evalval(\reval)$ if $\reval$ contains no stubs 
    \item Constant folding: \eg \texttt{2 + 5} simplifies to \texttt{7}
    \item Associativity: \eg \texttt{1 + (2 - $\stub^{\la\cycmark,\stk\ra}$)} simplifies to \texttt{(1 + 2) - $\stub^{\la\cycmark,\stk\ra}$}
    \item Record projection: \eg \camlil!{hd = $$ 1, tl = $$ {nil = $$ 0$$}}.hd! simplifies to \texttt{1}
    \item Unrolling of recurrences: $(\stub^{\la\cycmark,\stk\ra})\gtdot \lbl$ expands to $\reval_0 \gtdot \lbl$ if $\reval_0$ has label $\la \cycmark,\stk\ra$ in the original $\reval$
    \item Distribution of operations into sets: \eg $\{$\texttt{2, 4$\}$ + $\{$1, 3}$\}$ simplifies to \texttt{$\{$2 + 1, 2 + 3, 4 + 1, 4 + 3$\}$}
    \item Merging of cycles: \eg $\reval_1^{\la\cycmark,\stk\ra} + \reval_2^{\la\cycmark,\stk\ra}$ simplifies to $(\reval_1 + \reval_2)^{\la\cycmark,\stk\ra}$ 
    \item Merging of path conditions: \eg $(\pathc\Vdash\reval_1) + (\pathc\Vdash\reval_2)$ simplifies to $\pathc\Vdash(\reval_1 + \reval_2)$ 
\end{enumerate}

We limit unrolling of recurrences to only the case where the result of $(\stub^{\la\cycmark,\stk\ra})\gtdot \lbl'$ or $\lbl' \gtin(\stub^{\la\cycmark,\stk\ra})$ is required (to obtain either a function or boolean result for \ruleref{Conditional}, \ruleref{App}, or \texttt{letassert}).  In order to avoid unbounded expansion, we mark each syntactic $\expr \gtdot \lbl$ and only expand each projection occurrence once.  This will allow, \eg \texttt{(($\stub^{\la\cycmark,\stk\ra}$).tl).hd} to simplify to \texttt{1} in a context of a parent label $(\ob \texttt{hd = \texttt{1}, tl = }\stub^{\la\cycmark,\stk\ra}\cb)^{\la\cycmark,\stk\ra}$ since each projection is syntactically distinct: we can unroll the record twice.  Associativity can only be applied in one direction to avoid nontermination.


The simplifier still needs refining and is one reason why the analysis is sometimes slow.

\subsubsection{Evaluating Arithmetic Recurrences via CHCs}
\label{sec_eval_recur_chcs}

Integer recurrences are not generally decidable, but a heuristic solver for these recurrences can be constructed by translating the $\evalval$ definition for a fixed $\reval$ to a set of Constrained Horn Clauses (CHCs).  We define the translation function here; in the implementation, we use Z3's CHC solver, Spacer \cite{SpacerTutorial22}, to determine if the clauses produced by our translator have a solution.  Spacer is just a heuristic that may not terminate, so it needs to be run with a timeout, and in the case of timeout we must fall back to an unconstrained result such as all integers.

Assume given a fixed $\reval$.  We assume there is a function $\id(\reval_0)$ that, for each non-stub sub-expression occurrence $\reval_0$ in $\reval$, returns a unique ID $\glab$.  For the parent and stub labeled nodes, we require $\id(\reval_0^{\la \cycmark,\stk\ra})=\id(\stub^{\la \cycmark,\stk\ra})$. We use notation $\vec\pathc$ for a vector of path conditions, $\vec\pathc = (\pathc_1, \dots, \pathc_n)$.  We notate truncation of vectors as $(\pathc_1, \dots, \pathc_m,\ldots,\pathc_n)\lceil^m = (\pathc_1, \dots, \pathc_m)$.

\begin{definition}[$\tochc$]
  $\tochc(\reval,\vec{\pathc},P)$ is defined by cases on $\reval$/$\revalz$ in Figure \ref{fig_tochc}.  
    The result is undefined for cases of type mismatch, \eg adding booleans.  It is also undefined for records and record operations as the CHC clauses are only used for numerical and boolean constraints.  We define the top-level $\tochc(\reval)$ as 
    $$\tochc(\reval)
    = \{\forall x_{\id(\reval_1)}, \ldots, x_{\id(\reval_n)}.\phi \mid \phi \in \tochc(\reval,\gttrue,\emptyset) \land {\sc FV}(\phi)=x_{\id(\reval_1)}, \ldots, x_{\id(\reval_n)}\}$$
\end{definition}

\begin{figure}   
    $$\begin{array}{rcl}
      \tochc(\reval,\vec{\pathc},P) & = & \bigcup_{1\leq i \leq n}\{  X_{\id(\revalz_i)}(x_{\id(\revalz_i)}) \land \pathcond(\vec\pathc) \Rightarrow X_{\id(\reval)}(x_{\id(\revalz_i)})\}\cup \\ && \bigcup_{1\leq i\leq n} \tochc(\revalz_i,\vec{\pathc},P)\\
      &&\text{ where }\reval = \{\revalz_1,\dots,\revalz_n\}\\[1.5ex]
      \tochc(\reval^{\la \cycmark,\stk\ra}, \vec{\pathc},P) & = & \tochc(\reval,\vec{\pathc},P\cup \{\la \cycmark,\stk\ra \})\\
      \tochc(\stub^{\la \cycmark,\stk\ra}, \vec{\pathc},P) & = & 
      \begin{cases} \{\} \text{ provided } \la \cycmark,\stk\ra \in P\\
      \{ X_{\la \cycmark,\stk\ra}(x_{\la \cycmark,\stk\ra}) \} \text{ otherwise}
    \end{cases}\\
      \tochc(\stub^{\la \cycmark,\stk\ra}, \vec{\pathc},P) & = & \{\}\\
      \tochc(\reval = \ebool \Vdash \reval_0, \vec{\pathc},P) & = &  \tochc(\reval_0,(\vec{\pathc}, \reval = \ebool)) \cup \tochc(\reval,\vec{\pathc},P)\\
      \tochc(\reval_1 \binop \reval_2, \vec{\pathc},P) & = &  = \{X_{\id(\reval_1)}(x_{\id(\reval_1)}) \land X_{\id(\reval_2)}(x_{\id(\reval_2)}) \land \pathcond(\vec\pathc) \Rightarrow \\
      && X_{\id(\reval)}(x_{\id(\reval_1)} \binop x_{\id(\reval_2)})\}\cup \tochc(\reval_1,\vec{\pathc},P) \cup \tochc(\reval_2,\vec{\pathc},P)\\
      %
      %
      %
      \tochc((\gtfun\ \ev \gtarrow \expr)^{\cycmark,\stk}, \vec{\pathc},P) & = & \{\}\\  
      \tochc(n, \vec{\pathc},P) & = & \{ \pathcond(\vec\pathc) \Rightarrow X_{\id(\revalz)}(n)\}\text{ where }\revalz = n\\
      \tochc(\ebool, \vec{\pathc},P) & = & \{ \pathcond(\vec\pathc) \Rightarrow X_{\id(\revalz)}(\ebool)\}\text{ where }\revalz = \ebool\\
      \pathcond(\reval_1 = \ebool_1, \dots, \reval_n = \ebool_n) & = & X_{\id(\reval_1)}(x_{\id(\reval_1)}) \land \ldots \land X_{\id(\reval_n)}(x_{\id(\reval_n)}) \land \\
      && (x_{\id(\reval_1)} \gteq \ebool_1) \land \ldots \land (x_{\id(\reval_n)} \gteq \ebool_n)\\
    \end{array}
    $$
    \caption{$\tochc$ translation of results $\reval$ to Constrained Horn Clauses}
    \label{fig_tochc}
  \end{figure}

In this definition, the $x_{\id(\reval_i)}$'s are the logic variables for graph point $\reval_i$ and the $X_{\id(\reval_i)}$'s are the corresponding predicates for what values those points can take on: for example, $X_{\id(\reval_i)}(5)$ would mean that sub-expression $\reval_i$ has value 5. 
A key aspect of the translation is how recurrences will translate to self-referential clauses; we do so by assigning the parent and stub the same label so they use the same underlying predicate.   

The $\pathc \Vdash$ in the $\reval$ guard the return values with the condition they were under; they are added as preconditions in the $\tochc$ clauses, which will allow the solver to take them into account.

Given this translation, to approximate $\evalval(\reval_\textrm{cond})$ in the \ruleref{Conditional} rule, two CHC queries should be performed.  Along with the above constraints $\tochc(\reval_\textrm{cond})$, we first issue a query with the additional clause $X_{\id(\reval_\textrm{cond})}(\gtfalse) \Rightarrow \gtfalse$; if this query is satisfiable, that means we are allowed to only evaluate the $\gttrue$ branch as the $\gtfalse$ branch can never hold.  If this query is unsatisfiable, all it means is we cannot only evaluate the $\gttrue$ branch.  The check as to whether we are allowed to only evaluate the $\gtfalse$ branch is the mirror image.


\subsubsection{Overview of the Analysis Implementation}
\label{sec_overview_analysis_impl}

Our implementation largely aims to follow the rules of Section \ref{sec_full_analysis}, replacing the uncomputable $\evalval$ with the CHC solver described in the previous subsection. The stack fragments $\sfrags$ are implemented using monadic state over an immutable set, which is more efficient than an imperative set since we need to checkpoint the current $\sfrags$ in the visited set $\visited$ per the rules.  We fix $k$ to $2$ as that is the minimal $k$ for which stitching is effective.  We include an additional \texttt{letassert} syntax in our language that we can use to write assertions on result values, which Spacer can then verify.  More details of the implementation are found in \infull{Appendix \ref{sec_implement_appendix}}\inshort{Appendix B of the Supplementary Material}.

One optimization we can perform is not reflected in the rules.  In practice, a non-local variable lookup can get stuck (meaning it will not be included in the result set) due to the inaccuracy of stack stitching in the all-paths analysis.  For example, $\texttt{x}^1$ is non-sensical to look up in \ruleref{Var Non-Local} for the case that the stack stitched has a top stack frame which is never invoking a function where \texttt{x} is a local variable.  To keep the analysis from wasting time going down such spurious paths, we precompute for each function a set of variables that can be looked up at that point.  To give an example, consider $(\gtfun\ \texttt{x} \gtarrow (\gtfun\ \texttt{y} \gtarrow (\texttt{x}^1\ \texttt{y}^0))^{\{\reallab{$\texttt{x}^1$},\ \reallab{$\texttt{y}^0$}\}})^{\{\reallab{$\texttt{x}^0$}\}}$.  Specifically, the set of variables carried by the inner function, \{\ \reallab{$\texttt{x}^1$},\ \reallab{$\texttt{y}^0$}\ \}, says that if at any point this function is invoked in the top frame of the stitched stack, then the non-local variable being looked up must be $\texttt{x}^1$ for this to be a valid stack with which to continue the analysis.  If this test fails the path is provably spurious and is pruned.

\subsubsection{Results Produced by the Analysis}
\label{sec_analysis_results}

\begin{wrapfigure}{r}{.5\textwidth}
    \ 
  \scriptsize
 \begin{camlnonum}
         let id = fun self $\gtarrow$ fun n $\gtarrow$
           if n = 0 then 0 
           else 1 + self self (n - 1) 
         in letassert r = id id 10 in r >= 2 
 \end{camlnonum}
    \centering
    \includegraphics[width=.25\linewidth]{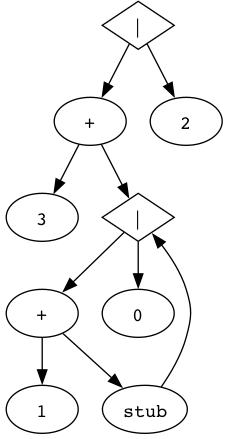}
  \caption{A recursive integer \texttt{id} example and its result}
  \label{fig_id}
\end{wrapfigure}

Here we summarize some of the results produced by the current implementation. 

Figure \ref{fig_id} gives the source code of a simple recursive identity function and a graph of the result automatically produced by the analysis showing the recurrences: both the \texttt{stub} from-node and the parent to-node on the back edge have the same $\la\cycmark,\stk\ra$. The path conditions are elided from the result as they do not refine it in this case. Path sensitivity does refine the running of the analysis; for example, 0 and 1 are ruled out as results since before the analysis hits a cycle (\eg a previously visited state in $\visited$), Spacer can show that \texttt{10} and \texttt{10 - 1} are both non-zero and so the base case can be skipped in the \ruleref{Conditional} rule for the first two recursions.  Spacer can prove that the result here is greater than or equal to two based on the recurrence result extracted: the \camlil!letassert! in the source is fed into Spacer and is reported satisfiable.  Note that the \camlil!let! syntax is sugar that we inline in the frontend.

\begin{wrapfigure}{r}{.5\textwidth}
  \scriptsize
\begin{camlnonum}
         let id = (fun xx $\gtarrow$ xx) in
         let map = (fun f $\gtarrow$ fun l $\gtarrow$
           let lp = fun self $\gtarrow$ fun lst $\gtarrow$
             if not (hd in lst) then {}
             else ({ hd = id f (lst.hd); 
                     tl = self self (lst.tl) })
           in
           lp lp l) in
         map (id (fun b $\gtarrow$ 1 + b)) 
                ({ hd = 7; tl = { 
                   hd = 8; tl = { 
                   hd = 9; tl = {} } } })
\end{camlnonum}
   \centering
    \includegraphics[width=.5\linewidth]{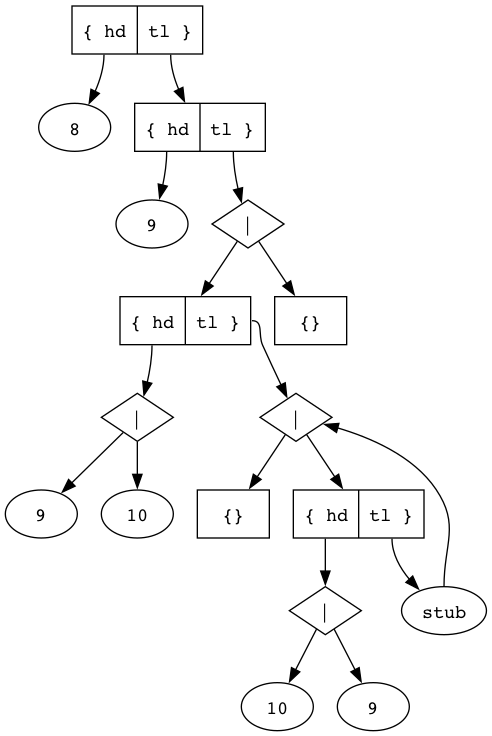}
  \caption{A \texttt{map} example and its result}
  \label{fig_map}
\end{wrapfigure}

The current proof-of-concept implementation quickly completes many examples but still needs refinement before being practically useful. For an idea of the performance, we compare it with the benchmarks run on DDPA \cite{DDPA} and P4F \cite{P4F}, two state-of-the-art higher-order program analyses.

Since neither P4F nor DDPA has any accuracy on integer value ranges, we also implemented a simplified version of the pure demand program analysis that analogously conflates all integer values, and also does not call out to Z3/Spacer to solve integer recurrences as does the full analysis described in the previous section.

Table \ref{tab_benchmarks} presents the results; we only include in the table the benchmarks that our current grammar supports and that appears in either of these two papers.  Of the 19 benchmarks listed in DDPA \cite[Figure 33]{DDPA}, there are three that fall outside of our language syntax, so there are 16 remaining we can compare with.  Of these 16, 10 also appear in P4F \cite[Figures 2 \& 3]{P4F}, and we were also able to run P4F on the other six not appearing directly in that paper.  To produce comparable data we downloaded the artifacts from the two above cited papers and re-ran all the benchmarks in those papers on our hardware.  We used the OCaml {\tt Core\_bench} library \cite{Core_bench} to run our own pure demand analysis benchmarks.  In the table, the "Full System" includes the recurrence solving, and the "Simplified" system has no recurrences and conflates all integers.

Inspecting the table, one can observe a striking bimodal aspect to the runtime performance: the pure demand system running times were generally either \emph{much} faster or \emph{much} slower than DDPA/P4F.  We believe this reflects how the stack stitching approach is a fundamentally different algorithm compared to either P4F or DDPA, both of which rely on a similar stack context model and so end up being relatively  similar in performance.

We now discuss some of the results in more detail.  The ack, tak, and cpstak benchmarks are all EXPTIME functions and we see a very strong exponential blowup for the pure demand analysis on exponential functions: neither version of the pure demand analysis terminates on any of these functions.  On all of the non-exponential benchmarks, our performance is either better or much better, the only exception being loop2-1, which is a complex higher-order doubly-nested loop (it is faster in the full system because the expressive flow-sensitivity there helps speed up the computation).

In terms of expressiveness of the analysis results, the pure demand analysis is either more accurate than the DDPA benchmarks (seven), as accurate as them (three), or less accurate than them (three) in terms of the benchmarks that terminate within five minutes; see the last column of Table \ref{tab_benchmarks} for the details.  For instance, our analysis derives $\gttrue$ from the blur benchmark, whereas DDPA derives $\{\gttrue,\gtfalse\}$.  From benchmarks like facehugger and map, our analysis derives integer recurrences while DDPA is only able to deduce that the result is an integer.  For the sat benchmarks DDPA can accurately solve satisfiability on the particular formulae of those benchmarks whereas our full system cannot. DDPA is at least as accurate as P4F in all cases so we elide a comparison with it.  Note that the DDPA benchmarks in the paper were run both with no context-sensitivity and with maximal context-sensitivity; we are only comparing with the maximally context-sensitive (\ie maximally accurate) version here. 

\begin{table}[h]
  \newcolumntype{C}[1]{>{\centering\let\newline\\\arraybackslash\hspace{0pt}}m{#1}}
  \newcolumntype{R}[1]{>{\raggedleft\let\newline\\\arraybackslash\hspace{0pt}}m{#1}}

  \begin{center}
    \begin{tabular}{ R{5em} || C{4.5em}| C{4.5em} || C{3.5em} | C{3.5em} || C{4.5em} } 
      Benchmark
      & \begin{tabular}{@{}c@{}} Pure \\ Demand \\ Full \\ System \end{tabular}
      & \begin{tabular}{@{}c@{}} Pure \\ Demand \\ Simplified \\ System \end{tabular}
      & DDPA & P4F
      & \begin{tabular}{@{}c@{}} Full \\ System \\ Accuracy \\ vs. DDPA \end{tabular}\\
      \hline
      blur       & 8.3     & 1.6      & 99.3    & 125.4   & greater  \\ 
      eta        & 0.1     & 0.1      & 20.4    & 61.4    & equal    \\ 
      facehugger & 54.7    & 1.1      & 89.2    & 109.1   & greater  \\ 
      kcfa-2     & 0.1     & 0.1      & 60.4    & 94.9    & equal    \\ 
      kcfa-3     & 0.2     & 0.2      & 90.2    & 250.1   & equal    \\ 
      mj09       & 4.2     & 0.1      & 45.3    & 72.8    & greater  \\ 
      loop2-1    & 1188    & 11106.4  & 198.8   & 121.1   & greater  \\ 
      map        & 21      & 1.5      & 220.0   & 142.8   & greater  \\ 
      primtest   & 18.1    & 39.3     & 1266.2  & 640.3   & greater  \\ 
      rsa        & 9.4     & 0.6      & 1139.3  & 3177.6  & greater  \\ 
      sat-1      & 117.6   & 117.2    & 728.6   & 236.4   & less     \\ 
      sat-2      & 50876.6 & 51811.5  & 71416.0 & 32367.0 & less     \\ 
      sat-3      & 5204.7  & 5215.1   & 47712.6 & 4627.6  & less     \\ 
      ack        & --      & --       & 287.5   & 112.6   & --       \\ 
      tak        & --      & --       & 1131.5  & 166.7   & --       \\ 
      cpstak     & --      & --       & 2738.5  & 166.4   & --       \\ 
    \end{tabular}
  \end{center}
  
  \caption{Benchmark runtimes across program analyses (ms). -- indicates timeout after five minutes.}
  \label{tab_benchmarks}
\end{table}

\subsubsection{Improving Performance}
\label{sec_impl_improv_perf}

The initial results reported above are very much initial results.  While we have made some effort to optimize data structures and simplify results when possible, there still are considerable performance improvements that could be made.

Caching in particular is very important for performance as pure demand semantics must do considerably more work looking up variables if there is no cache.
For the interpreter implementation described above, we used a mapping  $(\stk,\expr) \mapsto \reval$ as a cache and obtained interpreter running times similar to a standard implementation of an interpreter.  Caching the $\stk, \sfrags, \pathc, \visited \vdashea  \expr \steps \reval \globs \sfrags$ relation of the analysis is more challenging: in order to be sound, we can use a cached result only when the \emph{exact} same question is asked again, and this means a cache requires a mapping $(\stk, \sfrags, \pathc, \visited,\expr) \mapsto \reval$, greatly limiting the number of cache hits: we implemented such a cache and found the cache was hardly ever consulted.

What would be desirable would be a way to soundly merge some of these cache keys to increase the number of cache hits.  One simple experiment we tried was to completely remove the $\visited$ set from the keys, which has the property of shifting when a cycle gets pruned.  In our experiments, it was nearly always innocuous, and produced runtimes which were invariably much, much faster than the DDPA/P4F runtimes, but unfortunately it wasn't \emph{always} innocuous: on the cpstak benchmark (only) it produced an incorrect output.  We believe there should be provably sound ways to increase the number of cache hits, but it is a subtle question that we leave to future work.


\section{Related Work}
\label{sec_related}


\subsection{Related Operational Semantics}

The operational semantics here only uses the call stack and a source program point or a lexical level as its state.  We know of no such precedent, but the DDSE operational semantics \cite{DDSE} relies only on a call stack, a non-locals stack, and a control flow graph.

The theme of replaying previous computations is not completely new \cite{KoppelCapturing} -- in this paper the technique is used to implement delimited continuations with only exceptions and state.

For imperative languages supporting nested function definitions, \emph{access links} are a classic compiler implementation for non-local variable access \cite{Fischer05,MitchellCompilers}.  Access links are a chain of activation records linking to successively higher lexical levels.  Our approach is similar in how there is a back-chaining to find values, but access-link-based implementations still push data forward in stack frames so we are more demand-driven.  DDSE uses a similar approach.  

Optimal lambda-reduction \cite{Lamping:1989,Gonthier:1992} is a $\lambda$-reduction method that provably requires the fewest number of $\beta$-reductions.  Running our interpreter with a cache has the related property that applications are never repeated.  Whether any concrete optimality property can be proven about a pure demand semantics is an open question.

We only address pure functional features in this paper, and it is not obvious how it would be extended to incorporate effects.  The non-local variable lookup currently is something of an ``escape'' to a different context which bears some analogy to control effects; this hints that an algebraic effects basis may allow general side effects to be modeled.

\subsection{Related Program Analyses}
\label{sec_analysis_related}

We believe our approach to program analysis for functional languages is nearly unique in that it is not directly propagating any data; all state is captured in the call stack and a program point or a lexical level.  Functional language program analyses in the literature invariably push (abstract) values forward.  Dataflow analysis for imperative programs also may push data backwards, but again data is being propagated directly.

All program analyses for languages with higher-order recursive functions must somehow finitize the call stack.  The classic approach here is $k$CFA \cite{ShiversThesis}, which is an abstract interpretation \cite{Cousot77} that trims the stack to the most recent $k$ calls.  Many extensions and improvements to $k$CFA have been made over the years including \cite{AAM,P4F,CFA2,ResolvingkCFAParadox,MightAbstractInterpretersForFree,JagaWeeksUnified}. 
To the best of our knowledge, our stitching of past stack histories to approximate past stack states is a novel feature of our analysis.  It is particularly important for demand variable lookup. 

The higher-order program analysis closest to this work is DDPA \cite{DDPA} and its generalization Plume \cite{Plume}, which also uses an underlying form of operational semantics with no substitution or environment.  By contrast, DDPA-like analyses rely on more than just a call stack to model program state: a flow graph and an additional non-locals stack are needed.  Here we perform a more direct finitization of the (infinitary) operational semantics by simply finitizing the rules by stack pruning, in the spirit of Abstract Interpreters for Free \cite{MightAbstractInterpretersForFree}; DDPA translates analysis questions into equivalent PDA reachability questions.  DDPA does not perform stack stitching, but it would likely benefit from it since it also looks up variables on demand.  DDPA is polynomial in $k$, but needs a rapidly-growing $k$ to maintain precision.  DDPA does not extract recurrence results as we do here.

The standard numerical abstractions in the field of abstract interpretation \cite{abstract-interp-domains-tutorial} include, for example, numerical sign (positive, zero, or negative), constant-value, set-of-constants, interval, and congruence domains.  The recurrences our analysis infers generalize these domains as it can infer sign, constant, and range values. Relational domains in abstract interpretations additionally can relate multiple variables.  For example, two variables could only be known to be positive separately, but it could be known collectively that one is always larger than the other.  Relational domains should in principle be inferrable in an extension of our analysis.

Our analysis is built on a big-step formalization of operational semantics; this is another non-standard aspect as a small-step basis is usually used, but there are several exceptions \cite{Schmidt1998,BodinEtalSkeletalSemantics,DaraisAbstractingDI}.  Big step operational semantics have an advantage of calls and returns always aligning since the call and return point is the same node in the proof tree.

We show in Section \ref{sec_all_paths_core} how the program analysis algorithm can be declaratively specified via a single deterministic proof system.  We believe this approach is novel, and the greater degree of rigor of a fully proof-theoretic presentation will be potentially useful in  formal verification.  Many approaches define program analyses declaratively as Datalog programs \cite{WhaleyLamDatalogAnalysis}; this gives a more declarative specification then pseudocode, but a full specification then also requires defining the meaning of a Datalog interpreter.  It is also possible to mathematically specify a program analysis as a greatest fixed point of an operational semantics \cite{BodinEtalSkeletalSemantics}, but this doesn't constitute a computable specification for an implementation.

We know of no higher-order program analysis that extracts recurrence relations on numerical data as we do, but the more general theme of analyses aiming to extract stronger invariant properties is an active area of research \cite{StableRelsAI,GowJaggaHOShape,JagaShapeInf}. We currently do not infer inductive input-output relationships \cite{StableRelsAI} but that should be possible in principle by synchronizing input and output recurrences.



\section{Conclusions}
\label{sec_concl}

This paper addresses the question: what is the minimal amount of state information that must be maintained in the operational semantics of a higher-order functional language?  The answer here is it suffices to \emph{only} use a call stack consisting of source program call sites, plus a source program point or a lexical level; no variable environment, substitution, store, or any other structure is needed.  We prove this form of operational semantics equivalent to a standard presentation.

Then, to illustrate how this pure demand approach can open the door to new applications, we defined a program analysis as a direct finitization of this operational semantics.  The program analysis has a number of novel properties compared to standard program analyses for higher-order programs.  The analysis has been implemented and preliminary results are promising.

\section*{Data-Availability Statement}
The implementation of the interpreter of Section \ref{sec_opsem} and the analyses of Sections \ref{sec_analysis} and \ref{sec_implementation} are publicly available on Zenodo \cite{SmithZhangArtifact24a}, Software Heritage \cite{SmithZhangArtifact24b}, and in an evolving open-source GitHub repository \cite{SmithZhangArtifact24c}.

\section*{Acknowledgements}
We thank the OOPSLA reviewers for numerous helpful remarks, and for providing the suggestion to use DeBruijn indices in place of the less elegant program label approach we were previously using.  We also thank Zachary Palmer, Earl Wu, and Shiwei Weng for numerous helpful discussions and comments.

\bibliographystyle{ACM-Reference-Format}
\bibliography{main}


\begin{thebibliography}{40}


\ifx \showCODEN    \undefined \def \showCODEN     #1{\unskip}     \fi
\ifx \showDOI      \undefined \def \showDOI       #1{#1}\fi
\ifx \showISBNx    \undefined \def \showISBNx     #1{\unskip}     \fi
\ifx \showISBNxiii \undefined \def \showISBNxiii  #1{\unskip}     \fi
\ifx \showISSN     \undefined \def \showISSN      #1{\unskip}     \fi
\ifx \showLCCN     \undefined \def \showLCCN      #1{\unskip}     \fi
\ifx \shownote     \undefined \def \shownote      #1{#1}          \fi
\ifx \showarticletitle \undefined \def \showarticletitle #1{#1}   \fi
\ifx \showURL      \undefined \def \showURL       {\relax}        \fi
\providecommand\bibfield[2]{#2}
\providecommand\bibinfo[2]{#2}
\providecommand\natexlab[1]{#1}
\providecommand\showeprint[2][]{arXiv:#2}

\bibitem[Barendregt(1984)]%
        {LambdaBar}
\bibfield{author}{\bibinfo{person}{H. Barendregt}.}
  \bibinfo{year}{1984}\natexlab{}.
\newblock \bibinfo{booktitle}{\emph{The Lambda Calculus: Its Syntax and
  Semantics} (\bibinfo{edition}{2nd} ed.)}.
\newblock


\bibitem[Barrett et~al\mbox{.}(2016)]%
        {BarFT-SMTLIB}
\bibfield{author}{\bibinfo{person}{Clark Barrett}, \bibinfo{person}{Pascal
  Fontaine}, {and} \bibinfo{person}{Cesare Tinelli}.}
  \bibinfo{year}{2016}\natexlab{}.
\newblock \bibinfo{title}{{The Satisfiability Modulo Theories Library
  (SMT-LIB)}}.
\newblock \bibinfo{howpublished}{\url{http://smtlib.cs.uiowa.edu}}.
\newblock


\bibitem[Bodin et~al\mbox{.}(2019)]%
        {BodinEtalSkeletalSemantics}
\bibfield{author}{\bibinfo{person}{Martin Bodin}, \bibinfo{person}{Philippa
  Gardner}, \bibinfo{person}{Thomas Jensen}, {and} \bibinfo{person}{Alan
  Schmitt}.} \bibinfo{year}{2019}\natexlab{}.
\newblock \showarticletitle{Skeletal semantics and their interpretations}.
\newblock \bibinfo{journal}{\emph{Proc. ACM Program. Lang.}}
  \bibinfo{volume}{3}, \bibinfo{number}{POPL}, Article \bibinfo{articleno}{44}
  (\bibinfo{date}{jan} \bibinfo{year}{2019}), \bibinfo{numpages}{31}~pages.
\newblock
\urldef\tempurl%
\url{https://doi.org/10.1145/3290357}
\showDOI{\tempurl}


\bibitem[Bünzli(2019)]%
        {Logs}
\bibfield{author}{\bibinfo{person}{Daniel Bünzli}.}
  \bibinfo{year}{2019}\natexlab{}.
\newblock
\newblock
\urldef\tempurl%
\url{https://erratique.ch/software/logs}
\showURL{%
\tempurl}


\bibitem[Cousot and Cousot(1977)]%
        {Cousot77}
\bibfield{author}{\bibinfo{person}{Patrick Cousot} {and}
  \bibinfo{person}{Radhia Cousot}.} \bibinfo{year}{1977}\natexlab{}.
\newblock \showarticletitle{Abstract Interpretation: A Unified Lattice Model
  for Static Analysis of Programs by Construction or Approximation of
  Fixpoints}. In \bibinfo{booktitle}{\emph{Proceedings of the 4th ACM
  SIGACT-SIGPLAN Symposium on Principles of Programming Languages}} (Los
  Angeles, California) \emph{(\bibinfo{series}{POPL '77})}.
  \bibinfo{publisher}{ACM}, \bibinfo{address}{New York, NY, USA},
  \bibinfo{pages}{238--252}.
\newblock
\urldef\tempurl%
\url{https://doi.org/10.1145/512950.512973}
\showDOI{\tempurl}


\bibitem[Darais et~al\mbox{.}(2017)]%
        {DaraisAbstractingDI}
\bibfield{author}{\bibinfo{person}{David Darais}, \bibinfo{person}{Nicholas
  Labich}, \bibinfo{person}{Ph\'{u}c~C. Nguyen}, {and} \bibinfo{person}{David
  Van~Horn}.} \bibinfo{year}{2017}\natexlab{}.
\newblock \showarticletitle{Abstracting definitional interpreters (functional
  pearl)}.
\newblock \bibinfo{journal}{\emph{Proc. ACM Program. Lang.}}
  \bibinfo{volume}{1}, \bibinfo{number}{ICFP}, Article \bibinfo{articleno}{12}
  (\bibinfo{date}{aug} \bibinfo{year}{2017}), \bibinfo{numpages}{25}~pages.
\newblock
\urldef\tempurl%
\url{https://doi.org/10.1145/3110256}
\showDOI{\tempurl}


\bibitem[Ellson et~al\mbox{.}(2002)]%
        {Graphviz}
\bibfield{author}{\bibinfo{person}{John Ellson}, \bibinfo{person}{Emden
  Gansner}, \bibinfo{person}{Lefteris Koutsofios}, \bibinfo{person}{Stephen~C.
  North}, {and} \bibinfo{person}{Gordon Woodhull}.}
  \bibinfo{year}{2002}\natexlab{}.
\newblock \showarticletitle{Graphviz--- Open Source Graph Drawing Tools}. In
  \bibinfo{booktitle}{\emph{Graph Drawing}},
  \bibfield{editor}{\bibinfo{person}{Petra Mutzel}, \bibinfo{person}{Michael
  J{\"u}nger}, {and} \bibinfo{person}{Sebastian Leipert}} (Eds.).
  \bibinfo{publisher}{Springer Berlin Heidelberg}, \bibinfo{address}{Berlin,
  Heidelberg}, \bibinfo{pages}{483--484}.
\newblock
\showISBNx{978-3-540-45848-7}


\bibitem[Facchinetti et~al\mbox{.}(2019)]%
        {DDPA}
\bibfield{author}{\bibinfo{person}{Leandro Facchinetti},
  \bibinfo{person}{Zachary Palmer}, {and} \bibinfo{person}{Scott Smith}.}
  \bibinfo{year}{2019}\natexlab{}.
\newblock \showarticletitle{Higher-Order Demand-Driven Program Analysis}.
\newblock \bibinfo{journal}{\emph{TOPLAS}}  \bibinfo{volume}{41}
  (\bibinfo{date}{July} \bibinfo{year}{2019}).
\newblock
Issue 3.
\urldef\tempurl%
\url{https://doi.org/10.1145/3310340}
\showDOI{\tempurl}


\bibitem[Fachinetti et~al\mbox{.}(2020)]%
        {Plume}
\bibfield{author}{\bibinfo{person}{Leandro Fachinetti},
  \bibinfo{person}{Zachary Palmer}, \bibinfo{person}{Scott~F. Smith},
  \bibinfo{person}{Ke Wu}, {and} \bibinfo{person}{Ayaka Yorihiro}.}
  \bibinfo{year}{2020}\natexlab{}.
\newblock \showarticletitle{A Set-Based Context Model for Program Analysis}. In
  \bibinfo{booktitle}{\emph{Programming Languages and Systems: 18th Asian
  Symposium, APLAS 2020, Fukuoka, Japan, November 30 – December 2, 2020,
  Proceedings}} (Fukuoka, Japan). \bibinfo{publisher}{Springer-Verlag},
  \bibinfo{address}{Berlin, Heidelberg}, \bibinfo{pages}{3–24}.
\newblock
\showISBNx{978-3-030-64436-9}
\urldef\tempurl%
\url{https://doi.org/10.1007/978-3-030-64437-6_1}
\showDOI{\tempurl}


\bibitem[Felleisen and Friedman(1987)]%
        {FelleisenCEK}
\bibfield{author}{\bibinfo{person}{Matthias Felleisen} {and}
  \bibinfo{person}{Daniel~P. Friedman}.} \bibinfo{year}{1987}\natexlab{}.
\newblock \showarticletitle{Control operators, the SECD-machine, and the
  {\(\lambda\)}-calculus}. In \bibinfo{booktitle}{\emph{Formal Description of
  Programming Concepts - {III:} Proceedings of the {IFIP} {TC} 2/WG 2.2 Working
  Conference on Formal Description of Programming Concepts - III, Ebberup,
  Denmark, 25-28 August 1986}}, \bibfield{editor}{\bibinfo{person}{Martin
  Wirsing}} (Ed.). \bibinfo{publisher}{North-Holland},
  \bibinfo{pages}{193--222}.
\newblock


\bibitem[Fischer(2005)]%
        {Fischer05}
\bibfield{author}{\bibinfo{person}{Charles Fischer}.}
  \bibinfo{year}{2005}\natexlab{}.
\newblock \bibinfo{title}{CS536 Lecture Notes}.
\newblock
\newblock
\urldef\tempurl%
\url{https://pages.cs.wisc.edu/~fischer/cs536.s05/course.hold/html/NOTES/8.RUNTIME-VAR-ACCESS.html#accessLink}
\showURL{%
\tempurl}


\bibitem[Gilray et~al\mbox{.}(2016)]%
        {P4F}
\bibfield{author}{\bibinfo{person}{Thomas Gilray}, \bibinfo{person}{Steven
  Lyde}, \bibinfo{person}{Michael~D. Adams}, \bibinfo{person}{Matthew Might},
  {and} \bibinfo{person}{David Van~Horn}.} \bibinfo{year}{2016}\natexlab{}.
\newblock \showarticletitle{Pushdown Control-flow Analysis for Free}. In
  \bibinfo{booktitle}{\emph{Proceedings of the 43rd Annual ACM SIGPLAN-SIGACT
  Symposium on Principles of Programming Languages}} (St. Petersburg, FL, USA)
  \emph{(\bibinfo{series}{POPL '16})}. \bibinfo{publisher}{ACM},
  \bibinfo{address}{New York, NY, USA}, \bibinfo{pages}{691--704}.
\newblock
\showISBNx{978-1-4503-3549-2}
\urldef\tempurl%
\url{https://doi.org/10.1145/2837614.2837631}
\showDOI{\tempurl}


\bibitem[Gonthier et~al\mbox{.}(1992)]%
        {Gonthier:1992}
\bibfield{author}{\bibinfo{person}{Georges Gonthier},
  \bibinfo{person}{Mart\'{\i}n Abadi}, {and} \bibinfo{person}{Jean-Jacques
  L\'{e}vy}.} \bibinfo{year}{1992}\natexlab{}.
\newblock \showarticletitle{The Geometry of Optimal Lambda Reduction}. In
  \bibinfo{booktitle}{\emph{Proceedings of the 19th ACM SIGPLAN-SIGACT
  Symposium on Principles of Programming Languages}} (Albuquerque, New Mexico,
  USA) \emph{(\bibinfo{series}{POPL '92})}. \bibinfo{publisher}{Association for
  Computing Machinery}, \bibinfo{address}{New York, NY, USA},
  \bibinfo{pages}{15–26}.
\newblock
\showISBNx{0897914538}
\urldef\tempurl%
\url{https://doi.org/10.1145/143165.143172}
\showDOI{\tempurl}


\bibitem[Gurfinkel(2022)]%
        {SpacerTutorial22}
\bibfield{author}{\bibinfo{person}{Arie Gurfinkel}.}
  \bibinfo{year}{2022}\natexlab{}.
\newblock \showarticletitle{Program Verification with Constrained Horn Clauses
  (Invited Paper)}. In \bibinfo{booktitle}{\emph{Computer Aided Verification}},
  \bibfield{editor}{\bibinfo{person}{Sharon Shoham} {and}
  \bibinfo{person}{Yakir Vizel}} (Eds.). \bibinfo{publisher}{Springer
  International Publishing}, \bibinfo{address}{Cham}, \bibinfo{pages}{19--29}.
\newblock
\showISBNx{978-3-031-13185-1}
\urldef\tempurl%
\url{https://doi.org/10.1007/978-3-031-13185-1_2}
\showDOI{\tempurl}


\bibitem[Jagannathan and Weeks(1995)]%
        {JagaWeeksUnified}
\bibfield{author}{\bibinfo{person}{Suresh Jagannathan} {and}
  \bibinfo{person}{Stephen Weeks}.} \bibinfo{year}{1995}\natexlab{}.
\newblock \showarticletitle{A Unified Treatment of Flow Analysis in
  Higher-Order Languages}. In \bibinfo{booktitle}{\emph{Proceedings of the 22nd
  ACM SIGPLAN-SIGACT Symposium on Principles of Programming Languages}} (San
  Francisco, California, USA) \emph{(\bibinfo{series}{POPL '95})}.
  \bibinfo{publisher}{Association for Computing Machinery},
  \bibinfo{address}{New York, NY, USA}, \bibinfo{pages}{393–407}.
\newblock
\showISBNx{0897916921}
\urldef\tempurl%
\url{https://doi.org/10.1145/199448.199536}
\showDOI{\tempurl}


\bibitem[James(2014)]%
        {Core_bench}
\bibfield{author}{\bibinfo{person}{Roshan James}.}
  \bibinfo{year}{2014}\natexlab{}.
\newblock \bibinfo{title}{Core\_bench: better micro-benchmarks through linear
  regression}.
\newblock
\newblock
\urldef\tempurl%
\url{https://blog.janestreet.com/core_bench-micro-benchmarking-for-ocaml}
\showURL{%
\tempurl}


\bibitem[{Jane Street}(2024)]%
        {Ppx_let}
\bibfield{author}{\bibinfo{person}{{Jane Street}}.}
  \bibinfo{year}{2024}\natexlab{}.
\newblock
\newblock
\urldef\tempurl%
\url{https://github.com/janestreet/ppx_let}
\showURL{%
\tempurl}


\bibitem[Kaki and Jagannathan(2014)]%
        {GowJaggaHOShape}
\bibfield{author}{\bibinfo{person}{Gowtham Kaki} {and} \bibinfo{person}{Suresh
  Jagannathan}.} \bibinfo{year}{2014}\natexlab{}.
\newblock \showarticletitle{A Relational Framework for Higher-Order Shape
  Analysis}. In \bibinfo{booktitle}{\emph{Proceedings of the 19th ACM SIGPLAN
  International Conference on Functional Programming}} (Gothenburg, Sweden)
  \emph{(\bibinfo{series}{ICFP '14})}. \bibinfo{publisher}{Association for
  Computing Machinery}, \bibinfo{address}{New York, NY, USA},
  \bibinfo{pages}{311–324}.
\newblock
\showISBNx{9781450328739}
\urldef\tempurl%
\url{https://doi.org/10.1145/2628136.2628159}
\showDOI{\tempurl}


\bibitem[Koppel et~al\mbox{.}(2018)]%
        {KoppelCapturing}
\bibfield{author}{\bibinfo{person}{James Koppel}, \bibinfo{person}{Gabriel
  Scherer}, {and} \bibinfo{person}{Armando Solar-Lezama}.}
  \bibinfo{year}{2018}\natexlab{}.
\newblock \showarticletitle{Capturing the Future by Replaying the Past
  (Functional Pearl)}.
\newblock \bibinfo{journal}{\emph{Proc. ACM Program. Lang.}}
  \bibinfo{volume}{2}, \bibinfo{number}{ICFP}, Article \bibinfo{articleno}{76}
  (\bibinfo{date}{jul} \bibinfo{year}{2018}), \bibinfo{numpages}{29}~pages.
\newblock
\urldef\tempurl%
\url{https://doi.org/10.1145/3236771}
\showDOI{\tempurl}


\bibitem[Lamping(1990)]%
        {Lamping:1989}
\bibfield{author}{\bibinfo{person}{John Lamping}.}
  \bibinfo{year}{1990}\natexlab{}.
\newblock \showarticletitle{An Algorithm for Optimal Lambda Calculus
  Reduction}. In \bibinfo{booktitle}{\emph{Proceedings of the 17th ACM
  SIGPLAN-SIGACT Symposium on Principles of Programming Languages}} (San
  Francisco, California, USA) \emph{(\bibinfo{series}{POPL '90})}.
  \bibinfo{publisher}{ACM}, \bibinfo{address}{New York, NY, USA},
  \bibinfo{pages}{16--30}.
\newblock
\showISBNx{0-89791-343-4}
\urldef\tempurl%
\url{https://doi.org/10.1145/96709.96711}
\showDOI{\tempurl}


\bibitem[Lasson(2023)]%
        {landmarks}
\bibfield{author}{\bibinfo{person}{Marc Lasson}.}
  \bibinfo{year}{2023}\natexlab{}.
\newblock \bibinfo{title}{Landmarks: A Simple Profiling Library}.
\newblock
\newblock
\urldef\tempurl%
\url{https://github.com/LexiFi/landmarks}
\showURL{%
\tempurl}


\bibitem[Might(2010)]%
        {MightAbstractInterpretersForFree}
\bibfield{author}{\bibinfo{person}{Matthew Might}.}
  \bibinfo{year}{2010}\natexlab{}.
\newblock \showarticletitle{Abstract Interpreters for Free}. In
  \bibinfo{booktitle}{\emph{Proceedings of the 17th International Conference on
  Static Analysis}} (Perpignan, France) \emph{(\bibinfo{series}{SAS'10})}.
  \bibinfo{publisher}{Springer-Verlag}, \bibinfo{address}{Berlin, Heidelberg},
  \bibinfo{pages}{407--421}.
\newblock
\showISBNx{3-642-15768-8, 978-3-642-15768-4}
\urldef\tempurl%
\url{http://dl.acm.org/citation.cfm?id=1882094.1882119}
\showURL{%
\tempurl}


\bibitem[Might et~al\mbox{.}(2010)]%
        {ResolvingkCFAParadox}
\bibfield{author}{\bibinfo{person}{Matthew Might}, \bibinfo{person}{Yannis
  Smaragdakis}, {and} \bibinfo{person}{David Van~Horn}.}
  \bibinfo{year}{2010}\natexlab{}.
\newblock \showarticletitle{Resolving and Exploiting the K-CFA Paradox:
  Illuminating Functional vs. Object-Oriented Program Analysis}. In
  \bibinfo{booktitle}{\emph{Proceedings of the 31st ACM SIGPLAN Conference on
  Programming Language Design and Implementation}} (Toronto, Ontario, Canada)
  \emph{(\bibinfo{series}{PLDI '10})}. \bibinfo{publisher}{Association for
  Computing Machinery}, \bibinfo{address}{New York, NY, USA},
  \bibinfo{pages}{305–315}.
\newblock
\showISBNx{9781450300193}
\urldef\tempurl%
\url{https://doi.org/10.1145/1806596.1806631}
\showDOI{\tempurl}


\bibitem[Min\'{e}(2017)]%
        {abstract-interp-domains-tutorial}
\bibfield{author}{\bibinfo{person}{Antoine Min\'{e}}.}
  \bibinfo{year}{2017}\natexlab{}.
\newblock \showarticletitle{Tutorial on Static Inference of Numeric Invariants
  by Abstract Interpretation}.
\newblock \bibinfo{journal}{\emph{Found. Trends Program. Lang.}}
  \bibinfo{volume}{4}, \bibinfo{number}{3–4} (\bibinfo{date}{dec}
  \bibinfo{year}{2017}), \bibinfo{pages}{120–372}.
\newblock
\showISSN{2325-1107}
\urldef\tempurl%
\url{https://doi.org/10.1561/2500000034}
\showDOI{\tempurl}


\bibitem[Mitchell and Apt(2003)]%
        {MitchellCompilers}
\bibfield{author}{\bibinfo{person}{John~C. Mitchell} {and}
  \bibinfo{person}{Krzysztof Apt}.} \bibinfo{year}{2003}\natexlab{}.
\newblock \bibinfo{booktitle}{\emph{Concepts in Programming Languages}}.
\newblock \bibinfo{publisher}{Cambridge University Press},
  \bibinfo{address}{USA}.
\newblock
\showISBNx{0521780985}


\bibitem[Montagu and Jensen(2020)]%
        {StableRelsAI}
\bibfield{author}{\bibinfo{person}{Beno\^{\i}t Montagu} {and}
  \bibinfo{person}{Thomas Jensen}.} \bibinfo{year}{2020}\natexlab{}.
\newblock \showarticletitle{Stable Relations and Abstract Interpretation of
  Higher-Order Programs}.
\newblock \bibinfo{journal}{\emph{Proc. ACM Program. Lang.}}
  \bibinfo{volume}{4}, \bibinfo{number}{ICFP}, Article \bibinfo{articleno}{119}
  (\bibinfo{date}{aug} \bibinfo{year}{2020}), \bibinfo{numpages}{30}~pages.
\newblock
\urldef\tempurl%
\url{https://doi.org/10.1145/3409001}
\showDOI{\tempurl}


\bibitem[Palmer et~al\mbox{.}(2020)]%
        {DDSE}
\bibfield{author}{\bibinfo{person}{Zachary Palmer}, \bibinfo{person}{Theodore
  Park}, \bibinfo{person}{Scott Smith}, {and} \bibinfo{person}{Shiwei Weng}.}
  \bibinfo{year}{2020}\natexlab{}.
\newblock \showarticletitle{Higher-Order Demand-Driven Symbolic Evaluation}.
\newblock \bibinfo{journal}{\emph{Proc. ACM Program. Lang.}}
  \bibinfo{volume}{4}, \bibinfo{number}{ICFP}, Article \bibinfo{articleno}{102}
  (\bibinfo{date}{aug} \bibinfo{year}{2020}), \bibinfo{numpages}{28}~pages.
\newblock
\urldef\tempurl%
\url{https://doi.org/10.1145/3408984}
\showDOI{\tempurl}


\bibitem[Plotkin(1975)]%
        {Plotkincbncbv}
\bibfield{author}{\bibinfo{person}{G.D. Plotkin}.}
  \bibinfo{year}{1975}\natexlab{}.
\newblock \showarticletitle{"Call-by-name, call-by-value and the
  $\lambda$-calculus"}.
\newblock \bibinfo{journal}{\emph{Theoretical Computer Science}}
  \bibinfo{volume}{1}, \bibinfo{number}{2} (\bibinfo{year}{1975}),
  \bibinfo{pages}{125--159}.
\newblock
\showISSN{0304-3975}
\urldef\tempurl%
\url{https://doi.org/10.1016/0304-3975(75)90017-1}
\showDOI{\tempurl}


\bibitem[Plotkin(2004)]%
        {PlotkinSOS}
\bibfield{author}{\bibinfo{person}{Gordon Plotkin}.}
  \bibinfo{year}{2004}\natexlab{}.
\newblock \showarticletitle{A Structural Approach to Operational Semantics}.
\newblock \bibinfo{journal}{\emph{J. Log. Algebr. Program.}}
  \bibinfo{volume}{60-61} (\bibinfo{date}{07} \bibinfo{year}{2004}),
  \bibinfo{pages}{17--139}.
\newblock
\urldef\tempurl%
\url{https://doi.org/10.1016/j.jlap.2004.05.001}
\showDOI{\tempurl}


\bibitem[Pottier and Régis-Gianas(2021)]%
        {Menhir}
\bibfield{author}{\bibinfo{person}{François Pottier} {and}
  \bibinfo{person}{Yann Régis-Gianas}.} \bibinfo{year}{2021}\natexlab{}.
\newblock
\newblock
\urldef\tempurl%
\url{https://gallium.inria.fr/~fpottier/menhir}
\showURL{%
\tempurl}


\bibitem[Schmidt(1998)]%
        {Schmidt1998}
\bibfield{author}{\bibinfo{person}{David~A. Schmidt}.}
  \bibinfo{year}{1998}\natexlab{}.
\newblock \showarticletitle{Trace-Based Abstract Interpretation of Operational
  Semantics}.
\newblock \bibinfo{journal}{\emph{LISP and Symbolic Computation}}
  \bibinfo{volume}{10}, \bibinfo{number}{3} (\bibinfo{date}{01 May}
  \bibinfo{year}{1998}), \bibinfo{pages}{237--271}.
\newblock
\showISSN{1573-0557}
\urldef\tempurl%
\url{https://doi.org/10.1023/A:1007734417713}
\showDOI{\tempurl}


\bibitem[Shivers(1991)]%
        {ShiversThesis}
\bibfield{author}{\bibinfo{person}{Olin~Grigsby Shivers}.}
  \bibinfo{year}{1991}\natexlab{}.
\newblock \emph{\bibinfo{title}{Control-flow Analysis of Higher-order
  Languages}}.
\newblock \bibinfo{thesistype}{Ph.\,D. Dissertation}.
  \bibinfo{address}{Pittsburgh, PA, USA}.
\newblock
\newblock
\shownote{UMI Order No. GAX91-26964}.


\bibitem[Van~Horn and Mairson(2008)]%
        {kCFAEXPTIME}
\bibfield{author}{\bibinfo{person}{David Van~Horn} {and}
  \bibinfo{person}{Harry~G. Mairson}.} \bibinfo{year}{2008}\natexlab{}.
\newblock \showarticletitle{Deciding kCFA is Complete for EXPTIME}. In
  \bibinfo{booktitle}{\emph{Proceedings of the 13th ACM SIGPLAN International
  Conference on Functional Programming}} (Victoria, BC, Canada)
  \emph{(\bibinfo{series}{ICFP '08})}. \bibinfo{publisher}{ACM},
  \bibinfo{address}{New York, NY, USA}, \bibinfo{pages}{275--282}.
\newblock
\showISBNx{978-1-59593-919-7}
\urldef\tempurl%
\url{https://doi.org/10.1145/1411204.1411243}
\showDOI{\tempurl}


\bibitem[Van~Horn and Might(2010)]%
        {AAM}
\bibfield{author}{\bibinfo{person}{David Van~Horn} {and}
  \bibinfo{person}{Matthew Might}.} \bibinfo{year}{2010}\natexlab{}.
\newblock \showarticletitle{Abstracting Abstract Machines}. In
  \bibinfo{booktitle}{\emph{ICFP}} (Baltimore, Maryland, USA).
  \bibinfo{publisher}{ACM}, \bibinfo{address}{New York, NY, USA},
  \bibinfo{pages}{51--62}.
\newblock
\showISBNx{978-1-60558-794-3}
\urldef\tempurl%
\url{https://doi.org/10.1145/1863543.1863553}
\showDOI{\tempurl}


\bibitem[Vardoulakis and Shivers(2010)]%
        {CFA2}
\bibfield{author}{\bibinfo{person}{Dimitrios Vardoulakis} {and}
  \bibinfo{person}{Olin Shivers}.} \bibinfo{year}{2010}\natexlab{}.
\newblock \showarticletitle{{CFA2}: A Context-Free Approach to Control-Flow
  Analysis}. In \bibinfo{booktitle}{\emph{Proceedings of the 19th European
  Conference on Programming Languages and Systems}} (Paphos, Cyprus)
  \emph{(\bibinfo{series}{ESOP'10})}. \bibinfo{publisher}{Springer-Verlag},
  \bibinfo{address}{Berlin, Heidelberg}, \bibinfo{pages}{570--589}.
\newblock
\showISBNx{3-642-11956-5, 978-3-642-11956-9}
\urldef\tempurl%
\url{https://doi.org/10.1007/978-3-642-11957-6_30}
\showDOI{\tempurl}


\bibitem[Whaley and Lam(2004)]%
        {WhaleyLamDatalogAnalysis}
\bibfield{author}{\bibinfo{person}{John Whaley} {and}
  \bibinfo{person}{Monica~S. Lam}.} \bibinfo{year}{2004}\natexlab{}.
\newblock \showarticletitle{Cloning-based context-sensitive pointer alias
  analysis using binary decision diagrams}.
\newblock \bibinfo{journal}{\emph{SIGPLAN Not.}} \bibinfo{volume}{39},
  \bibinfo{number}{6} (\bibinfo{date}{jun} \bibinfo{year}{2004}),
  \bibinfo{pages}{131–144}.
\newblock
\showISSN{0362-1340}
\urldef\tempurl%
\url{https://doi.org/10.1145/996893.996859}
\showDOI{\tempurl}


\bibitem[Zhang and Smith(2024a)]%
        {SmithZhangArtifact24c}
\bibfield{author}{\bibinfo{person}{Robert Zhang} {and} \bibinfo{person}{Scott
  Smith}.} \bibinfo{year}{2024}\natexlab{a}.
\newblock \bibinfo{booktitle}{\emph{{Open-Source Codebase for A Pure Demand
  Operational Semantics With Applications to Program Analysis}}}.
\newblock GitHub Repository.
\newblock
\urldef\tempurl%
\url{https://github.com/JHU-PL-Lab/dde}
\showURL{%
\tempurl}


\bibitem[Zhang and Smith(2024b)]%
        {SmithZhangArtifact24a}
\bibfield{author}{\bibinfo{person}{Robert Zhang} {and} \bibinfo{person}{Scott
  Smith}.} \bibinfo{year}{2024}\natexlab{b}.
\newblock \bibinfo{booktitle}{\emph{{Software Artifact for A Pure Demand
  Operational Semantics With Applications to Program Analysis}}}.
\newblock Zenodo.
\newblock
\urldef\tempurl%
\url{https://doi.org/10.5281/zenodo.10794350}
\showDOI{\tempurl}


\bibitem[Zhang and Smith(2024c)]%
        {SmithZhangArtifact24b}
\bibfield{author}{\bibinfo{person}{Robert Zhang} {and} \bibinfo{person}{Scott
  Smith}.} \bibinfo{year}{2024}\natexlab{c}.
\newblock \bibinfo{booktitle}{\emph{{Software Artifact for A Pure Demand
  Operational Semantics With Applications to Program Analysis}}}.
\newblock Software Heritage.
\newblock
\urldef\tempurl%
\url{https://archive.softwareheritage.org/browse/directory/ffd5fe9eeca811865b5baa7183f2610105d056c0}
\showURL{%
\tempurl}


\bibitem[Zhu et~al\mbox{.}(2016)]%
        {JagaShapeInf}
\bibfield{author}{\bibinfo{person}{He Zhu}, \bibinfo{person}{Gustavo Petri},
  {and} \bibinfo{person}{Suresh Jagannathan}.} \bibinfo{year}{2016}\natexlab{}.
\newblock \showarticletitle{Automatically Learning Shape Specifications}.
\newblock \bibinfo{journal}{\emph{SIGPLAN Not.}} \bibinfo{volume}{51},
  \bibinfo{number}{6} (\bibinfo{date}{jun} \bibinfo{year}{2016}),
  \bibinfo{pages}{491--507}.
\newblock
\showISSN{0362-1340}
\urldef\tempurl%
\url{https://doi.org/10.1145/2980983.2908125}
\showDOI{\tempurl}


\end{thebibliography}

\ifdefined\gh
\else
  \newcommand\gh[1]{\href{#1}{\color{black}\tiny\faLink}}
\fi

\infull{
  \appendix
  \newpage
  \section{Proofs}
\label{sec_proofs}
This appendix contains proofs of all lemmas and theorems from the paper.

\subsection{Operational Semantics Proofs}

\variablelookupcanonical*

\begin{proof}
    Since \ruleref{Var Non-Local} in turn always requires another variable lookup and no rules other than \ruleref{Var Local} and \ruleref{Var Non-Local} apply to variable expressions, it must be the case that all variable lookups form a chain in the proof tree consisting of $n-1$ \ruleref{Var Non-Local} applications followed by a \ruleref{Var Local} that ends the sequence.  Figure \ref{fig_variable_lookup_canonical} is then nothing more than a description of this general proof tree.
\end{proof}

\lemvc*

\begin{proof}
By Lemma \ref{lem_variable_lookup_canonical}, we know that any variable lookup sub-proof in $\vdash$ is of the form of Figure \ref{fig_variable_lookup_canonical}.  Now, observe that if all of the non-$\ev$ lookup preconditions of this subtree are assembled in a single rule, we have the \ruleref{Var} rule of $\vdashc$ by inspection, showing that the two systems are equivalent in power.
\end{proof}

\lemccc*

\begin{proof}
Before starting the proof proper, we need to define some notations.

We will use the convenience notation $\nofl(\eval^{\stk,\cachef,\cachel}) = \eval^\stk$.

We need a stronger invariant of ``what is in the cache is provable'' than simply asserting that property for the top-level cache; we also need the property that any cache inside any result also is provable.  For this stronger invariant, we need to define a ``deep containment'' relation $\in\in$ reflecting that a cached item may be in one of the nested caches.  Define $\la \stk_0,\cachef_0,\cachel_0,\expr_0\ra \mapsto \eval_0^{\stk_0',\cachef_0',\cachel_0'} \in\in \cachef$ to mean either $\la \stk_0,\cachef_0,\cachel_0,\expr_0\ra \mapsto \eval_0^{\stk_0',\cachef_0',\cachel_0'}  \in \cachef$, or for some $\stk_1, \cachef_1, \cachel_1, \expr_1$ with $\la \stk_1,\cachef_1,\cachel_1,\expr_1\ra \mapsto \eval_1^{\stk_1',\cachef_1',\cachel_1'} \in \cachef$ that we inductively have $\la \stk_0,\cachef_0,\cachel_0,\expr_0 \ra \mapsto \eval_0^{\stk_0',\cachef_0',\cachel_0'} \in\in \cachef_1 \cup \cachef'_1$.  The definition for $\in \in \cachel$ is  analogous.

  We show each direction in turn. 

\noindent \textsc{$\Rightarrow$ Direction:}\\
We generalize the assertion to show that for all $\expr$, $\stk$, $\cachef$, and $\cachel_0$, if
\begin{enumerate}
  \item $\stk \vdashc \expr \steps \eval^{\stk'}$,
  \item if $\stk = (\expr_0\ \expr_0') \lC \stk_0$ (also implying $\stk$ is non-empty) then
  \begin{enumerate}
    \item $\cachef$ is the list of all lexically-enclosing function lookups for $\expr$: 
  $$\cachef = \lC_{0 \leq i \leq m}[\la \stk_i,\cachef_i',\cachel_i',\expr_i\ra \mapsto (\gtfun\ \ev_{i} \gtarrow \expr_{i}'' )^{\stk_{i+1}',\cachef_{i+1},\cachel_{i+1}}]$$
  where for all $i \leq m$, $\stk_{i+1}' = ((\expr_{i+1}\ \expr'_{i+1}) \lC \stk_{i+1})$, and we also have $\stk_i\vdashc \expr_i \steps (\gtfun\ \ev_{i} \gtarrow \expr_{i}'' )^{\stk_{i+1}',\cachef_{i+1},\cachel_{i+1}}$,
  \item Each $\cachel_i$ caches the lookup of the top argument on the stack of the above lexical chain: for all $0 \leq i \leq m$,
  $$\cachel_i = [\la \stk_i,\cachef_i',\cachel_i',\expr_i'\ra \mapsto (\reval_i,\ev_i)] \text{ and } \stk_i \vdashc \expr_i' \steps (\reval_i,\ev_i),$$
  \end{enumerate}
\end{enumerate}
then there exist $\cachef$, $\cachel_0$ such that
\begin{enumerate}\renewcommand{\theenumi}{\roman{enumi}}
  \item $\stk, \cachef, \cachel_0 \vdashcc \expr \steps \eval^{\stk',\cachef',\cachel'_0}$,
 \item if $\stk' = (\expr_0\ \expr_0') \lC \stk_0'$ (also implying $\stk'$ is non-empty) then
 \begin{enumerate}
   \item 
 $$\cachef' = \lC_{0 \leq i \leq m}[\la \stk_i',\cachef_i'',\cachel_i'',\expr_i\ra \mapsto (\gtfun\ \ev_{i} \gtarrow \expr_{i}'' )^{\stk_{i+1}',\cachef_{i+1}',\cachel_{i+1}'}]$$
 where for all $i \leq m$, $\stk_{i+1}'' = ((\expr_{i+1}\ \expr'_{i+1}) \lC \stk_{i+1}')$, and additionally $\stk_i'\vdashc\expr_i \steps (\gtfun\ \ev_{i} \gtarrow \expr_{i}'' )^{\stk_{i+1}',\cachef_{i+1}',\cachel_{i+1}'}$,
 \item Each $\cachel_i'$ caches the lookup of the top argument on the stack of the above lexical chain: for all $0 \leq i \leq m$,
 $$\cachel_i' = [\la \stk_i',\cachef_i'',\cachel_i'',\expr_i'\ra \mapsto (\reval_i,\ev_i)] \text{ and } \stk_i'\vdashc \expr_i' \steps (\reval_i,\ev_i).$$
 \end{enumerate}\end{enumerate}

Note that this generalized statement implies the result because picking the caches to be empty will make (2) vacuously true.

Proceed by induction on the height of the assumption (1) proof tree to show that (1)-(2) implies (i)-(ii).

For the base case of the \ruleref{Value} rule, the conclusion is direct: the assumption (1) in this case is $\stk \vdashc \eval \steps \eval^{\stk}$ and (ii) is immediate from (2), and (i), $\stk, \cachef, \cachel_0 \vdashcc \eval \steps \eval^{\stk,\cachef, \cachel_0}$, is immediate from the $\vdashcc$ \ruleref{Value} rule.

For the induction step, assume the goal for smaller proof trees, and assume arbitrary $\cachef$, $\cachel_0$ with (1)-(2) holding, and (1) being $\stk \vdashc \expr \steps \reval$. 

For the \ruleref{Application} case, we must have $\expr = (\expr_1\ \expr_2)$ so we have $\stk \vdashc (\expr_1\ \expr_2) \steps \reval$.  We can directly apply the induction hypothesis for the function and argument lookup subgoals in the rule, choosing $\cachef$ and $\cachel_0$ there to be the assumed arbitrary ones above.  Since these caches are the same, we directly inherit the requirement (2) for the induction hypotheses of these sub-derivations from the assumed (2) for the conclusion, and so obtain from (i) for the first hypothesis that $\stk, \cachef,\cachel_0 \vdashcc \expr_1 \steps (\gtfun\ \ev \gtarrow \expr')^{\stk_1,\cachef',\cachel_0'}$ for some $\expr',\stk_1,\cachef',\cachel_0'$, and similarly obtain $\stk, \cachef,\cachel_0 \vdashcc \expr_2 \steps \eval_2^{\stk'',\cachef'',\cachel_0''}$ for some $\eval_2,\stk'',\cachef'',\cachel_0''$ for the second hypothesis. In order to apply the inductive hypothesis to the established 
$$((\expr_1\ \expr_2) \lC \stk) \vdashc \expr' \steps \reval$$
rule precondition, we need to construct caches such that (2) holds for this hypothesis.  Using the $\vdashcc$ \ruleref{Application} rule as a guidepost, pick the cache for this precondition to be $(\la \stk,\cachef,\cachel_0,\expr_1\ra \mapsto (\gtfun\ \ev \gtarrow \expr')^{\stk_1,\cachef',\cachel'} \lC\cachef')$, and pick the $\cachel_0$ in that assumption to be $[\la \stk,\cachef,\cachel_0,\expr_2\ra \mapsto (\reval_2,\ev)]$.  Precondition (2) then follows because by inspection of (2)'s definition we can see we have just added one more frame to the front of the $\cachef'$ chain that preserves (2).  So, with all preconditions (2) satisfied for each rule we have both (i) and (ii) for all three preconditions.  In particular for the third precondition we now have 
$((\expr_1\ \expr_2) \lC \stk)$ and $(\la \stk,\cachef,\cachel_0,\expr_1\ra \mapsto (\gtfun\ \ev \gtarrow \expr')^{\stk_1,\cachef',\cachel'} \lC\cachef'),[\la \stk,\cachef,\cachel_0,\expr_2\ra \mapsto (\reval_2,\ev)] \vdashcc \expr' \steps \reval$. Putting together the three $\vdashcc$ proofs established above, we may apply the $\vdashcc$ \ruleref{Application} rule to obtain the desired $\stk,\cachef,\cachel_0 \vdashcc \expr \steps \eval^{\stk',\cachef'',\cachel_0''}$, and (ii) follows directly as they are the same result caches as those occurring on the third assumption conclusion.

For the \ruleref{Var} case, assumption (1) is of the form $((\expr_0\ \expr_0')\lC \stk_0)\vdashc \ev_n \steps \reval$.  By assumption (2), we have that for all $0 \leq i \leq m$, $\la \stk_i,\cachef_i',\cachel_i', \expr_i \ra \mapsto (\gtfun\ \ev_{i} \gtarrow \expr_{i}'' )^{\stk_{i+1}',\cachef_{i+1},\cachel_{i+1}}$, where for all $0 \leq i < n$, $\stk_{i+1}' = ((\expr_{i+1}\ \expr'_{i+1}) \lC \stk_{i+1})$, and in addition for all $0 \leq i \leq m$, $\la \stk_i,\cachef_i',\cachel_i', \expr_i'\ra \mapsto (\reval_i,\ev_i)$.  These subsume the preconditions of the $\vdashcc$ \ruleref{Var} rule, noting that we want to pick the $n$th clause of the former mapping, where $\reval = \reval_n$ by Lemma \ref{lem_deterministic}: there is only one possible proof tree so these variables must be identical. So we have established (i), namely $((\expr_0\ \expr_0')\lC \stk_0), \cachef,\cachel_0 \vdashcc \ev_n \steps \reval$.  For requirement (ii), take the established assumptions $\forall 0 \leq i \leq n.\ \stk_i \vdashc \expr_i \steps (\gtfun\ \ev_{i} \gtarrow \expr_{i}'' )^{\stk_{i+1}'}$ and $\forall 0 \leq i < n.\ \stk_{i+1}' = ((\expr_{i+1}\ \expr'_{i+1}) \lC \stk_{i+1})$ plus $\stk_n \vdashc \expr'_n \steps \reval$ of the $\vdashc$ \ruleref{Var} rule.  We can establish that (2) holds for each of these preconditions by induction on $n$.  For the base case, the caches are the ones in the rule itself so it is trivial.  The inductive case is direct. So, in particular we have (ii) for the assumption $\stk_n \vdashc \expr'_n \steps \reval$, and since the $\vdashcc$ \ruleref{Var} rule inherits these caches in its result, this establishes (ii) for this case.  The $\Rightarrow$ direction is now complete.


\noindent \textsc{$\Leftarrow$ Direction:}\\
Assume for $\vdashcc$ and show for $\vdashc$.  We generalize the assertion to show that if 
\begin{enumerate}
  \item $\stk, \cachef, \cachel \vdashcc \expr \steps \eval^{\stk',\cachef',\cachel'}$,
  \item For each $\la \stk_0,\cachef_0,\cachel_0,\expr_0\ra \mapsto \reval_0 \in\in \cachef$, we have $\stk_0,\cachef_0,\cachel_0 \vdashcc \expr_0 \steps \reval_0$ and $\stk_0 \vdashc \expr_0 \steps \nofl(\reval_0)$,
  \item For each $\la \stk_0,\cachef_0,\cachel_0,\expr_0\ra \mapsto (\reval_0,\ev_0) \in\in \cachel$, we have $\stk_0,\cachef_0,\cachel_0 \vdashcc \expr_0 \steps \reval_0$ and $\stk_0 \vdashc \expr_0 \steps \nofl(\reval_0)$,
\end{enumerate}
then 
\begin{enumerate}\renewcommand{\theenumi}{\roman{enumi}}
  \item $\stk \vdashc \expr \steps \eval^{\stk'}$,
  \item For each $\la \stk_0,\cachef_0,\cachel_0,\expr_0\ra \mapsto \reval_0 \in\in \cachef'$, we have $\stk_0,\cachef_0,\cachel_0 \vdashcc \expr_0 \steps \reval_0$ and $\stk_0 \vdashc \expr_0 \steps \nofl(\reval_0)$,
  \item For each $\la \stk_0,\cachef_0,\cachel_0,\expr_0\ra \mapsto (\reval_0,\ev_0) \in\in \cachel'$, we have $\stk_0,\cachef_0,\cachel_0 \vdashcc \expr_0 \steps \reval_0$ and $\stk_0 \vdashc \expr_0 \steps \nofl(\reval_0)$.
\end{enumerate}

Note that this generalized statement implies the result because the caches are empty there and so the strengthened assumptions (2)-(3) are vacuously true.

Proceed by induction on the height of the sub-proof of $\gexpr$ to show that (1)-(3) implies (i)-(iii).

For the base case of the \ruleref{Value} rule, the conclusion is direct: the assumption (1) in this case is $\stk, \cachef, \cachel \vdashcc \eval \steps \eval^{\stk,\cachef,\cachel}$ and (ii) is immediate from (2), (iii) from (3), and (i), $\stk \vdashc \eval \steps \eval^{\stk}$, is immediate from the $\vdashc$ \ruleref{Value} rule.

For the induction step, assume the goal for smaller proof trees, and assume (1)-(3) with (1) being $\stk, \cachef, \cachel \vdashcc \expr \steps \reval$.  

For the \ruleref{Application} case, we must have $\expr = (\expr_1\ \expr_2)$ so we have $\stk, \cachef, \cachel \vdashcc (\expr_1\ \expr_2) \steps \reval$.  We can directly apply the induction hypothesis for the function and argument lookup subgoals in the rule: since the input caches do not change, we directly inherit the requirements (2) and (3) for the induction hypotheses of these sub-derivations and obtain from (i) both $\stk \vdashc \expr_1 \steps (\gtfun\ \ev \gtarrow \expr')^{\stk_1}$ for some $\expr'$ and $\stk \vdashc \expr_2 \steps \eval_2^{\stk'}$.  In order to apply the inductive hypothesis to the established 
$$((\expr_1\ \expr_2) \lC \stk), (\la \stk,\cachef,\cachel_0,\expr_1\ra \mapsto (\gtfun\ \ev \gtarrow \expr')^{\stk_1,\cachef',\cachel'} \lC\cachef'), [\la \stk,\cachef,\cachel_0,\expr_2\ra \mapsto (\reval_2,\ev)] \vdashcc \expr' \steps \reval$$
rule precondition, we need to show that (2) and (3) hold for the caches in this assumption. For (2), this fact is direct as we established the preconditions (1)-(3) for the first assumption in the rule and so can conclude (ii) for the mappings $\cachef'$ in this assumption. The new mapping added in the precondition, $\la \stk,\cachef,\cachel_0,\expr_1\ra \mapsto (\gtfun\ \ev \gtarrow \expr')^{\stk_1,\cachef',\cachel'}$, is exactly the caching of the first assumption of the $\vdashcc$ rule and so by (ii) for that assumption, we have that this mapping and all component mappings are provable.  So, every mapping in the function cache $(\la \stk,\cachef,\cachel_0,\expr_1\ra \mapsto (\gtfun\ \ev \gtarrow \expr')^{\stk_1,\cachef',\cachel'} \lC\cachef')$ is provable in both systems.  For precondition (3), the cache $[\la \stk,\cachef,\cachel_0,\expr_2\ra \mapsto (\reval_2,\ev)]$ of the rule precondition is exactly the second precondition of the $\vdashcc$ rule.  So, given that (ii) and (iii) hold for this precondition as established above, we can conclude 
$((\expr_1\ \expr_2) \lC \stk) \vdashc \expr' \steps \nofl(\reval)$ as well as (ii) and (iii) for the result caches.  Putting the three $\vdashc$ proofs established above, we may apply the $\vdashc$ \ruleref{Application} rule to obtain the desired $\stk \vdashc \expr \steps \nofl(\reval)$, and (ii) and (iii) follow directly as they are the same result caches as those occurring on the third assumption.

For the \ruleref{Var} case, assumption (1) is of the form $((\expr_0\ \expr_0')\lC \stk_0), \cachef, \cachel \vdashcc \ev_n \steps \reval$.  By the assumptions of this rule and the fact that all cached mappings are $\vdashcc$-provable by assumptions (2) and (3), we have that for all $0 \leq i \leq m$, $\stk_i,\cachef_i',\cachel_i' \vdashcc \expr_i \steps (\gtfun\ \ev_{i} \gtarrow \expr_{i}'' )^{\stk_{i+1}',\cachef_{i+1},\cachel_{i+1}}$, where for all $0 \leq i < n$, $\stk_{i+1}' = ((\expr_{i+1}\ \expr'_{i+1}) \lC \stk_{i+1})$, and in addition $\stk_n,\cachef_n',\cachel_n' \vdashcc \expr_n' \steps \reval$.  And, (2) and (3) also guarantee the analogous facts hold in the $\vdashc$ system and so we also have that for all $0 \leq i \leq m$, $\stk_i, \vdashc \expr_i \steps (\gtfun\ \ev_{i} \gtarrow \expr_{i}'' )^{\stk_{i+1}'}$, where for all $0 \leq i < n$, $\stk_{i+1}' = ((\expr_{i+1}\ \expr'_{i+1}) \lC \stk_{i+1})$, and in addition $\stk_n \vdashc \expr_n' \steps \nofl(\reval)$.  These are exactly the preconditions of the $\vdashc$ \ruleref{Var} rule and so we have established (i), namely $((\expr_0\ \expr_0')\lC \stk_0) \vdashc \ev_n \steps \nofl(\reval)$.  For requirements (ii) and (iii), the result caches here are those on $\reval$ from $\stk_n,\cachef_n',\cachel_n' \vdashcc \expr_n' \steps \reval$, and by the meaning of $\in\in$ on cached values, we also have that (ii) and (iii) hold for $\reval$'s caches.

\end{proof}

\lemcce*

\begin{proof}
  We will use the following shorthand notation in this proof: \\
   $\cachetoenvt((\gtfun\ \ev \gtarrow \expr)^{\stk,\cachef,\cachel}) = (\gtfun\ \ev \gtarrow \expr)^{\extractcl((\gtfun\ \ev \gtarrow \expr)^{\stk,\cachef,\cachel})}$.
  
  We show each direction in turn.

   \noindent \textsc{$\Rightarrow$ Direction:}\\
   We generalize the assertion to show that if
   \begin{enumerate}
     \item $\stk, \cachef, \cachel \vdashcc \expr \steps \eval^{\stk',\cachef',\cachel'}$,
     \item $\extractenvt(\stk,\cachef,\cachel) = \envt$ for some $\envt$
   \end{enumerate}
   then 
   \begin{enumerate}\renewcommand{\theenumi}{\roman{enumi}}
     \item $\envt \vdashe \expr \steps \eval^{\envt'}$ for some $\envt'$,
     \item $\extractenvt(\stk',\cachef',\cachel') = \envt'$.
   \end{enumerate}

   Note that this generalized statement implies the result because picking the caches to be empty will extract an empty environment.
   
   Proceed by induction on the height of the assumption (1) proof tree to show that (1)-(2) implies (i)-(ii).
   
   For the base case of the \ruleref{Value} rule, the conclusion is direct: the assumption (1) in this case is $\stk, \cachef, \cachel \vdashcc \eval \steps \eval^{\stk,\cachef,\cachel}$ and so (ii) is immediate from (2) picking $\envt' = \envt$ and (i), $\stk, \envt \vdashe \eval \steps \eval^{\envt}$, is immediate from the $\vdashe$ \ruleref{Value} rule.
   
   For the induction step, assume the goal for smaller proof trees, and assume (1)-(2) with (1) being $\stk, \cachef, \cachel \vdashcc \expr \steps \reval$.  
   
   For the \ruleref{Application} case, $\expr = (\expr_1\ \expr_2)$ so we have $\stk, \cachef, \cachel \vdashcc (\expr_1\ \expr_2) \steps \reval$.  We can directly apply the induction hypothesis for the function and argument lookup subgoals in the rule: since the input caches do not change we directly inherit the requirement (2) for the induction hypotheses of these sub-derivations and obtain from (i) both $\envt \vdashe \expr_1 \steps (\gtfun\ \ev \gtarrow \expr')^{\envt_1}$ for some $\expr'$ and $\envt \vdashe \expr_2 \steps \eval_2^{\envt'}$.  In order to apply the inductive hypothesis to the established 
   $$((\expr_1\ \expr_2) \lC \stk), (\la \stk,\cachef,\cachel_0,\expr_1\ra \mapsto (\gtfun\ \ev \gtarrow \expr')^{\stk_1,\cachef',\cachel'} \lC\cachef'), [\la \stk,\cachef,\cachel_0,\expr_2\ra \mapsto (\reval_2,\ev)] \vdashcc \expr' \steps \reval$$
   rule precondition, we need to show that (2) holds for the caches in this assumption. 
   By (ii) of the first assumption, we have that $\extractenvt(\stk_1,\cachef',\cachel') = \envt_1$ for some $\envt_1$.  From this and the definition of $\extractenvt$, we can obtain $$
   \begin{array}{c}
     \extractenvt(\stk_1,((\expr_1\ \expr_2) \lC \stk), (\la \stk,\cachef,\cachel_0,\expr_1\ra \mapsto (\gtfun\ \ev \gtarrow \expr')^{\stk_1,\cachef',\cachel'} \lC\cachef'),[\la \stk,\cachef,\cachel_0,\expr_2\ra \mapsto (\reval_2,\ev)])\\ = \\
     \ [\ev \mapsto \cachetoenvt(\reval_2)]\lC \envt_1
   \end{array}$$
   This establishes precondition (2) for the third hypothesis, so we may conclude (i) and (ii) for that hypothesis, namely
   $[\ev \mapsto \cachetoenvt(\reval_2)]\lC \envt_1 \vdashe \expr' \steps \cachetoenvt(\reval)$ for (i) and then (ii) is directly from the definitions.  Putting the three $\vdashe$ proofs established above we may apply the $\vdashe$ \ruleref{Application} rule to obtain the desired $\envt \vdashe \expr \steps \cachetoenvt(\reval)$, and (ii) follows directly as they are the same result caches as those occurring on the third assumption conclusion (ii).
   
   For the \ruleref{Var} case, assumption (1) is of the form $((\expr_0\ \expr_0')\lC \stk_0), \cachef, \cachel \vdashcc \ev_n \steps \reval$.  By assumption (2) we have $\extractenvt((\expr_0\ \expr_0')\lC \stk_0,\cachef,\cachel) = \envt$ for some $\envt$.  Further, from the definition of $\extractenvt$ it can be seen that $\ev_n \mapsto \cachetoenvt(\reval) \in \envt$, so $\envt \vdashe \ev_n \steps \cachetoenvt(\reval)$ by the $\vdashe$ \ruleref{Var} rule, establishing (i) for this case. For (ii), expanding definition it amounts to requiring $\extractcl(\reval) = \extractcl(\reval)$, which follows by reflexivity.

   
   \noindent \textsc{$\Leftarrow$ Direction:}\\
   Assume for $\vdashe$ and show for $\vdashcc$.  
   We generalize the assertion to show that for all $\stk$, $\cachef$, and $\cachel$, if
   \begin{enumerate}
     \item $\envt \vdashe \expr \steps \eval^{\envt'}$ for some $\envt'$,
     \item $\extractenvt(\stk,\cachef,\cachel) = \envt$
   \end{enumerate}
   then 
   \begin{enumerate}\renewcommand{\theenumi}{\roman{enumi}}
     \item $\stk, \cachef, \cachel \vdashcc \expr \steps \eval^{\stk',\cachef',\cachel'}$,
     \item $\extractenvt(\stk',\cachef',\cachel') = \envt'$.
   \end{enumerate}

   Note that this generalized statement implies the result because the stack, environment, and caches can be picked to be empty and so the strengthened assumption (2) is trivially true.
   
   Proceed by induction on the height of the sub-proof of $\gexpr$ to show that (1)-(2) implies (i)-(ii).
     
   For the base case of the \ruleref{Value} rule the conclusion is direct: the assumption (1) in this case is $\envt \vdashe \eval \steps \eval^{\envt}$ and (i), $\stk, \cachef, \cachel \vdashcc \eval \steps \eval^{\stk,\cachef, \cachel}$, is immediate from the $\vdashcc$ \ruleref{Value} rule.  (ii) is then immediate from (2). 
   
   For the induction step, assume the goal for smaller proof trees, and assume arbitrary $\stk, \cachef$, $\cachel$ with (1)-(2) holding, and (1) being $\envt \vdashe \expr \steps \reval$.  
   
   For the \ruleref{Application} case, we must have $\expr = (\expr_1\ \expr_2)$ so we have $\envt \vdashe (\expr_1\ \expr_2) \steps \reval$.  We can directly apply the induction hypothesis for the function and argument lookup subgoals in the rule, choosing $\stk$ $\cachef$ and $\cachel$ there to be the assumed arbitrary ones above.  Since these are the same we directly inherit the requirement (2) for the induction hypotheses of these sub-derivations from the assumed (2) for the conclusion, and so obtain from (i) for the first hypothesis that $\stk, \cachef,\cachel \vdashcc \expr_1 \steps (\gtfun\ \ev \gtarrow \expr')^{\stk_1,\cachef',\cachel'}$ for some $\expr',\stk_1,\cachef',\cachel'$, and similarly obtain $\stk, \cachef,\cachel \vdashcc \expr_2 \steps \eval_2^{\stk'',\cachef'',\cachel''}$ for some $\eval_2, \stk'',\cachef'',\cachel''$ for the second hypothesis. In order to apply the inductive hypothesis to the established 
   $$[\ev \mapsto \reval_2] \lC \envt_1 \vdashe \expr' \steps \reval$$
   rule precondition, we need to show (2) holds for this environment.  Using the $\vdashcc$ application rule as a guidepost, pick the $\stk$ for this precondition to be $\stk_1,((\expr_1\ \expr_2) \lC \stk)$, pick $\cachef$ to be $(\la \stk,\cachef,\cachel_0,\expr_1\ra \mapsto (\gtfun\ \ev \gtarrow \expr')^{\stk_1,\cachef',\cachel'} \lC\cachef')$, and pick the $\cachel$ to be $[\la \stk,\cachef,\cachel_0,\expr_2\ra \mapsto (\reval_2',\ev)]$.  We aim to show
   $$
   \begin{array}{c}
     \extractenvt(\stk_1,((\expr_1\ \expr_2) \lC \stk), (\la \stk,\cachef,\cachel,\expr_1\ra \mapsto (\gtfun\ \ev \gtarrow \expr')^{\stk_1,\cachef',\cachel'} \lC\cachef'),[\la \stk,\cachef,\cachel_0,\expr_2\ra \mapsto (\reval_2',\ev)])\\ = \\
     \ [\ev \mapsto \reval_2]\lC \envt_1
   \end{array}$$
   where $\cachetoenvt(\reval_2')=\reval_2$.
 
  Precondition (2) follows because by inspection of (2)'s definition we can see we have just added one more frame to the front of the $\cachef'$ chain which preserves (2).   So, with all preconditions (2) satisfied for each rule we have both (i) and (ii) for all preconditions.  In particular, for the third precondition we now have 
   $((\expr_1\ \expr_2) \lC \stk), (\la \stk,\cachef,\cachel_0,\expr_1\ra \mapsto (\gtfun\ \ev \gtarrow \expr')^{\stk_1,\cachef',\cachel'} \lC\cachef'),[\la \stk,\cachef,\cachel_0,\expr_2\ra \mapsto (\reval_2',\ev)] \vdashcc \expr' \steps \reval$.  Putting the three $\vdashcc$ proofs established above, we may apply the $\vdashcc$ \ruleref{Application} rule to obtain the desired $\stk,\cachef,\cachel_0 \vdashcc \expr \steps \eval^{\stk',\cachef'',\cachel_0''}$, and (ii) follows directly as they are the same result caches as those occurring on the third assumption conclusion.
   
   For the \ruleref{Var} case, assumption (1) is of the form $\envt \vdashe \ev_n \steps \reval$.  By assumption (2), we have $\extractenvt(\stk,\cachef,\cachel) = \envt$.  So, by expanding this definition it means we have for all $0 \leq i \leq m$ that $\la \stk_i,\cachef_i',\cachel_i', \expr_i \ra \mapsto (\gtfun\ \ev_{i} \gtarrow \expr_{i}'' )^{\stk_{i+1}',\cachef_{i+1},\cachel_{i+1}}$, where for all $0 \leq i < n$, $\stk_{i+1}' = ((\expr_{i+1}\ \expr'_{i+1}) \lC \stk_{i+1})$, and in addition for all $0 \leq i \leq m$, $\la \stk_i,\cachef_i',\cachel_i', \expr_i'\ra \mapsto (\reval_i,\ev_i)$.  These subsume the preconditions of the $\vdashcc$ \ruleref{Var} rule, noting that we want to pick the $n$th clause of the former mapping, where $\reval = \cachetoenvt(\reval_n)$ and so we have established (i), namely $((\expr_0\ \expr_0')\lC \stk_0), \cachef,\cachel \vdashcc \ev_n \steps \reval_n$.  Requirement (ii) is then direct from the definition of $\extractenvt$.
 \end{proof}
 
 \thmve*
 \begin{proof}
  By chaining Lemmas \ref{lem_v_c}, \ref{lem_c-cc}, and \ref{lem_cc-e}.
\end{proof}
  
\subsection{Analysis Proofs}
\coreanaldec*

\begin{proof}
  First, the set of possible $\stk$ are at most $k$ in length by inspection of the rules, and there are finitely many possible stack frames because the frames are invariably call sites in the source program.  And, results $\expr$ in the proofs are invariably subterms of the original $\gexpr$ (except for the case of relabeled variables, but those two sets are finite), so they are also finite.  $\sfrags$ is finite since it is a set of elements drawn from a finite domain.  So, $\stk, \sfrags \vdasha \expr \steps \reval$ has only finitely many possible quadruples to check and so all possible proofs can be enumerated and checked in finite time.
\end{proof}

\analysissoundness*

\begin{proof}[Proof Sketch]
  Before proceeding with the proof proper we need to define some notations.

  We use the notation $\stk_0 \ll_k \stk$ to indicate that $\stk_0$ occurs as a prefix of $\stk$ with $|\stk_0| = k$, or $\stk = \stk_0 \lC \stk'$ and $|\stk_0| < k$.  $\stk\lceil_k$ is the first $k$ elements of $\stk$, or is $\stk$ if $|\stk|$ is less than $k$.

  The theorem is proved by establishing the following more general assertion:

For arbitrary $\stk, e, \sfrags$ where for each $\stk_0 \ll_k \stk$, $\stk_0 \in \sfrags$, if  $\stk \vdash \expr \steps (\gtfun\ \ev \gtarrow \expr)^{\stk'}$ then $\stk\lceil_k, \sfrags \vdasha \expr \steps (\gtfun\ \ev \gtarrow \expr)^{\stk\lceil k} \globs \sfrags'$ for some $\sfrags'$. 

The theorem follows immediately from this statement picking $\stk = []$ and $\sfrags = \emptyset$.  We assume the antecedents and proceed by induction on the height of the $\vdasha$ proof.  Proceed by cases on the rule at the root of the proof tree.  Consider the \ruleref{Application} rule.  The first two subgoals of the abstract execution follows directly by induction from the analogous subgoals for the concrete execution.  For the third subgoal, since we have assumption that each $\stk_0 \ll_k \stk$, $\stk_0 \in \sfrags$, in the analogous concrete subgoal  $((e\ e')\lC \stk) \vdash \expr \steps \reval$ we are only adding one frame to the stack, and in the corresponding analysis rule we also add the same frame to the fragments set $(\sfrags_2 \cup ((e\ e') \lCk \stk))$, maintaining the invariant that all sub-stacks of the concrete execution $\stk$ are in this new fragment set.  So, this allows us to apply the induction hypothesis and obtain the second subgoal and thus the goal.

For the variable lookup rules, since $\sfrags$ includes all possible fragments of the actual run by assumption, there will be some stitching of those fragments that realizes this actual stack, resulting by induction in the analogous value returned.
\end{proof}

\aisaa*

We are currently calling the above only a conjecture because it is delicate to show that stubbings are invariably infinite cycles that will never be provable in the $\vdasha$ rules.  Here is an outline of how a proof could go.  The intuition is that the visited set $\visited$ is fully caching all previous "questions" asked on the current proof path, and the \ruleref{Stub} rules will only activate if the \emph{exact} same question is asked on the same path, indicating that we must be constructing an infinite proof tree which is not a proof in an inductive proof system.  The stack fragments $\sfrags$ need to be placed in the visited set to guarantee the question is exactly the same.  Besides the stubbing non-proof case, it is not hard to see that any individual run of the $\vdasha$ rules is realized as one deterministic slice of the nondeterministic set of runs of the $\vdashaa$ rules.

  \section{Implementation Code Overview}
\label{sec_implement_appendix}

This appendix gives a high-level code overview of the implementations described in the paper.  Links to our software artifact \cite{SmithZhangArtifact24b} are inserted wherever possible to help with contextualization.  We also maintain an open-source codebase \gh{https://github.com/JHU-PL-Lab/dde} \cite{SmithZhangArtifact24c} that will be evolved beyond what is presented in this paper.

\subsection{Interpreter}

In this section, we go over the architecture of the concrete interpreter described in Section \ref{sec_extended_language}.

\subsubsection*{Frontend}

We use OCaml's parser generator Menhir \cite{Menhir} as the frontend of our language.  We omit the details as its usage is standard.  

\subsubsection*{AST}

The frontend parses an input program into an abstract syntax tree (AST) represented by OCaml's algebraic data types \gh{https://archive.softwareheritage.org/swh:1:cnt:1efd49d53593fda7d2fe8fbe3593f9e9e92cf538;origin=https://github.com/JHU-PL-Lab/dde;visit=swh:1:snp:310bdd3c60f2c48c7916c12e7e70aa5eb815335e;anchor=swh:1:rev:943353fd2fa6d2d2bedc80b85d788aac1930febe;path=/interpreter/src/ast.ml;lines=25-47}.  The only difference from Figure \ref{fig_grammar_extended} is that applications are each additionally annotated with a label automatically generated by the parser that uniquely identifies it.  Call stacks, implemented as OCaml lists, contain these labels rather than the actual applications to help speed up cache lookups.  The reason is that the OCaml \texttt{Map}s and \texttt{Set}s we use leverage structural equality for lookups, and repeatedly parsing complex program expressions takes a significant toll on performance.  To help convert labels to applications, we maintain an auxhiliary \texttt{myexpr} \gh{https://archive.softwareheritage.org/swh:1:cnt:1efd49d53593fda7d2fe8fbe3593f9e9e92cf538;origin=https://github.com/JHU-PL-Lab/dde;visit=swh:1:snp:310bdd3c60f2c48c7916c12e7e70aa5eb815335e;anchor=swh:1:rev:943353fd2fa6d2d2bedc80b85d788aac1930febe;path=/interpreter/src/ast.ml;lines=182} hash table that maps labels to their corresponding expression.

There is no constructor for let bindings in the data type; they are instead treated as macros and are substituted away \gh{https://archive.softwareheritage.org/swh:1:cnt:1efd49d53593fda7d2fe8fbe3593f9e9e92cf538;origin=https://github.com/JHU-PL-Lab/dde;visit=swh:1:snp:310bdd3c60f2c48c7916c12e7e70aa5eb815335e;anchor=swh:1:rev:943353fd2fa6d2d2bedc80b85d788aac1930febe;path=/interpreter/src/ast.ml;lines=325-363} \emph{before} evaluation.  Another, effectively equivalent, approach is transforming them into applications \gh{https://archive.softwareheritage.org/swh:1:cnt:1efd49d53593fda7d2fe8fbe3593f9e9e92cf538;origin=https://github.com/JHU-PL-Lab/dde;visit=swh:1:snp:310bdd3c60f2c48c7916c12e7e70aa5eb815335e;anchor=swh:1:rev:943353fd2fa6d2d2bedc80b85d788aac1930febe;path=/interpreter/src/ast.ml;lines=367-392}.

\subsubsection*{Interpreter}

The operational semantics rules presented in Figure \ref{fig_steps_extended} are implemented as \texttt{eval\_aux} \gh{https://archive.softwareheritage.org/swh:1:cnt:fce359b15c9d99cf4bc0bcab5fa5f781cb29eecd;origin=https://github.com/JHU-PL-Lab/dde;visit=swh:1:snp:310bdd3c60f2c48c7916c12e7e70aa5eb815335e;anchor=swh:1:rev:943353fd2fa6d2d2bedc80b85d788aac1930febe;path=/interpreter/src/lib.ml;lines=79}.

To help cache repetitive computations as mentioned in Section \ref{sec_interp_efficiency}, \texttt{eval\_aux} threads an immutable map through a state monad \gh{https://archive.softwareheritage.org/swh:1:cnt:fce359b15c9d99cf4bc0bcab5fa5f781cb29eecd;origin=https://github.com/JHU-PL-Lab/dde;visit=swh:1:snp:310bdd3c60f2c48c7916c12e7e70aa5eb815335e;anchor=swh:1:rev:943353fd2fa6d2d2bedc80b85d788aac1930febe;path=/interpreter/src/lib.ml;lines=47-67}.  Its keys are tuples consisting of the call stack and the call site label or variable expression, and its values are evaluation results.  Monad operations like \texttt{bind} are flattened by using monadic let bindings provided by \texttt{ppx\_let} \cite{Ppx_let}.

Since only the operands of binary operations and record operations are evaluated and there is no subsitution of variables, we implement \texttt{eval\_result\_value} \gh{https://archive.softwareheritage.org/swh:1:cnt:fce359b15c9d99cf4bc0bcab5fa5f781cb29eecd;origin=https://github.com/JHU-PL-Lab/dde;visit=swh:1:snp:310bdd3c60f2c48c7916c12e7e70aa5eb815335e;anchor=swh:1:rev:943353fd2fa6d2d2bedc80b85d788aac1930febe;path=/interpreter/src/lib.ml;lines=340} to optionally simplify such results.  For example, \camlil!(fun y $\gtarrow$ fun x $\gtarrow$ y) 1! evaluates to \camlil!fun x $\gtarrow$ y!, plus implicitly the call stack in its closure.  Then, simplification involves substituting \texttt{1} for \texttt{y}, giving \camlil!fun x $\gtarrow$ 1!.  Notably, substitution is kicked off by invoking \texttt{eval\_aux} \gh{https://archive.softwareheritage.org/swh:1:cnt:fce359b15c9d99cf4bc0bcab5fa5f781cb29eecd;origin=https://github.com/JHU-PL-Lab/dde;visit=swh:1:snp:310bdd3c60f2c48c7916c12e7e70aa5eb815335e;anchor=swh:1:rev:943353fd2fa6d2d2bedc80b85d788aac1930febe;path=/interpreter/src/lib.ml;lines=278} on a version of the free variable \texttt{y} that can be looked up to a value starting from \camlil!fun x $\gtarrow$ y! with the call stack in its closure.

\subsection{Program Analysis (Full)}

The two program analyses implemented in our artifact both depend on the language frontend implemented as a part of the interpreter.  The simplified analysis (covered in the next section) is a stripped-down version of the full version that does not involve recurrence derivation and CHC solving, as mentioned in Section \ref{sec_analysis_results}. We first go over key parts of the full system.

\subsubsection*{Analysis} 

The operational semantics rules presented in Figure \ref{fig_steps_analysis_extended} are implemented as \texttt{analyze\_aux} \gh{https://archive.softwareheritage.org/swh:1:cnt:c7dcdc6f22fc27fecdedec1af11014125d36c918;origin=https://github.com/JHU-PL-Lab/dde;visit=swh:1:snp:310bdd3c60f2c48c7916c12e7e70aa5eb815335e;anchor=swh:1:rev:943353fd2fa6d2d2bedc80b85d788aac1930febe;path=/program_analysis/full/lib.ml;lines=164}.  It returns a stacked state-reader monad \gh{https://archive.softwareheritage.org/swh:1:cnt:2f8dda23d8881c12e679f1d980d00e3bced4e5e8;origin=https://github.com/JHU-PL-Lab/dde;visit=swh:1:snp:310bdd3c60f2c48c7916c12e7e70aa5eb815335e;anchor=swh:1:rev:943353fd2fa6d2d2bedc80b85d788aac1930febe;path=/program_analysis/full/utils.ml;lines=273-333} threading both mutable state (\eg the cache and the $\sfrags$ set) and immutable context (\eg the $\visited$ set).

Unlike the interpreter's simple cache key comprising of just the expression and call stack, caching an analysis run requires a tuple $(\stk, \sfrags, \pathc, \visited,\expr)$ as the cache key \gh{https://archive.softwareheritage.org/swh:1:cnt:2f8dda23d8881c12e679f1d980d00e3bced4e5e8;origin=https://github.com/JHU-PL-Lab/dde;visit=swh:1:snp:310bdd3c60f2c48c7916c12e7e70aa5eb815335e;anchor=swh:1:rev:943353fd2fa6d2d2bedc80b85d788aac1930febe;path=/program_analysis/full/utils.ml;lines=233-266}, as explained in Section \ref{sec_impl_improv_perf}.  To avoid costly set comparisons on each cache lookup, the $\sfrags$ and $\visited$ sets are represented by unique integer IDs generated each time a new set is created.  These optimizations contribute an noticeable improvement to the performance of our analysis.  We have also been exploring potential ways to soundly increase the cache hit rate and, by consequence, performance.

Analyzing applications and variables can lead to results containing ``lone'' stubs, \ie stubs that do not have a matching parent to form cycles with.  These are instances of potential divergence in the source program and can be safely pruned away.  Thus, \texttt{elim\_stub} \gh{https://archive.softwareheritage.org/swh:1:cnt:b8f2241c0d86e6b67873e25273ca4cbe6d5b01cd;origin=https://github.com/JHU-PL-Lab/dde;visit=swh:1:snp:310bdd3c60f2c48c7916c12e7e70aa5eb815335e;anchor=swh:1:rev:943353fd2fa6d2d2bedc80b85d788aac1930febe;path=/program_analysis/full/simplifier.ml;lines=41-81} does exactly that to simplify results and aid recurrence solving, as we will see next.  It also unrolls stubs for the sake of simplifying record projections, etc., as covered in Section \ref{sec_impl_simpl_results}.

To help avoid bad paths early on, we precompute for each function a set of variables that can be looked up to a value from that point, as described in Section \ref{sec_overview_analysis_impl}.  Concretely, it is implemented as a list due to limitations with building \texttt{Map}s/\texttt{Set}s whose keys also contain \texttt{Map}s/\texttt{Set}s. Before the analysis starts, \texttt{scope\_vars} \gh{https://archive.softwareheritage.org/swh:1:cnt:1efd49d53593fda7d2fe8fbe3593f9e9e92cf538;origin=https://github.com/JHU-PL-Lab/dde;visit=swh:1:snp:310bdd3c60f2c48c7916c12e7e70aa5eb815335e;anchor=swh:1:rev:943353fd2fa6d2d2bedc80b85d788aac1930febe;path=/interpreter/src/ast.ml;lines=287-322} parses the program expression in a top-down fashion while collecting the variables that are in scope.  Upon visiting each function, it increments the DeBruijn index of each of these variables and then adds a $0$-indexed variable for the parameter of the current function.  This new variable list is then attached to this function and propagated forward as \texttt{scope\_vars} visits the body of the function, and so on.

When looking up a non-local variable, after stitching stacks, we check if it exists in the variable list carried by each candidate function (applied in the top stack frame).  If it does, we proceed to decrement its DeBruijn index and continue the analysis with the call stack in the closure of this candidate function; otherwise, we skip this candidate function.

\subsubsection*{Solver}
\label{appendix_b_solver}

When analyzing conditionals, the implementation follows the methodology described in Section \ref{sec_eval_recur_chcs} to prune away spurious branches.  When creating a CHC solver instance, we specify Z3's ``HORN'' mode \gh{https://archive.softwareheritage.org/swh:1:cnt:c7dcdc6f22fc27fecdedec1af11014125d36c918;origin=https://github.com/JHU-PL-Lab/dde;visit=swh:1:snp:310bdd3c60f2c48c7916c12e7e70aa5eb815335e;anchor=swh:1:rev:943353fd2fa6d2d2bedc80b85d788aac1930febe;path=/program_analysis/full/lib.ml;lines=16} to enable Spacer \cite{SpacerTutorial22}.

To utilize Spacer for greater path-sensitivity and verifying result correctness against a user-provided assertion, we implement an intricate yet disciplined translation from our analysis result data structures into CHCs \gh{https://archive.softwareheritage.org/swh:1:cnt:a66e48539fea93667d677cb55eea50a4faf869a2;origin=https://github.com/JHU-PL-Lab/dde;visit=swh:1:snp:310bdd3c60f2c48c7916c12e7e70aa5eb815335e;anchor=swh:1:rev:943353fd2fa6d2d2bedc80b85d788aac1930febe;path=/program_analysis/full/solver.ml;lines=174-441}, per the rules in Figure \ref{fig_tochc}.  The mutually recursive \texttt{chcs\_of\_assert}, \texttt{chcs\_of\_cond} (generates CHCs from path conditions), \texttt{chcs\_of\_atom}, and \texttt{chcs\_of\_res} thread a state monad \gh{https://archive.softwareheritage.org/swh:1:cnt:a66e48539fea93667d677cb55eea50a4faf869a2;origin=https://github.com/JHU-PL-Lab/dde;visit=swh:1:snp:310bdd3c60f2c48c7916c12e7e70aa5eb815335e;anchor=swh:1:rev:943353fd2fa6d2d2bedc80b85d788aac1930febe;path=/program_analysis/full/solver.ml;lines=72} that carries a number of data structures that are modified over time.  These include the set of CHCs generated, IDs for value atoms and value results (sets of atoms), and a map from IDs to Z3 predicates $X_{\id(\reval)}$.  An ``entry'' Z3 assertion to the CHCs \gh{https://archive.softwareheritage.org/swh:1:cnt:a66e48539fea93667d677cb55eea50a4faf869a2;origin=https://github.com/JHU-PL-Lab/dde;visit=swh:1:snp:310bdd3c60f2c48c7916c12e7e70aa5eb815335e;anchor=swh:1:rev:943353fd2fa6d2d2bedc80b85d788aac1930febe;path=/program_analysis/full/solver.ml;lines=434-436} is recorded for constructing the additional ``guiding'' CHC assertion on \camlil!if! conditions, to help narrow down on which branch(es) to take.  This entry assertion is always the predicate with ID \texttt{P0}, which corresponds to the outermost analysis result.  \texttt{solve\_cond} \gh{https://archive.softwareheritage.org/swh:1:cnt:c7dcdc6f22fc27fecdedec1af11014125d36c918;origin=https://github.com/JHU-PL-Lab/dde;visit=swh:1:snp:310bdd3c60f2c48c7916c12e7e70aa5eb815335e;anchor=swh:1:rev:943353fd2fa6d2d2bedc80b85d788aac1930febe;path=/program_analysis/full/lib.ml;lines=15-34} and \texttt{verify\_result} \gh{https://archive.softwareheritage.org/swh:1:cnt:a66e48539fea93667d677cb55eea50a4faf869a2;origin=https://github.com/JHU-PL-Lab/dde;visit=swh:1:snp:310bdd3c60f2c48c7916c12e7e70aa5eb815335e;anchor=swh:1:rev:943353fd2fa6d2d2bedc80b85d788aac1930febe;path=/program_analysis/full/solver.ml;lines=444-453} kick off the process of generating CHCs and then solving them.

Of particular note is how we generate a CHC with \texttt{chcs\_of\_assert} \gh{https://archive.softwareheritage.org/swh:1:cnt:a66e48539fea93667d677cb55eea50a4faf869a2;origin=https://github.com/JHU-PL-Lab/dde;visit=swh:1:snp:310bdd3c60f2c48c7916c12e7e70aa5eb815335e;anchor=swh:1:rev:943353fd2fa6d2d2bedc80b85d788aac1930febe;path=/program_analysis/full/solver.ml;lines=174-213}, which handles the case of user-provided assertions on analysis results in the form of $\gtletassert\ $ \camlil!x =$$ e$$ in assn!.  \texttt{e} is an expression while \texttt{assn} is written in an assertion syntax.  This additional CHC is key to guiding Spacer to derive a more precise model.  At the current stage, we restrict the syntax of an assertion that involves a variable \texttt{x} that represents the locally bound analysis result to be \texttt{x} (for asserting \texttt{x} to be $\gttrue$), \camlil!not x!, or \texttt{x <operator> <variable-free expression>}.  Since \texttt{x} and the RHS expression serve the purpose of helping generate a CHC instead of evaluating to a value, they cannot be expressed with normal expressions.  Thus, we define a type \texttt{Res\_fv.t} \gh{https://archive.softwareheritage.org/swh:1:cnt:1efd49d53593fda7d2fe8fbe3593f9e9e92cf538;origin=https://github.com/JHU-PL-Lab/dde;visit=swh:1:snp:310bdd3c60f2c48c7916c12e7e70aa5eb815335e;anchor=swh:1:rev:943353fd2fa6d2d2bedc80b85d788aac1930febe;path=/interpreter/src/ast.ml;lines=160-170} to allow constructing result values with ``free'' variables in them.  With this in place, we first confirm that an assertion has a valid form and convert it into a \texttt{Res\_fv.t} expression \gh{https://archive.softwareheritage.org/swh:1:cnt:c7dcdc6f22fc27fecdedec1af11014125d36c918;origin=https://github.com/JHU-PL-Lab/dde;visit=swh:1:snp:310bdd3c60f2c48c7916c12e7e70aa5eb815335e;anchor=swh:1:rev:943353fd2fa6d2d2bedc80b85d788aac1930febe;path=/program_analysis/full/lib.ml;lines=82-115}, and then translate it into a CHC.  \texttt{x} in the assertion allows us to look up the Z3 predicate generated for the top layer of the analysis result that we are asserting against.

To reduce clutter and improve readability, the CHC translation algorithm benefits from a collection of convenience shorthands \gh{https://archive.softwareheritage.org/swh:1:cnt:a66e48539fea93667d677cb55eea50a4faf869a2;origin=https://github.com/JHU-PL-Lab/dde;visit=swh:1:snp:310bdd3c60f2c48c7916c12e7e70aa5eb815335e;anchor=swh:1:rev:943353fd2fa6d2d2bedc80b85d788aac1930febe;path=/program_analysis/full/solver.ml;lines=12-36} that abbreviate common Z3 data constructors.  For example, the OCaml binary operator \texttt{a -{}-> b} calls the Z3 OCaml API to construct an implication \texttt{(assert (=> a b))} (in SMT-LIB \cite{BarFT-SMTLIB} syntax).

\subsubsection*{Simplifier}

The simplifier is another important part of our program analysis, as discussed in Section \ref{sec_impl_simpl_results}, where the \texttt{simpl\_atom} and \texttt{simpl\_res} mutually recursive functions \gh{https://archive.softwareheritage.org/swh:1:cnt:b8f2241c0d86e6b67873e25273ca4cbe6d5b01cd;origin=https://github.com/JHU-PL-Lab/dde;visit=swh:1:snp:310bdd3c60f2c48c7916c12e7e70aa5eb815335e;anchor=swh:1:rev:943353fd2fa6d2d2bedc80b85d788aac1930febe;path=/program_analysis/full/simplifier.ml;lines=102-316} implement the rewrite rules.  \texttt{simpl\_res} halts when the simplified expression produced from the latest round does not change from the input.  Due to the frequent pattern matching done on results in \texttt{simplifier}, we implement value results as OCaml lists instead of sets, since the latter is opaque and cannot be pattern matched.

\subsubsection*{Visualizations}

To improve usability and help demonstrate the expressiveness of our analysis, we implement a translation \gh{https://archive.softwareheritage.org/swh:1:cnt:1a9e7ea04ab1193a660f566a4011b1eae05f2920;origin=https://github.com/JHU-PL-Lab/dde;visit=swh:1:snp:310bdd3c60f2c48c7916c12e7e70aa5eb815335e;anchor=swh:1:rev:943353fd2fa6d2d2bedc80b85d788aac1930febe;path=/program_analysis/full/graph.ml;lines=346} from analysis results into DOT programs that Graphviz \cite{Graphviz} uses to generate visualizations.  We illustrate two example programs this way in Section \ref{sec_analysis_results}.

In graphs like Figure \ref{fig_id} and \ref{fig_map}, nodes are primitive values (\eg ints and bools), operators, value results (denoted by \texttt{|}), and stubs.  Edges mostly point from operators to their operands, from value results to value atoms, and from records to their fields.  The only exception is the case of back edges from stubs to their value result parent.

This translation process shares many similarities with translating results into CHCs.  Here, we also have a mutually recursive set of functions \texttt{dot\_of\_atom} and \texttt{dot\_of\_res} and mechanisms to generate unique IDs for value atoms and value results.  These are parts of the state threaded through the translation via a monad \gh{https://archive.softwareheritage.org/swh:1:cnt:1a9e7ea04ab1193a660f566a4011b1eae05f2920;origin=https://github.com/JHU-PL-Lab/dde;visit=swh:1:snp:310bdd3c60f2c48c7916c12e7e70aa5eb815335e;anchor=swh:1:rev:943353fd2fa6d2d2bedc80b85d788aac1930febe;path=/program_analysis/full/graph.ml;lines=192}.  Since new IDs are generated for results based on their \emph{structural} equality, to ensure uniqueness, we define data types (\gh{https://archive.softwareheritage.org/swh:1:cnt:1a9e7ea04ab1193a660f566a4011b1eae05f2920;origin=https://github.com/JHU-PL-Lab/dde;visit=swh:1:snp:310bdd3c60f2c48c7916c12e7e70aa5eb815335e;anchor=swh:1:rev:943353fd2fa6d2d2bedc80b85d788aac1930febe;path=/program_analysis/full/graph.ml;lines=9-32}, \gh{https://archive.softwareheritage.org/swh:1:cnt:1a9e7ea04ab1193a660f566a4011b1eae05f2920;origin=https://github.com/JHU-PL-Lab/dde;visit=swh:1:snp:310bdd3c60f2c48c7916c12e7e70aa5eb815335e;anchor=swh:1:rev:943353fd2fa6d2d2bedc80b85d788aac1930febe;path=/program_analysis/full/graph.ml;lines=159}) parallel to \texttt{Atom.t} \gh{https://archive.softwareheritage.org/swh:1:cnt:2f8dda23d8881c12e679f1d980d00e3bced4e5e8;origin=https://github.com/JHU-PL-Lab/dde;visit=swh:1:snp:310bdd3c60f2c48c7916c12e7e70aa5eb815335e;anchor=swh:1:rev:943353fd2fa6d2d2bedc80b85d788aac1930febe;path=/program_analysis/full/utils.ml;lines=100-123} and \texttt{Res.t} \gh{https://archive.softwareheritage.org/swh:1:cnt:2f8dda23d8881c12e679f1d980d00e3bced4e5e8;origin=https://github.com/JHU-PL-Lab/dde;visit=swh:1:snp:310bdd3c60f2c48c7916c12e7e70aa5eb815335e;anchor=swh:1:rev:943353fd2fa6d2d2bedc80b85d788aac1930febe;path=/program_analysis/full/utils.ml;lines=192} that assign \emph{every} constructor a unique label so that each AST node is structurally unique.  Then, we implement \texttt{mk} functions to derive these augmented results from the originals.  Other components of the monad state include nodes, edges, and visualization properties for specific edges.

Much of the complication with this part comes from eliding nodes for singleton value results (which is very common in practice) and optionally nodes for path conditions.  In this process, many edges have to be remapped and much bookkeeping and case analysis are required.

\subsection{Program Analysis (Simplified)}

The purpose of the simplified program analysis is to help evaluate our pure demand semantics on a playing field more level with standard presentations of abstract interpretation.  Compared with the full version, it does not track cycles (and thus does not infer recurrences) and does not leverage CHC solving to gain precision at conditionals and verify final results.  Standard systems do not have these features either.  Stubbing is still performed to enforce termination.

In this simplified system, all integers are abstracted into \texttt{IntAnyAtom} \gh{https://archive.softwareheritage.org/swh:1:cnt:b31a3a3123bf8f77e3019adee109ca162e589f52;origin=https://github.com/JHU-PL-Lab/dde;visit=swh:1:snp:310bdd3c60f2c48c7916c12e7e70aa5eb815335e;anchor=swh:1:rev:943353fd2fa6d2d2bedc80b85d788aac1930febe;path=/program_analysis/simple/lib.ml;lines=30}.  Conditionals maintain only minimal precision as both branches are taken unless the condition is a primitive boolean value.  A binary operation like \texttt{Plus} is immediately reduced to a set of result atoms instead of \texttt{PlusAtom}, which does not carry out the operation on its operands.  As a result, \texttt{simplify} now plays a less important role and primarily performs record projections and inspections.

Otherwise, the essence of the rules in Figure \ref{fig_steps_analysis_extended} is retained in \texttt{analyze\_aux} \gh{https://archive.softwareheritage.org/swh:1:cnt:b31a3a3123bf8f77e3019adee109ca162e589f52;origin=https://github.com/JHU-PL-Lab/dde;visit=swh:1:snp:310bdd3c60f2c48c7916c12e7e70aa5eb815335e;anchor=swh:1:rev:943353fd2fa6d2d2bedc80b85d788aac1930febe;path=/program_analysis/simple/lib.ml;lines=21}.

\subsection{Logging}

Debugging OCaml programs can be difficult and is limited to print debugging due to insufficient debugger support.  To mitigate this while implementing such intricate systems, we extensively annotate the source code with debug messages and log them to a file on each run with the help of the Logs package \cite{Logs}.

To facilitate locating sibling proof trees that may be separated by many lines of logs and have a better sense of the calls and returns in our big-step systems, we prefix key log messages with their recursion depth.  This helps tremendously with traversing log files.

\subsection{Testing}

Unit tests are available for each of the aforementioned systems, and they are executable from the command line with a variety of optional flags (\eg \texttt{-{}-no-cache} to turn off caching and \texttt{-{}-verify} to solve the final result with Z3).  For lengthier test cases, \eg the DDPA benchmarks, we read them from external files.  Detailed instructions on working with the artifact \cite{SmithZhangArtifact24a} can be found in the README \gh{https://archive.softwareheritage.org/swh:1:cnt:dffdfda8d6e662de03596d2429ff425de7fa7387;origin=https://github.com/JHU-PL-Lab/dde;visit=swh:1:snp:310bdd3c60f2c48c7916c12e7e70aa5eb815335e;anchor=swh:1:rev:943353fd2fa6d2d2bedc80b85d788aac1930febe;path=/README.md}.

\subsubsection*{Benchmarks}

The program analysis test suite comes with a set of benchmarks as described in Section \ref{sec_analysis_results}.  We use \texttt{Core\_bench} \cite{Core_bench} to automatically schedule many runs of each test and generate a table with quite a few useful metrics (most notably Time/Run).  The specifc number of times a test is repeated depends on the ratio between the alloted quota and its runtime.

\subsubsection*{Profiling}

It is also important to understand the performance bottlenecks of our systems, which is why the artifact also comes with profiling support.  We use \texttt{landmarks} \cite{landmarks} to generate a hierarchy of functions called, which are ranked by the total time taken to execute each.  The programmer may find that in a lot of cases, CHC solving (\texttt{Lib.solve\_cond}) takes up a significant portion, if not most, of the runtime.

\subsection{Langage Guide}

The syntax of our language used across the interpreter and the program analyses is based on expression grammar of Figure \ref{fig_grammar_extended} and mostly follows OCaml syntax conventions.  There are a few small differences  which we now outline.

\begin{itemize}
  \item Curried functions cannot be inlined via the shorthand \camlil!fun a b c $\gtarrow$ ...! and instead have to be laid out in full via \camlil!fun a $\gtarrow$ fun b $\gtarrow$ fun c $\gtarrow$ ...!.

  \item Recursive functions have to be defined in the self-passing style (or via Y combinators), \eg \camlil!(fun self $\gtarrow$ fun x $\gtarrow$ ...) (fun self $\gtarrow$ fun x $\gtarrow$ ...) 0!.

  \item Records can be defined immediately without an established type, and inspecting a record field can be done via the $\gtin$ keyword, \eg \camlil!l in {$$ l$$ =$$ 1 $$}!.

  \item Due to the $\gtin$ keyword being reused for both record inspections and let bindings, the programmer must insert parentheses around let definitions that end in a variable, in order to disambiguate from record inspections, \eg \camlil!let a $$= $$(fun x $\gtarrow$ x) in a!.
  
  \item The syntax includes the added $\gtletassert$ expression, $\gtletassert\ $ \camlil!x =$$ e$$ in assn!, as described in Section \ref{appendix_b_solver} and at the \texttt{eval\_assert} function \gh{https://archive.softwareheritage.org/swh:1:cnt:c7dcdc6f22fc27fecdedec1af11014125d36c918;origin=https://github.com/JHU-PL-Lab/dde;visit=swh:1:snp:310bdd3c60f2c48c7916c12e7e70aa5eb815335e;anchor=swh:1:rev:943353fd2fa6d2d2bedc80b85d788aac1930febe;path=/program_analysis/full/lib.ml;lines=75-81}.
\end{itemize}

}

\end{document}